\definecolor{forestgreen}{rgb}{0.13, 0.55, 0.13}
\definecolor{orange}{rgb}{1,0.49,0}
\newtheorem{defn}{Definition}
\newtheorem{thm}{Theorem}
\newtheorem{lem}{Lemma}
\newtheorem{prop}{Proposition}
\newtheorem{cor}{Corollary}
\newtheorem{rem}{Remark}
\newcommand{\R}{\mathbb{R}}
\newcommand{\N}{\mathbb{N}}
\newcommand{\C}{\mathcal{C}}
\newcommand{\X}{\mathcal{X}}
\newcommand{\D}{\mathcal{D}}
\newcommand{\Sig}{\mathcal{S}}
\newcommand{\OO}{\Omega}
\newcommand{\Ob}{\mathcal{O}}
\newcommand{\sat}{\vDash}
\newcommand{\PP}{\mathcal{P}}
\newcommand{\RR}{\mathcal{R}}
\newcommand{\CC}{\mathbb{C}}
\begin{document}
\title{Verifying Contracts for Perturbed Control \\ Systems using Linear Programming}
\author{Miel Sharf,~Bart Besselink,~Karl Henrik Johansson
\thanks{M. Sharf and K. H. Johansson are with the Division of Decision and Control Systems, School of Electrical Engineering and Computer Science, KTH Royal Institute of
Technology, 10044 Stockholm, Sweden. They are also affiliated with Digital Futures (e-mail: {\tt\small \{sharf,kallej\}@kth.se}).\\
B. Besselink is with the Bernoulli Institute for Mathematics, Computer Science and
Artificial Intelligence, University of Groningen, 9700 AK Groningen, The Netherlands (e-mail: {\tt\small b.besselink@rug.nl}).\\
This work was supported in part by DENSO Automative Deutschland GmbH, in part by the Knut and Alice Wallenberg Foundation, in part by the Swedish Strategic Research Foundation, in part by the Swedish Research Council, and in part by the Wallenberg AI, Autonomous Systems and Software Program (WASP) funded by the Knut and Alice Wallenberg
Foundation.}
}

\maketitle
\begin{abstract}
Verifying specifications for large-scale control systems is of utmost importance, but can be hard in practice as most formal verification methods can not handle high-dimensional dynamics. Contract theory has been proposed as a modular alternative to formal verification in which specifications are defined by assumptions on the inputs to a component and guarantees on its outputs.
In this paper, we present linear-programming-based tools for verifying contracts for control systems. We first consider the problem of verifying contracts defined by time-invariant inequalities for unperturbed systems. We use $k$-induction to show that contract verification can be achieved by considering a collection of implications between inequalities, which are then recast as linear programs. We then move our attention to perturbed systems. We present a comparison-based framework, verifying that a perturbed system satisfies a contract by checking that the corresponding unperturbed system satisfies a robustified (and $\epsilon$-approximated) contract. In both cases, we present explicit algorithms for contract verification, proving their correctness and analyzing their complexity. We also demonstrate the verification process for two case studies, one considering a two-vehicle autonomous driving scenario, and one considering formation control of a multi-agent system.
\end{abstract}

\section{Introduction}\label{sec.Intro}
In recent years, modern engineering systems have become larger and more complex than ever, as large-scale systems and networked control systems have become much more common, and the ``system-of-systems" philosophy has become the dominant design methodology. Coincidentally, specifications regarding these systems have grown more intricate themselves, and asserting they are met is of utmost importance. Recently, several attempts have been made to adapt contract theory, which is a modular approach for software verification, to control dynamical systems. In this paper, we present a framework for assume/guarantee contracts for discrete-time dynamical control systems, and present computational tools for verifying these contracts for perturbed and unperturbed linear time-invariant (LTI) systems using linear programming (LP).

\subsection{Background and Related Work}
The problem of verification tasks one to find a proof that a certain model satisfies given specifications.
This problem is referred to as model checking in the fields of computer science and software engineering, where it has been studied extensively over the last few decades \cite{Wallace1989,Baier2008}.  There, the software package under test is usually converted to or abstracted by a finite transition system, on which specifications are usually put in the form of linear temporal logic formulae. It is well-known that any linear temporal logic formula can be transformed to an equivalent automaton \cite{Vardi1996}, meaning that standard procedures from automata theory can be used to verify that the given finite transition system satisfies the specifications. Namely, one checks whether the set of accepted languages by the negation of the formula and the trace of the finite transition system have a non-empty intersection \cite{Baier2008}. In practice, this check is done by finding a path with certain desired properties in the graph describing the product automaton, implying it is tractable even for systems with thousands or millions of states.

Over the years, various attempts were made to apply the framework of model checking for finite transition systems to verify specifications for control systems with continuous (and infinite) state space. The main tools used in all of them are \emph{abstraction} and \emph{simulation}, which are notions connecting control systems with continuous state-spaces and finite transition systems \cite{Tabuada2009,Belta2017}. Namely, verification for a continuous control system is achieved by (1) abstracting it by a finite transition system, and (2) applying the model checking framework to this finite abstraction. The correctness of the verification process stems from the fact that the finite transition system approximately (bi-)simulates the continuous control system \cite{Girard2007,Girard2009,Wongpiromsarn2010}. Unfortunately, the abstraction of continuous control systems relies on discretization of the state-space. Thus, these methods cannot handle systems with high-dimensional dynamics, due to the curse of dimensionality, as this collection of methods treats the system-under-test as a single monolithic entity. In particular, even minute changes to the system (e.g., replacing one the actuators with a comparable alternative) would require executing a completely new verification process.

As noted in the literature, scalable development of large-scale systems with intricate specifications requires a modular approach, i.e., a design methodology allowing different components or subsystems to be developed independently of one another \cite{Baldwin2006,Huang1998}. This philosophy can be achieved by verifying or designing each component on its own, while treating all other components as part of the (unknown) environment. In software engineering, design and verification are often modular by design; requirements for the software package are almost always defined in terms of modules, or even individual functions and methods. Moreover, each function or module can be verified on its own, independently of the other parts of the software \cite{Grumberg1994}. 
Perhaps the best example of the modular design philosophy in software engineering is contract theory \cite{Meyer1992,Benveniste2018}. Contract theory is a modular approach for software engineering, which explicitly defines assumptions on the input and guarantees on the output of each software component. It can be used to design and verify software components, and even automatically fix bugs in the code \cite{Pei2014}. 

On the contrary, this situation is significantly different for control systems. Control design is often non-modular, as it requires the designer to know an exact (or approximate) model for each component in the system. For example, even the most scalable distributed and decentralized control methods, such as \cite{Siljak2005,Rantzer2015}, require a single authority with complete knowledge of the system model in order to design the decentralized or distributed controllers, i.e., they do not follow this modular design philosophy. Recently, several attempts have been made to derive modular design procedures for control systems. Some methods try to "modularize" the previous procedure, which treated the system as a single monolithic entity, by considering composition-compatible notions of abstraction and simulation \cite{Meyer2017,Hussien2017, Saoud2018b, Zamani2018}. Another approach, which is geared toward safety specifications, is to search for a composition-compatible method to calculate invariant sets \cite{Smith2016, Nilsson2016,Chen2018}.

In recent years, several attempts have been made to adapt contract theory to a modular design and verification framework for control systems. It has been successfully applied to the design of the ``cyber" aspects of cyber-physical systems, see \cite{Nuzzo2014,Nuzzo2015} and references therein. More recently, several frameworks have been proposed for contract theory for dynamical control systems, see e.g., \cite{Besselink2019,Shali2021,Saoud2018,Saoud2019,Eqtami2019,Ghasemi2020,Saoud2021}. 
In \cite{Besselink2019,Shali2021}, the authors propose methods for prescribing contracts on continuous-time systems, and verify these contracts either using geometric control theory methods, or using behavioural systems theory, respectively. Discrete-time systems are considered in \cite{Saoud2018,Saoud2019,Eqtami2019, Ghasemi2020}, where assumptions are put on the input signal to the system, and guarantees are put on the state and the output of the system. However, prescribing guarantees on the state of the system goes against the spirit of contract theory, as the state of the system is an internal variable. Thus, we aim at presenting a contract-based framework for discrete-time dynamical control systems which does not refer to the state of the system, and present efficient computational tools for their verification.

\subsection{Contributions}
In this paper, we propose a novel framework for assume/guarantee contracts on discrete-time dynamical control systems. These contracts prescribe assumptions on the input to a system and guarantees on its output, relative to its input. We prescribe LP-based computational tools for verification of contracts defined by time-invariant linear inequalities, both for unperturbed and perturbed LTI systems. These computational tools are explicitly stated by Algorithm \ref{alg.VerifyCertainIota} (for unpertubed LTI systems) and Algorithm \ref{alg.VerifyUncertain} (for perturbed LTI systems). First, we present LP-based computational tools applicable to unperturbed LTI systems for a class of contracts defined by time-invariant linear inequalities. Second, and more importantly, we extend the verification framework also for perturbed LTI systems. 
To the knowledge of the authors, no works presenting a contract theory framework for perturbed systems currently exist. We also note that standard formal theory methods usually require special treatment when applied to perturbed or uncertain systems \cite{Sadigh2016,Shen2019,ApazaPerez2021}.

We first tackle the verification problem for unperturbed LTI systems. We use strong induction to show that the system satisfies the contract if and only if an infinite number of implications between inequalities hold (Theorem \ref{thm.Inductive}). These implications are then recast as linear programs, and we use $k$-induction \cite{Donaldson2011} to achieve verification by solving finitely-many linear programs, culminating in Algorithm \ref{alg.VerifyCertainIota}, for which we prove correctness and analyze its complexity (Theorems \ref{thm.CertainAlgCorrectness} and \ref{thm.ThetaInduciveLTI}). We then consider the problem of verifying that a perturbed system $\Sigma$ satisfies a contract $\C$ defined by time-invariant linear inequalities. We first show that $\Sigma$ satisfies the contract $\C$ if and only if the nominal counterpart $\hat{\Sigma}$ of $\Sigma$ satisfies a robustified contract $\C^\prime$ (Theorem \ref{thm.UncertainEquiv}). Ideally, we could then achieve a comparison-based procedure for verification, verifying that $\Sigma$ satisfies $\C$ by showing that the unperturbed system $\hat{\Sigma}$ satisfies $\C^\prime$, as we already have tools for the latter task.
Unfortunately, the contract $\C^\prime$ is defined by \emph{time-varying} linear inequalities, as the robustification of the guarantees at time $k$ corresponds to the worst-case behaviour of the perturbation up to time $k$. To alleviate this problem, we consider the most lenient time-invariant contract $\hat{\C}$ refining $\C^\prime$. Unfortunately, as $\hat{\C}$ depends on the perturbation for the entire time horizon, infinitely many robustification terms are necessary, rendering this approach intractable. To address this, we \emph{approximate} $\hat{\C}$ by a tractable under-approximation $\hat{\C}_\epsilon$ of arbitrary precision $\epsilon > 0$. As a result, we can verify that $\Sigma$ satisfies $\C$ by verifying that the unperturbed LTI system $\hat{\Sigma}$ satisfies the contract $\hat{\C}_\epsilon$, which is defined by time-invariant linear inequalities (Proposition \ref{prop.Taus}). Thus, verification can be achieved using the LP-based tools presented earlier in the paper, resulting in Algorithm \ref{alg.VerifyUncertain}. The computational complexity of the algorithm scales as $\log(1/\epsilon)$, meaning that even extremely small values of $\epsilon$ are tractable. We also study the assumptions and $\epsilon$-optimality of the suggested verification algorithm, see Section \ref{subsec.AnalysisUncertain}. These tools present a significant extension of our preliminary results, presented in the conference paper \cite{SharfADHS2020}, which only considered a significantly restricted class of contracts, and only unperturbed LTI systems.

The rest of the paper is structured as follows. Section \ref{sec.Background} presents the basics of the assume/guarantee framework, and also gives some background on polyhedral sets. Section \ref{sec.Cert} presents more general LP-based tools for verification for unperturbed LTI systems. Section \ref{sec.Uncertain} presents LP-based tools for verification for perturbed LTI systems. Finally, Section \ref{sec.CaseStudy} exemplifies the achieved tools for verification through case studies.

\paragraph*{Notation}
We denote the collection of natural numbers by $\N = \{0,1,2,\ldots\}$. For two sets $X,Y$, we denote their Cartesian product by $X\times Y$.
For a positive integer $n$, we denote the collection of all signals $\N \to \R^n$ by $\Sig^n$. For vectors $v,u \in \mathbb{R}^n$, we understand $v \le u$ as an entry-wise inequality. Moreover, we denote the Euclidean norm of a vector $v\in \R^n$ as $\|v\|$, and the operator norm of a matrix $P$ as $\|P\| = \sup_{v \neq 0} \frac{\|Pv\|}{\|v\|}$. The all-one vector is denoted by $\mathds{1}$, and the Minkowski sum of two sets $X,Y \subseteq \R^d$ is defined as $X+Y = \{x+y: x\in X,y\in Y\}$

Given a state-space system $(A,B,C,D)$, the observability matrix of depth $m$ is denoted by $\mathcal{O}_m = [C^\top,(CA)^\top,\ldots,(CA^m)^\top]^\top$. Moreover, the observability index $\nu$ is the minimal integer such that ${\rm rank}~\mathcal{O}_\nu = {\rm rank}~\mathcal{O}_{\nu+1}$. Moreover, given a state $x$ for the system, we let $p_\Ob(x)$ be the projection of $x$ on the observable subspace of the system. 

\section{Background} \label{sec.Background}
In this section, we present some basic notions about assume/guarantee contracts, as well as some basic facts about polyhedral sets.

\subsection{Assume/Guarantee Contracts}
We present several basic notions in the theory of abstract assume/guarantee contracts for dynamical closed-loop control systems. These have been previously presented in the preliminary work \cite{SharfADHS2020}, and are derived from \cite{Nuzzo2015,Benveniste2018}. Computational tools for these contracts will be given in the upcoming sections.

\begin{defn} \label{def.Systems}
A system $\Sigma$ {has an input $d\in \Sig^{n_d}$, output $y \in \Sig^{n_y}$, and state $x\in \Sig^{n_x}$. It is defined by a set $\X_0 \subseteq \R^{n_x}$ of initial conditions, matrices $A,B,C,D,E,F$ of appropriate dimensions, and two bounded sets $\PP \subseteq \R^{n_p}$, $\RR \subseteq \R^{n_r}$.} The evolution and observation are given by the following equations, which hold for any $k\in \N$:
\begin{align} \label{eq.GoverningEquations}
    x(0)&\in \X_0,\\\nonumber
    x(k+1) &= Ax(k) + Bd(k) + E\omega(k),~ \omega(k)\in \PP\\ \nonumber
    y(k) &= Cx(k) + Dd(k) + F\zeta(k),~\zeta(k)\in \RR
\end{align}
For signals $d\in \Sig^{n_d}$ and $y\in \Sig^{n_y}$, we write $y\in \Sigma(d)$ if there exists a signal $x\in \Sig^{n_x}$ such that $d(\cdot),x(\cdot),y(\cdot)$ satisfy \eqref{eq.GoverningEquations}.
\end{defn}
We include the set of allowable initial states $\X_0$ in the definition of a system, as otherwise we cannot discuss several important specifications. For example, asking whether the output of the system always lies inside a safe set is meaningless if we make no assumptions on the initial state, e.g., it is meaningless if the initial state lies outside the safe set.

\begin{rem} \label{rem.InitDepend}
Definition \ref{def.Systems} can be extended by allowing $\X_0$ to be dependent of $d(0)$. This is reasonable if the system tries to avoid an obstacle whose position is defined by $d(\cdot)$, assuming that the system does not start on top of the obstacle. This is also reasonable for systems trying to track $d(\cdot)$, assuming their initial tracking error is not too large. The methods presented in this paper work under this more general assumption. However, we consider the restricted definition to enhance readability. 
\end{rem}

\begin{rem}
Definition \ref{def.Systems} can also include non-linear systems, as the sets $\PP,\RR$ can also included unmodeled non-linear terms.
\end{rem}

We consider specifications on dynamical control systems in the form of assume/guarantee contracts, which prescribe assumptions on the input signal $d(\cdot) \in \Sig^{n_d}$ and issue guarantees on the output signal $y(\cdot) \in \Sig^{n_y}$, relative to the input signal:
\begin{defn} \label{defn.AG}
An assume/guarantee contract is a pair $(\D,\OO)$ where $\D \subseteq \Sig^{n_d}$ are the assumptions and $\OO \subseteq \Sig^{n_d} \times \Sig^{n_y}$ are the guarantees.
\end{defn}
In other words, we put assumptions on the input $d(\cdot)$ and demand guarantees on the input-output pair $(d(\cdot),y(\cdot))$. 

Assume/guarantee contracts prescribe specifications on systems through the notion of satisfaction:
\begin{defn}
We say that a system $\Sigma$ satisfies $\C = (\D,\OO)$ (or implements $\C$), and write $\Sigma \sat \C$, if for any $d\in \D$ and any $y\in \Sigma(d)$, we have $(d,y)\in \OO$.
\end{defn}

Another notion that will be of use to us is the notion of refinement. It considers two contracts defined on the same system, and compares them to one another:
\begin{defn} \label{def.refine}
Let $\C_i = (\D_i,\OO_i)$ be contracts for $i=1,2$, with the same input $d(\cdot) \in \Sig^{n_d}$ and the same output $y(\cdot)\in \Sig^{n_y}$. We say $\C_1$ \emph{refines} $\C_2$ (and write $\C_1 \preccurlyeq \C_2$) if $\D_1 \supseteq \D_2$ and $\Omega_1 \cap (\D_2 \times \Sig^{n_y}) \subseteq \Omega_2 \cap(\D_2 \times \Sig^{n_y})$.
\end{defn}
Colloquially, $\C_1 \preccurlyeq \C_2$ if $\C_1$ assumes less than $\C_2$, but guarantees more given the assumptions. 

The framework of assume/guarantee contracts supports modularity in design using the notions of refinement and composition. These allow one to dissect contracts on composite systems to contracts on subsystems or on the individual components. The reader is referred to the references \cite{SharfADHS2020,Nuzzo2015} for more information about these notions. Moreover, the references \cite{SharfADHS2020} present preliminary results for computational tools verifying them. Due to space limitations, we focus on prescribing tools for verifying that a given system $\Sigma$ satisfies a given contract $\C$, without assuming that the system can be separated into smaller subsystems.  

\subsection{Polyhedral Sets} \label{subsec.Polyhedral}
In this paper, we focus on specifications defined by linear inequalities, i.e., specifications defined using polyhedral sets:
\begin{defn}
A set $S \subseteq \R^{d}$ is called polyhedral if it is defined by the intersection of finitely many half-spaces. Equivalently, there exist a matrix $A$ and a vector $b$ such that the set $S$ is defined by $S = \{z\in \R^d: Az \le b\}$.
\end{defn}
Polyhedral sets are known to be convex. Moreover, optimizing linear cost functions {over} them corresponds to solving a linear program, which can be done quickly using off-the-shelf solvers, e.g., Yalmip \cite{Lofberg2004}. Any polyhedral set has an equivalent representation, known as the vertex representation:
\begin{lem}[\hspace{0.1pt}\cite{Schrijver1998}]
The set $S \subseteq \R^d$ is polyhedral if and only if there exist matrices $F,G$ such that $S = \{F\lambda + G\theta : \mathds{1}^\top \lambda = 1,~\lambda,\theta \ge 0\}$.
\end{lem}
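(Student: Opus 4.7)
The plan is to reduce both directions to the analogous statement for polyhedral cones via a homogenization argument, and to establish the cone statement separately. Concretely, I would first prove the \emph{Minkowski--Weyl theorem for cones}: a cone $C \subseteq \R^d$ admits a representation $C = \{z : Az \le 0\}$ for some matrix $A$ if and only if it admits a representation $C = \{G\theta : \theta \ge 0\}$ for some matrix $G$. With the cone version in hand, the polyhedron statement in the lemma follows by identifying $S$ with the slice $\{t=1\}$ of an auxiliary cone, an argument I detail below.

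For the ``finitely generated implies polyhedral'' direction of the cone statement, I observe that $\{G\theta : \theta \ge 0\}$ is the image under the projection $(z,\theta) \mapsto z$ of the polyhedral set $\{(z,\theta) : z - G\theta = 0,\ \theta \ge 0\}$. That projections of polyhedral sets are polyhedral is the classical Fourier--Motzkin elimination result, proved by eliminating one variable at a time while checking that only finitely many induced inequalities are produced. For the converse, I would argue by induction on the number of defining half-spaces using a double-description-type construction: if a generating set for $C' = \{z : a_1^\top z \le 0, \ldots, a_{m-1}^\top z \le 0\}$ is known, then a generating set for $C = C' \cap \{z : a_m^\top z \le 0\}$ is obtained from the generators of $C'$ lying in $C$, together with positive combinations of pairs of generators straddling the new half-space.

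Given the cone version, the general polyhedron statement follows by homogenization. Starting from a polyhedron $S = \{z : Az \le b\}$, the set $\tilde{C} = \{(z,t) \in \R^{d+1} : Az - bt \le 0,\ -t \le 0\}$ is a polyhedral cone, hence finitely generated by the cone theorem. Grouping the generators according to whether their last coordinate is zero and rescaling the nonzero ones to have last coordinate equal to $1$ yields matrices $F$ and $G$; intersecting $\tilde{C}$ with the slice $\{t=1\}$ then recovers the V-representation $S = \{F\lambda + G\theta : \mathds{1}^\top \lambda = 1,\ \lambda,\theta \ge 0\}$. The converse direction is handled analogously: the homogenization of a V-represented set is a finitely generated cone, hence polyhedral by the cone theorem, and slicing again exhibits $S$ as the solution set of finitely many linear inequalities.

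The main obstacle, in my view, is the inductive step in the cone version showing that a polyhedral cone is finitely generated, since one must carefully verify that the constructed candidate generating set is both finite and exhausts the whole cone; the remaining pieces, namely Fourier--Motzkin elimination and the bookkeeping of the homogenization slice, are largely mechanical once the cone result is available.
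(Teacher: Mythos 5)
The paper does not prove this lemma at all; it is quoted as a known result with a citation to Schrijver, so there is no in-paper argument to compare against. Your sketch is the classical Minkowski--Weyl proof (cone version via Fourier--Motzkin and double description, then homogenization and slicing at $t=1$), which is essentially the argument in the cited reference and is correct in outline, modulo the usual degenerate case of $S=\emptyset$, where the constraint $\mathds{1}^\top\lambda=1$ forces $F$ to have at least one column.
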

Both representations of the polyhedral set can be useful for different reasons. The subspace representation $\{Az \le b\}$ is usually easier to define, and can be used to easily calculate the pre-image of a polyhedral set under a linear transformation. The vertex representation is useful for computing the Minkowski sum of two polyhedral sets, and for computing the image of a polyhedral set under a linear transformation.

In this paper, we will often encounter a situation in which we would like to verify that one polyhedral set is a subset of another polyhedral set. This inclusion can be easily verified:
\begin{lem} \label{lem.Inclusion}
Let $S_1,S_2$ be polyhedral sets.
\begin{itemize}
    \item If the sets are given in subspace representation, $S_i = \{z \in \R^d: A_i z \le b_i\}$, then $S_1 \subseteq S_2$ if and only if $\varrho_j \le 0$ for any $j$, where $\varrho_j$ is given as the value of the following linear program, and ${\rm e}_j$ is the $j$-th standard basis vector:
    \begin{align*}
        \varrho_j = \max\{{\rm e}^\top_j (A_2 z - b_2) : A_1 z \le b_1\}.
    \end{align*}
    \item If the sets are given in vertex representation, $S_i = \{F_i \lambda + G_i \theta : \mathds{1}^\top \lambda = 1,~\lambda,\theta \ge 0\}$, then $S_1 \subseteq S_2$ if and only if there exist matrices $\Lambda,\Theta_F,\Theta_G$ {with positive entries} such that the following relations holds:
    \begin{align} \label{eq.InclusionVertex}
        {G_1 = G_2\Theta_G,~F_1 = F_2\Lambda + G_2\Theta_F,~\Lambda^\top \mathds{1} = \mathds{1}.}
    \end{align}
\end{itemize}
\end{lem}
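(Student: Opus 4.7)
The two items of the lemma are independent, and I would treat them in turn.

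For the subspace representation, my plan is to rewrite the inclusion row-by-row. The condition $S_1 \subseteq S_2$ says that every $z$ with $A_1 z \le b_1$ satisfies $A_2 z \le b_2$; splitting $A_2 z \le b_2$ into its scalar rows shows this is equivalent to requiring
\begin{align*}
\max\{\mathrm{e}_j^\top(A_2 z - b_2) : A_1 z \le b_1\} \le 0
\end{align*}
for every row index $j$, which is exactly the LP defining $\varrho_j$. Hence $S_1 \subseteq S_2$ iff $\varrho_j \le 0$ for every $j$, with infeasibility (i.e.\ $S_1 = \emptyset$) handled by the standard convention $\varrho_j = -\infty$. This step is a direct reformulation with no real obstacle.

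For the vertex representation, the reverse direction is a plug-and-check. Taking any point of $S_1$ in its $(\lambda,\theta)$ form and using \eqref{eq.InclusionVertex},
\begin{align*}
F_1 \lambda + G_1 \theta = F_2(\Lambda \lambda) + G_2(\Theta_F \lambda + \Theta_G \theta).
\end{align*}
The new coefficient of $F_2$ satisfies $\mathds{1}^\top(\Lambda \lambda) = (\Lambda^\top \mathds{1})^\top \lambda = \mathds{1}^\top \lambda = 1$, and $\Theta_F \lambda + \Theta_G \theta \ge 0$ because all factors are non-negative, placing the point in $S_2$.

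The main obstacle is the forward direction for the vertex representation, which I would split into two parts. For the columns of $F_1$: each column is a point of $S_1$ (take $\lambda$ a standard basis vector and $\theta = 0$), hence of $S_2$, and writing each in the vertex form of $S_2$ and stacking the resulting coefficients columnwise yields $\Lambda$ and $\Theta_F$ satisfying $F_1 = F_2 \Lambda + G_2 \Theta_F$ with $\Lambda^\top \mathds{1} = \mathds{1}$ and non-negative entries. For the columns of $G_1$ the argument is more delicate, since these are generically not in $S_1$ but encode its recession directions. Fixing a column $g_j$ of $G_1$ and any column $f$ of $F_1$, the ray $\{f + \alpha g_j : \alpha \ge 0\}$ lies in $S_1 \subseteq S_2$, so I can write $f + \alpha g_j = F_2 \lambda_\alpha + G_2 \theta_\alpha$ with the appropriate sign and normalization constraints. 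Dividing by $\alpha$ and sending $\alpha \to \infty$, the term $F_2 \lambda_\alpha / \alpha$ vanishes because $\lambda_\alpha$ stays bounded (its entries sum to one), while Carath\'eodory's theorem for conical combinations lets me take $\theta_\alpha$ supported on a linearly independent subset of the columns of $G_2$, giving a convergent subsequence whose limit $\theta^\star \ge 0$ satisfies $g_j = G_2 \theta^\star$. Collecting one such $\theta^\star$ per column of $G_1$ as the columns of $\Theta_G$ produces $G_1 = G_2 \Theta_G$ with $\Theta_G \ge 0$, completing the proof.
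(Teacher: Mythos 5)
Your proposal is correct and follows essentially the same route as the paper's proof: the row-by-row LP reformulation for the subspace case, and for the vertex case the plug-and-check reverse direction plus the ray-and-limit argument $f+\alpha g_j \in S_2$ with $\alpha \to \infty$ for the columns of $G_1$. The only difference is cosmetic: where the paper concludes by citing that the finitely generated cone $\{G_2\theta : \theta \ge 0\}$ is closed, you re-derive that closedness explicitly via Carath\'eodory's theorem for conical combinations, which is the standard proof of the same fact.
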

\begin{proof}
The first claim follows immediately, as $\varrho_j \le 0$ if and only if $\{z:A_1z \le b_1\} \subseteq\{z:{\rm e}_j^\top A_2 z \le {\rm e}_j^\top b_2\}$. As for the second claim, \eqref{eq.InclusionVertex} holds if and only if:
\begin{itemize}
    \item The columns of $F_1$ belong to $S_2$.
    \item The columns of $G_1$ belong to $\{G_2\theta : \theta \ge 0\}$.
\end{itemize}
It is clear that if these conditions hold, then $S_1 \subseteq S_2$. As for the other direction, the first condition obviously holds. For the second condition, if we take some column $g$ of $G_1$, then $F_1{\rm e}_1 + tg \in S_1$ for any $t>0$. Thus, there exist some $\lambda_t,\theta_t\ge 0$ such that $\lambda_t^\top \mathds{1} = 1$ and $F_1{\rm e}_1 + tg = F_2 \lambda_t + G_2 \theta_t$. Thus:
\begin{align*}
    \frac{1}{t} G_2 \theta_t = \frac{1}{t} F_1{\rm e_1} - \frac{1}{t} F_2 \lambda_t + g
\end{align*}
As $t \to \infty$, the right hand side tends to $g$ (as the elements of $\lambda_t$ are bounded between $0$ and $1$). This means that $g$ lies in the closure of the closed set $\{G_2\theta : \theta \ge 0\}$. As $g$ was an arbitrary column of $G_2$, the proof is complete.
\end{proof}

\subsection{Linear Time-Invariant Contracts}
We aim to present efficient LP-based methods for verifying that $\Sigma \sat \C$, and we do so for contracts defined by linear inequalities:
\begin{defn}\label{def.LinCon}
A \emph{linear time-invariant} (LTI) contract $\C = (\D,\OO)$ of depth $m\in \N$ with input $d(\cdot) \in \Sig^{n_d}$ and output $y(\cdot) \in \Sig^{n_y}$ is given by matrices $\mathfrak A^r\in \R^{n_a \times n_d},\mathfrak G^r \in \R^{n_g\times (n_d+n_y)}$ for $r=0,\ldots,m$ and vectors $\mathfrak a^0 \in \R^{n_a},\mathfrak g^0 \in \R^{n_g}$ such that:
\begin{align} \label{eq.LinCon}
    \D &= \left\{d(\cdot): \sum_{r=0}^m\mathfrak A^r d(k-m+r) \le \mathfrak a^0,~\forall k\ge m\right\},\\ \nonumber
    \OO &= \left\{(d(\cdot),y(\cdot)): \sum_{r=0}^m\mathfrak G^r \begin{bmatrix} d(k-m+r) \\ y(k-m+r) \end{bmatrix} \le \mathfrak g^0,~ \forall k\ge m\right\}
\end{align}
\end{defn}
\begin{rem}
Definition \ref{def.LinCon} generalizes the contracts considered in \cite{SharfADHS2020}, which considered LTI contracts of depth $m=1$. It is no restriction to assume $m\ge 1$, as any contract of depth $m=0$ is also a contract of depth $m=1$ with $\mathfrak A^1, \mathfrak G^1 = 0$.
\end{rem}

Linear time-invariant contracts are defined using polyhedral sets for the stacked input vector $[d(k)^\top,\ldots,d(k-m)^\top]^\top$ and the similarly defined stacked output vector. We assume that the inequalities defining the assumptions are self-consistent, in the sense that if a signal satisfies them for some interval of length $m$, it can be extended for all future time.
\begin{defn}
Given matrices $\{\mathfrak V^r\}_{r=0}^m$ and a vector $\mathfrak v^0$, we say $(\{\mathfrak V^r\}_{r=0}^m,\mathfrak v^0)$ is \emph{extendable} if for any vectors $u_0,u_1,\ldots,u_m$ such that $\sum_{r=0}^m \mathfrak V^ru_r\le \mathfrak v^0$, there exists some vector $u_{m+1}$ such that $\sum_{r=0}^m \mathfrak V^r u_{r+1} \le \mathfrak v^0$.
\end{defn}
\begin{prop}
Let $\{\mathfrak V^r\}_{r=0}^m$ be matrices and $\mathfrak v^0$ be a vector. Write $\mathfrak{V}_- = \left[\mathfrak V^0,\ldots,\mathfrak V^m \right]$ and consider the polyhedral set $S_- = \{z: \mathfrak V_- z \le \mathfrak v^0\}$. We define the shift operators as:
\begin{align*}
    T = \left[\begin{smallmatrix} 0 & I & \cdots & 0 & 0 \\ 0 & 0 & \ddots & 0 & 0 \\ \vdots & \vdots & \ddots & \ddots & \vdots \\ 0 & 0 & \cdots & 0 & I \\ 0 & 0 & \cdots & 0 & 0 \end{smallmatrix}\right], ~ K = \left[\begin{smallmatrix} 0 \\ 0 \\ \vdots \\ 0 \\ I \end{smallmatrix}\right],
\end{align*}
where $I$ is the identity matrix. The tuple $(\{\mathfrak V^r\}_{r=0}^m,\mathfrak v^0)$ is extendable if and only if the polyhedral set $TS_-=\{Tz: z\in S_-\}$ is contained in the polyhedral set $S_- + {\rm Im} K $.
\end{prop}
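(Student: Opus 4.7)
The plan is to unpack both sides of the claimed equivalence block-wise and observe that they become verbatim restatements of each other; no convex-analytic or LP machinery is really needed, only careful bookkeeping with the block-shift $T$ and the ``last-block injection'' $K$.

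First, I will parse membership in $S_-$ block-wise. A vector $z$ with decomposition $z = [u_0^\top, u_1^\top, \ldots, u_m^\top]^\top$ belongs to $S_-$ precisely when $\sum_{r=0}^m \mathfrak V^r u_r \le \mathfrak v^0$, because $\mathfrak V_- = [\mathfrak V^0,\ldots,\mathfrak V^m]$. I will then directly compute the shifted vector using the block structure of $T$, obtaining $Tz = [u_1^\top, u_2^\top, \ldots, u_m^\top, 0^\top]^\top$, and note that $\mathrm{Im}\, K$ is exactly the subspace of vectors of the form $[0^\top,\ldots,0^\top, w^\top]^\top$.

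Second, I will use these two observations to characterize membership in $S_- + \mathrm{Im}\,K$: a vector $\zeta = [\zeta_0^\top,\ldots,\zeta_m^\top]^\top$ lies in $S_- + \mathrm{Im}\, K$ if and only if there exists some $\xi$ for which $[\zeta_0^\top, \ldots, \zeta_{m-1}^\top, \xi^\top]^\top \in S_-$, since adding an element of $\mathrm{Im}\, K$ leaves the first $m$ blocks unchanged and arbitrarily alters the last one. Specializing $\zeta = Tz$, the condition $Tz \in S_- + \mathrm{Im}\, K$ becomes: there exists $u_{m+1}$ (playing the role of $\xi$) such that $[u_1^\top,\ldots, u_m^\top, u_{m+1}^\top]^\top \in S_-$, which by the first step is equivalent to $\sum_{r=0}^m \mathfrak V^r u_{r+1} \le \mathfrak v^0$.

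Finally, combining these, the inclusion $TS_- \subseteq S_- + \mathrm{Im}\, K$ holds if and only if for every $(u_0,\ldots, u_m)$ satisfying $\sum_{r=0}^m \mathfrak V^r u_r \le \mathfrak v^0$, there exists $u_{m+1}$ with $\sum_{r=0}^m \mathfrak V^r u_{r+1} \le \mathfrak v^0$. This is exactly the definition of extendability of $(\{\mathfrak V^r\}_{r=0}^m,\mathfrak v^0)$, completing the proof. The only subtlety — and hence the sole thing worth emphasizing in the write-up — is tracking the one-step index shift introduced by $T$, which is precisely what discards $u_0$ and converts the condition $\sum \mathfrak V^r u_r \le \mathfrak v^0$ into its ``shifted-by-one'' counterpart $\sum \mathfrak V^r u_{r+1} \le \mathfrak v^0$.
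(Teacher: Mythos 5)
Your proof is correct and follows essentially the same route as the paper's: both reduce the inclusion $TS_- \subseteq S_- + \mathrm{Im}\,K$ to the block-wise statement that for every $(u_0,\ldots,u_m)$ satisfying $\sum_{r=0}^m \mathfrak V^r u_r \le \mathfrak v^0$ one can choose a final block $u_{m+1}$ with $[u_1^\top,\ldots,u_{m+1}^\top]^\top \in S_-$, using that $\mathrm{Im}\,K$ is a subspace acting only on the last block. Your write-up is slightly more explicit about the block-wise characterization of $S_- + \mathrm{Im}\,K$, but the argument is the same.
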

In particular, extendibility can be tested using the tools presented in the previous subsection.

\begin{proof}
By writing $z = [u_0^\top,\ldots,u_{m}^\top]^\top$, extendibility is equivalent to the following implication - whenever $z \in S_-$, there exists some $u_{m+1}$ such that $Tz + Ku_{m+1} \in S_-$. In other words, if $z\in S_-$, then there exists some $u_{m+1}$ such that $Tz \in S_- + \{-Ku_{m+1}\}$. This corresponds to $z\in S_-$ implying that $Tz \in S_- + {\rm Im}K$, proving the proposition.
\end{proof}

\section{Verification for Unperturbed Systems} \label{sec.Cert}
In this section, we consider the verification problem for unperturbed systems. These are closed-loop systems $\Sigma$ of the form \eqref{eq.GoverningEquations} for which the sets $\PP,\RR$ consist of a single element. Equivalently, these are affine dynamical control systems governed by the following equations
\begin{align} \label{eq.GoverningEquations2}
    x(0)&\in \X_0,\\\nonumber
    x(k+1) &= Ax(k) + Bd(k) + w,~\forall k \in \N\\
    y(k) &= Cx(k) + Dd(k) + v,~\forall k \in \N.\nonumber
\end{align}
where the vectors $w,v$ depend on the matrices $E,F$ and the sets $\PP,\RR$, each containing a single element. Throughout this section, we fix such a system, governed by \eqref{eq.GoverningEquations2}. Moreover, we fix matrices $\{\mathfrak A^r,\mathfrak G^r\}_{r=0}^m$ and vectors $\mathfrak a^0, \mathfrak g^0$ defining an LTI contract $\C$ of depth $m$ via \eqref{eq.LinCon}. Our goal is to find a computationally tractable method for verifying that $\Sigma \sat \C$. 

This section is split into two parts. First, we present an exact reachability-based procedure for verifying whether $\Sigma \sat \C$ holds. The procedure will require us to solve infinitely many linear programs, meaning it is intractable. The second part of this section will use induction (or more precisely, $k$-induction \cite{Donaldson2011}) to augment the verification procedure to be tractable, at the cost of making it conservative.

\subsection{Reachability-based Verification}
By definition, a control system $\Sigma$ satisfies a contract $\C = (\D,\OO)$ if for any admissible input $d(\cdot)\in \D$, and any trajectory $(d(\cdot),x(\cdot),y(\cdot))$ of $\Sigma$, we have $(d(\cdot),y(\cdot))\in \Omega$. If the guarantees $\Omega$ were independent of the input $d(\cdot)$, then this property can be understood in terms of reachability analysis - $\Sigma \sat \C$ if and only if the output of any trajectory of the system, with inputs taken from $\D$, lies in the set $\Omega$. For LTI contracts, the assumptions and guarantees are stated as a collection of requirements, one corresponding to each time $k$. The theorem below uses extendibility to convert this reachability-based criterion by a collection of implications that must be verified. This is a generalization of Theorem 3 in \cite{SharfADHS2020}.
\begin{thm}
\label{thm.Inductive}
Let $\C = (\D,\OO)$ be an LTI contract of depth $m\ge 1$ of the form \eqref{eq.LinCon}, and let $\Sigma$ be a system of the form \eqref{eq.GoverningEquations2}. Assume that $(\{\mathfrak A^r\}_{r=0}^m,\mathfrak a^0)$ is extendable. Then $\Sigma \sat \C$ if and only if for any $n\in \N, n\ge m-1$, the following implication holds: for any $d_0,d_1,\ldots,d_{n+1} \in \R^{n_d}$, any $x_0,x_1,\ldots,x_{n+1} \in \R^{n_x}$ and any $y_0,y_1,\ldots,y_{n+1}\in \R^{n_y}$, the condition:
\begin{align} \label{eq.FullFormDemand}
\begin{cases}
    x_0 \in \X_0,\\
     \sum_{r=0}^m \mathfrak G^r \left[\begin{smallmatrix} d_{k-m+r} \\ y_{k-m+r} \end{smallmatrix}\right] \le \mathfrak g^0,&\forall k=m,\ldots,n,\\
    \sum_{r=0}^m \mathfrak A^r d_{k-m+r} \le \mathfrak  a^0,&\forall k=m,\ldots,n+1, \\
    x_{k+1} = Ax_k + Bd_k + w,&\forall k=0,\ldots,n, \\
    y_k = Cx_k + Dd_k + v,&\forall k=0,\ldots,n+1,
\end{cases}
\end{align} 
implies:
\begin{align}\label{eq.FullFormResult}
\sum_{r=0}^m \mathfrak G^r \begin{bmatrix} d_{n+1-m+r} \\ y_{n+1-m+r} \end{bmatrix} \le \mathfrak g^0.
\end{align}
\end{thm}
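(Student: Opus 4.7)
The plan is to prove both implications by constructing appropriate (possibly infinite) trajectories and translating between the pointwise inequalities defining $\D,\OO$ and the finitary condition in the statement.

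For the forward direction ($\Sigma \sat \C$ implies the listed implications), I would fix $n \ge m-1$ and a finite tuple $(d_k, x_k, y_k)_{k=0}^{n+1}$ satisfying the hypotheses \eqref{eq.FullFormDemand}. The idea is to extend this tuple to a full trajectory of $\Sigma$ whose input lies in $\D$, and then invoke $\Sigma \sat \C$ to obtain the required guarantee inequality at time $n+1$. Extension of the input is where the extendability hypothesis on $(\{\mathfrak A^r\}_{r=0}^m, \mathfrak a^0)$ enters: starting from the assumption inequality at $k=n+1$, extendability produces $d_{n+2}$ satisfying the assumption inequality at $k=n+2$, and iterating yields an infinite signal $d(\cdot) \in \D$. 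The state and output are then defined for all $k \ge n+2$ by the recursion \eqref{eq.GoverningEquations2}, giving a trajectory $(d, x, y)$ with $y \in \Sigma(d)$. Since $\Sigma \sat \C$ and $d \in \D$, we have $(d,y) \in \OO$, and evaluating the guarantee pointwise at $k = n+1$ yields \eqref{eq.FullFormResult}.

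For the reverse direction, I would use strong induction on $k \ge m$ to show that every $d \in \D$ and $y \in \Sigma(d)$ satisfy the guarantee inequality at time $k$, which together cover all $k \ge m$ and thus give $(d,y) \in \OO$. For the base case $k = m$, I apply the stated implication with $n = m-1$: the guarantee hypotheses are vacuous (the range $k=m,\ldots,m-1$ is empty), while the assumption hypotheses ($k=m$ only), dynamics hypotheses ($k=0,\ldots,m-1$), and output hypotheses ($k=0,\ldots,m$) all follow from $d \in \D$ and $y \in \Sigma(d)$. The conclusion is exactly the guarantee inequality at time $m$. For the inductive step with $k = n+1 > m$, I apply the implication at level $n$: the assumption and dynamics hypotheses come from $d \in \D$ and $y \in \Sigma(d)$, and the guarantee hypotheses for $k = m, \ldots, n$ come from the inductive hypothesis.

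The main obstacle, and the only nontrivial step, is the extension argument in the forward direction, which relies crucially on extendability to ensure that the finite input tuple can be continued into an element of $\D$; without this, one could have a finite-length input sequence satisfying the assumption inequalities on its window that admits no continuation, breaking the equivalence. Everything else is bookkeeping of index ranges, most delicate at the boundary $n = m-1$ where the guarantee hypotheses become vacuous but the base case of the induction must still deliver the first nontrivial guarantee.
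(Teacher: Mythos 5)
Your proposal is correct and follows essentially the same route as the paper's proof: one direction extends the finite tuple to a full trajectory via extendability of the assumptions and the dynamics recursion, then invokes satisfaction; the other direction is a (strong) induction on time using the implications, with the guarantee hypotheses vacuous at the base case $n=m-1$. Your handling of the base case is in fact slightly more explicit than the paper's, but the argument is the same.
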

In other words, satisfaction is equivalent to the following collection of statements, defined for all $n\in \N$ - if the initial conditions hold, the guarantees hold up to time $n$, and the assumptions and the dynamics hold up to time $n+1$, then the guarantees hold at time $n+1$. We now prove the theorem:

\begin{proof}
Suppose first that whenever \eqref{eq.FullFormDemand} holds, so does \eqref{eq.FullFormResult}, and take any $d\in \D$ and $y\in \Sigma(d)$. Our goal is to show that $(d,y)\in \OO$. As $d\in \D$, the following inequality holds for all $k\in \N, k\ge m$:
\begin{align*}
     \sum_{r=0}^m \mathfrak A^r d(k-m+r) \le \mathfrak a^0.
\end{align*}
Moreover, as $y\in \Sigma(d)$, there exists some signal $x(\cdot)$ so that \eqref{eq.GoverningEquations2} holds for all $k\in \N$. Thus, if we choose $d_k = d(k), x_k = x(k)$ and $y_k = y(k)$ for all $k=0,1,\ldots,n+1$ and use the implication $\eqref{eq.FullFormDemand}\implies\eqref{eq.FullFormResult}$, we conclude that \eqref{eq.FullFormResult} holds for any $n\ge m$ by induction on $n$. Thus, $(d,y)\in\OO$, and hence $\Sigma \sat \C$.

On the contrary, suppose that $\Sigma \sat \C$, and we wish to prove that \eqref{eq.FullFormDemand} implies \eqref{eq.FullFormResult}. We take $n\in \N, n \ge m$ and some $d_0,d_1,\ldots,d_{n+1}\in \R^{n_d}$, $x_0,x_1,\ldots,x_{n+1}\in \R^{n_x}$ and $y_0,y_1,\ldots,y_{n+1}\in \R^{n_y}$ such that \eqref{eq.FullFormDemand} holds, and show that \eqref{eq.FullFormResult} also holds. Suppose that we show that there exist signals $ d(\cdot)$ and $ y(\cdot)$ such that ${y} \in \Sigma({d})$, ${d} \in \D$ both hold, and $ d(k) = d_k,  y(k) = y_k$ also hold for all $k=0,1,\ldots,n+1$. In that case, we have that $( d, y)\in \OO$ as $\Sigma \sat \C$, which would imply the desired inequality at time $k=n+1$. Thus, it suffices to prove that such signals $ d, y$ exist.

Recall that $(\{\mathfrak A^r\}_{r=0}^m,\mathfrak a^0)$ was assumed to be extendable. As the inequality $\sum_{r=0}^m \mathfrak A^r d(k-m+r) \le \mathfrak a^0$ holds for all $k=m,\ldots,n+1$, we conclude that there exists a signal ${d} \in \D$ such that $ d(k) = d_k$ holds for $k=0,1,\ldots,n+1$. We define signals ${x},{y}$ as follows - for $k=0,1,\ldots,n+1$, we define ${x}(k) = x_k$ and ${y}(k) = y_k$. For $k\ge n+2$, we define $ x(k) = A x(k-1) + B d(k-1) + w$ and $ y(k) = C x(k) + D d(k) + v$. As we assumed that \eqref{eq.FullFormDemand} holds, we conclude that $ y\in \Sigma( d)$. We thus proved the existence of signals ${d},{y}$ satisfying ${y} \in \Sigma({d})$, ${d} \in \D$, and $ d(k) = d_k,  y(k) = y_k$ for $k=0,1,\ldots,n+1$. We deduce the implication holds, concluding the proof.
\end{proof}

Theorem \ref{thm.Inductive} allows one to prove that an unperturbed LTI system $\Sigma$ satisfies an LTI contract $\C$ by proving infinitely-many implications of the form $\eqref{eq.FullFormDemand}\implies\eqref{eq.FullFormResult}$. Moreover, these implications can be seen as one polyhedral set being a subset of another polyhedral set, and can thus be verified using the tools in Section \ref{subsec.Polyhedral}.
These prove that if the system satisfies the contract ``up to time $n$", then it satisfies it ``up to time $n+1$". However, using the theorem directly to verify satisfaction is infeasible, as there are infinitely many implications to prove. Section \ref{subsec.kInduction} below will show that it suffices to prove finitely many implications of the form $\eqref{eq.FullFormDemand}\implies\eqref{eq.FullFormResult}$ in order to verify satisfaction. Moreover, we can test the validity of these implications by recasting them as optimization problems, similarly to Lemma \ref{lem.Inclusion}. For any $n,p\in \N$ such that $n-p\ge m-1$, we consider the following optimization problem:
\begin{align} \label{eq.Prob_np_Lin}
    \max_{d_k,x_k,y_k} ~&~ \max_i \left[{\rm e}_i^\top\left( \sum_{r=0}^{m} \mathfrak G^{r} \begin{bmatrix}d_{n+1-m+r} \\ y_{n+1-m+r}\end{bmatrix} - \mathfrak g^0\right)\right]\\ \nonumber
    {\rm s.t.} ~&~  \sum_{r=0}^{m} \mathfrak G^r \begin{bmatrix}d_{k-m+r} \\ y_{k-m+r}\end{bmatrix} \le \mathfrak g^0\hspace{2pt}~~,\forall k=m+p,\ldots,n,\\ \nonumber
    ~&~\sum_{r=0}^{m} \mathfrak A^r d_{k-m+r} \le \mathfrak a^0~~~~~~,\forall k=m+p,\ldots,n+1, \\ \nonumber
    ~&~x_{k+1} = Ax_k + Bd_k + w\hspace{2pt}~,\forall k=p,\ldots,n, \\ \nonumber
    ~&~y_k = Cx_k + Dd_k + v~~~~~,\forall k=p,\ldots,n+1, \\ \nonumber
    ~&~ x_p \in \X_p, \nonumber
\end{align}
where $\X_p,~p=1,2,\ldots,n$  are sets to be defined later, and $\rm e_i$ are the standard basis vectors. We denote this problem by $V_{n,n-p}$ and its optimal value as $\theta_{n,n-p}$. Here, $p$ represents the first time we consider, $n$ represents the last time at which we know the guarantee holds, and $\ell = n-p$ is the length of history we consider. For $p=0$, the problem $V_{n,n}$ computes the ``worst-case violation" of the guarantee at time $n+1$, given that the assumptions and dynamics hold at times $0\ldots,n+1$ and that the guarantees hold at times $0\ldots,n$. Thus, Theorem~\ref{thm.Inductive} can be restated in the following form:
\begin{cor} \label{cor.Thetann}
Under the assumptions of Theorem~\ref{thm.Inductive}, $\Sigma \sat \C$ if and only if $\theta_{n,n} \le 0$ for all $n \in \N$, $n\ge m-1$.
\end{cor}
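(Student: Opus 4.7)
The plan is to observe that the problem $V_{n,n}$ in \eqref{eq.Prob_np_Lin} (obtained by setting $p=0$) is, by construction, a direct linear-programming transcription of the implication appearing in Theorem~\ref{thm.Inductive}, so the corollary is essentially a reformulation of that theorem.

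First, I would specialize \eqref{eq.Prob_np_Lin} to $p=0$. The initial condition becomes $x_0 \in \X_0$ (interpreting the yet-to-be-defined $\X_0$ as the set of admissible initial states from Definition~\ref{def.Systems}), the guarantee constraints range over $k = m, \ldots, n$, the assumption constraints over $k = m, \ldots, n+1$, and the dynamics and observation constraints over $k = 0, \ldots, n$ and $k = 0, \ldots, n+1$ respectively. Comparing term by term with \eqref{eq.FullFormDemand}, the feasible set of $V_{n,n}$ is exactly the set of tuples $(d_0,\ldots,d_{n+1},x_0,\ldots,x_{n+1},y_0,\ldots,y_{n+1})$ satisfying the hypothesis of the implication in Theorem~\ref{thm.Inductive}.

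Second, I would reinterpret the objective. The expression $\max_i \bigl[{\rm e}_i^\top(\sum_{r=0}^m \mathfrak G^r [d_{n+1-m+r}^\top,\, y_{n+1-m+r}^\top]^\top - \mathfrak g^0)\bigr]$ is simply the largest entry of the vector $\sum_{r=0}^m \mathfrak G^r [d_{n+1-m+r}^\top,\, y_{n+1-m+r}^\top]^\top - \mathfrak g^0$. Since a vector $u$ satisfies $u \le 0$ entrywise if and only if $\max_i u_i \le 0$, the inequality \eqref{eq.FullFormResult} holds at a feasible point if and only if the inner objective of $V_{n,n}$ is nonpositive there. Therefore $\theta_{n,n} \le 0$ is equivalent to the statement that \eqref{eq.FullFormResult} holds at every feasible tuple, which is precisely the validity of the implication $\eqref{eq.FullFormDemand}\Rightarrow\eqref{eq.FullFormResult}$ for that particular $n$.

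Finally, Theorem~\ref{thm.Inductive} asserts that $\Sigma \sat \C$ holds if and only if these implications hold for all $n \in \N$ with $n \ge m-1$; by the previous step this is equivalent to $\theta_{n,n} \le 0$ for all such $n$, completing the proof. There is no substantial obstacle; the entire argument is bookkeeping, with the only points requiring care being the alignment of index ranges between \eqref{eq.Prob_np_Lin} and \eqref{eq.FullFormDemand} at $p=0$, and the elementary equivalence between a vector inequality and the nonpositivity of the maximum of its entries.
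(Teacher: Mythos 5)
Your proposal is correct and follows exactly the same route as the paper's (much terser) proof: identify the feasible set of $V_{n,n}$ with the hypothesis \eqref{eq.FullFormDemand}, note that $\theta_{n,n}\le 0$ is equivalent to \eqref{eq.FullFormResult} holding at every feasible point, and invoke Theorem~\ref{thm.Inductive}. The extra bookkeeping you supply (index alignment at $p=0$ and the max-of-entries reformulation of the vector inequality) is accurate and fills in what the paper leaves implicit.
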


\begin{proof}
For any $n\in \N$, $\theta_{n,n}\le 0$ if and only if \eqref{eq.FullFormResult} holds whenever \eqref{eq.FullFormDemand} holds. The result now follows by applying Theorem \ref{thm.Inductive}.
\end{proof}

\subsection{Tractable Verification using $k$-induction} \label{subsec.kInduction}
Corollary \ref{cor.Thetann} proves it suffices to compute $\theta_{n,n}$ for all $n\in\N, n\ge m-1$ to in order to verify whether $\Sigma\sat \C$. As this requires solving infinitely many linear programs, the method is intractable. Moreover, we prefer to compute $\theta_{n,\ell}$ for small $\ell = n - p$, as this is a simpler problem with fewer variables.
The main difficulty in using $V_{n,\ell}$ for small $\ell$ is that it requires knowledge of the state trajectory $x(\cdot)$ at time $p = n - \ell$, captured in \eqref{eq.Prob_np_Lin} via the constraint $x_p \in \X_p$. This is simply reduced to the initial value $x_0 \in \X_0$ for the problems $V_{n,n}$. 

An efficient solution of $V_{n,\ell}$ for small $\ell$ requires a characterization of $\X_p$ satisfying the following criteria. First, we would like $\X_p$ to be \emph{independent} of $p$, as this will imply that verification can be done by solving a finite number of optimization problems (thus not requiring the computation of all $\theta_{n,n}$ as in Corollary \ref{cor.Thetann}). 
Second, $V_{n,\ell}$ is equivalent to $V_{n+1,\ell}$ where $\X_{p+1}$ is the image of $\X_p$ under the dynamics $x_{p+1} = Ax_p + Bd_p + w$. Thus, we search for $\X_p$ which is a robust invariant set, and specifically the  smallest robust invariant set containing $\X_0$. 
However, the smallest robust invariant set containing $\X_0$ might be fairly complex to explicitly state, implying that the optimization problem $V_{n,\ell}$ cannot be explicitly defined, let alone solved. For example, \cite{Fisher1988} studies the minimally robust invariant set containing $\X_0 = \{0\}$ for two-dimensional linear time-invariant systems, given in state-space form via $x^+ = Ax + Bd$. It shows that this minimally robust invariant set is polyhedral if and only if all of the eigenvalues of the system matrix $A$ are rational. We note that the problem $V_{n,n-p}$ is a linear program if and only if $\X_p$ is a polyhedral set, meaning that if we take $\X_p$ as the minimal robust invariant set, the rationality of the eigenvalues of $A$ determines whether $V_{n,p}$ is a linear program or not. This problem is escalated even further if we assume that the eigenvalues of $A$ are computed numerically, meaning we cannot determine their rationality. We can also try and find some robust invariant set containing $\X_0$, not necessarily the smallest one, but an explicit form is still hard to find. For example, \cite{Rakovic2005} tries to find a polyhedral robust invariant set containing $\X_0$, offering a very partial solution for $\X_0 = \{0\}$. A general solution to this problem is not known to the authors. 

We make a detour around the tractability problem for the robust invariant set by choosing $\X_p = \R^{n_x}$. This results in a more conservative test, in the sense that $\X_p$ is larger than necessary, and the demand $\theta_{n,\ell} \le 0$ becomes stricter. However, the resulting problems $V_{n,n-p}$ are linear programs. Moreover, this choice allows us to verify contract satisfaction by solving finitely many linear programs, as suggested by Algorithm \ref{alg.VerifyCertainNoIota} below. The algorithm chooses which problems $V_{n,\ell}$ to solve based on an input parameter $\iota$, which essentially acts as a truncation parameter. Indeed, it defines the maximal history depth for the problems $V_{n,\ell}$ solved by the algorithm, and also the highest number $n$ for which $\theta_{n,n}$ is computed. Theorem \ref{thm.CertainAlgCorrectness} below studies Algorithm \ref{alg.VerifyCertainNoIota}, proving its correctness and analyzing its complexity. Later, Theorem \ref{thm.ThetaInduciveLTI} will suggest a choice of the parameter $\iota$.

\begin{algorithm} [t]
\caption{Verification for Unperturbed Systems, Ver. 1}
\label{alg.VerifyCertainNoIota}
{\bf Input:} An LTI contract $\C = (\D,\OO)$ of the form \eqref{eq.LinCon}, of depth $m \ge 1$, an LTI system $\Sigma$ of the form \eqref{eq.GoverningEquations2}, and a number $\iota \in \N$ satisfying $\iota \ge m-1$.\\
{\bf Output:} A boolean variable $\mathfrak{b}_{\mathcal{C},\Sigma,\iota}$.
\begin{algorithmic}[1]
\State Consider $V_{n,\ell}$ defined in \eqref{eq.Prob_np_Lin}, with $\X_1 = \R^{n_x}$. Solve them for $(n,\ell) \in\{(k,k): m-1\le k\le \iota-1\} \cup \{(\iota+1,\iota)\}$, and let $\theta_{n,\ell}$ be their solution.
\If{All computed values $\theta_{n,\ell}$ are non-positive}
\State {\bf Return} $\mathfrak{b}_{\mathcal{C},\Sigma,\iota} = $ true.
\Else
\State {\bf Return} $\mathfrak{b}_{\mathcal{C},\Sigma,\iota} = $ false.
\EndIf
\end{algorithmic}
\end{algorithm}

\begin{thm} \label{thm.CertainAlgCorrectness}
Let $\C = (\D,\OO)$ be an LTI contract of depth $m\ge 1$ of the form \eqref{eq.LinCon}, let $\Sigma$ be a system of the form \eqref{eq.GoverningEquations2}, and let $\iota$ be a natural number satisfying $\iota \ge m-1$.
\begin{enumerate}
    \item If Algorithm \ref{alg.VerifyCertainNoIota} outputs $\mathfrak{b}_{\C,\Sigma,\iota} = $ true, then $\Sigma \sat \C$.
    \item The algorithm solves $\iota-m+3$ linear programs.
\end{enumerate}
\end{thm}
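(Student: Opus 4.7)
The plan is to prove Part~1 by combining Corollary~\ref{cor.Thetann} with a $k$-induction scheme. By that corollary, $\Sigma \sat \C$ is equivalent to $\theta_{n,n} \le 0$ for every $n \ge m-1$, where $\theta_{n,n}$ is defined using the original $\X_0$. Since enlarging $\X_p$ only relaxes the feasible set of $V_{n,\ell}$, every value computed by the algorithm upper-bounds its sharper counterpart, so non-positivity of the computed values is a sufficient condition. The goal is therefore to show that the finitely many non-positive values produced by the algorithm force $\theta_{n,n} \le 0$ for all $n\ge m-1$.

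For the short-horizon range $n$ covered directly by the $V_{k,k}$ calls of the algorithm (using the true $\X_0$), the implication \eqref{eq.FullFormDemand}$\Rightarrow$\eqref{eq.FullFormResult} is established outright. For larger $n$, I would exploit a crucial structural property: when $\X_p=\R^{n_x}$, the LP $V_{n,\ell}$ is translation-invariant in time. Formally, via the substitution $k\mapsto k-p$, the constraint set and the objective of $V_{n,n-p}$ depend only on $\ell=n-p$ and not on the absolute time $p$. Consequently, the single quantity $\theta_{\iota+1,\iota}$ (with $\X_1=\R^{n_x}$) simultaneously bounds the worst-case guarantee violation over \emph{every} sliding window of length $\iota+1$ along a trajectory of $\Sigma$.

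The inductive step is then straightforward. Suppose that the guarantees hold along a trajectory of $\Sigma$ with $x(0)\in\X_0$ and $d\in\D$ up to some time $n$ exceeding the base range; restricting the trajectory to the last $\iota+2$ samples furnishes a feasible point for the time-shifted copy of $V_{\iota+1,\iota}$, because the state at the start of the window is absorbed by the $\R^{n_x}$ relaxation and the dynamics, assumption, and guarantee constraints translate one-to-one across the shift. Non-positivity of $\theta_{\iota+1,\iota}$ then propagates the guarantee to time $n+1$, closing the $k$-induction. Part~2 is then a direct count of the pairs $(n,\ell)$ enumerated by the algorithm, which yields $\iota-m+3$ linear programs.

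The main obstacle is making the time-shift argument precise: one must verify that every guarantee, assumption, and dynamics constraint of the sliding-window truncation maps, under the shift $k\mapsto k-p$, to a constraint genuinely present in $V_{\iota+1,\iota}$, with no residual dependence on $\X_0$ sneaking in through the initial state. Once this shift-invariance is carefully established and it is checked that the base range of $V_{k,k}$ calls is wide enough to initialize the induction (so each later fragment has a full $\iota+1$ preceding guarantee instances available), the rest of the argument is routine application of extendibility via Theorem~\ref{thm.Inductive}.
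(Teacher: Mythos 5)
Your proposal is correct and follows essentially the same route as the paper: reduce to $\theta_{n,n}\le 0$ for all $n$ via Corollary \ref{cor.Thetann}, then use the two facts that relaxing $\X_p$ to $\R^{n_x}$ and shortening the history only increases the LP value, and that with $\X_p=\R^{n_x}$ the problems $V_{n,\ell}$ are time-shift invariant, so the single value $\theta_{\iota+1,\iota}$ dominates every $\theta_{n,n}$ beyond the base range. These are precisely the paper's claims i) and ii), and your trajectory-restriction phrasing of the shift argument is an equivalent way of establishing them.
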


\begin{proof}
The second part is obvious, so we focus on the first. We use two claims to prove the first part of the theorem:
\begin{itemize}
    \item[i)] For any $n\in \N$, we have $\theta_{n,n}\le\theta_{n,n-1}\le\cdots\le \theta_{n,m-1}$.
    \item[ii)] For any $\ell \ge m-1$, we have $\theta_{\ell,\ell} \le \theta_{\ell+1,\ell} = \ell_{\ell+2,\ell} = \cdots$.
\end{itemize}
We first explain why these claims imply the first part of the theorem, and then prove that the claims hold. 

Assume both claims hold and that $\mathfrak{b}_{\C,\Sigma,i} =$ true, i.e., that $\theta_{n,n} \le 0$ for $n=m-1,\ldots,\iota$ and that $\theta_{\iota+1,\iota} \le 0$, and we prove that $\Sigma \sat \C$. By Corollary \ref{cor.Thetann}, it suffices to show that $\theta_{n,n} \le 0$ for $n\ge \iota + 1$. If $n \ge \iota+1$, then $\theta_{n,n} \le \theta_{n,\iota}$ (by the first claim), and $\theta_{n,\iota} = \theta_{\iota+1,\iota}$ (by the second claim). As $\theta_{\iota+1,\iota} \le 0$ by assumption, we conclude that $\theta_{n,n} \le 0$. As $n \ge \iota+1$ was arbitrary, we yield $\Sigma \sat \C$.

We now prove claim i). Fix some $p$ such that $1\le p \le n-m+1$, so that $\ell = n-p$ satisfies $n-1\ge \ell \ge m-1$. We can relate the problem $V_{n,n-p+1}$ to the problem $V_{n,n-p}$ by altering some of its constraints. We first remove the constraints that the guarantees, assumptions and dynamics hold at time $p-1$. We also note that while the problem $V_{n,n-p}$ restricts $x_p \in \R^{n_x}$, $V_{n,n-p+1}$ restricts $x_p$ to be achieved from the dynamics (via $x_{p-1} \in \R^{n_x}$ and $x_p = Ax_{p-1} + Bd_{p-1}+w$). Thus, $V_{n,n-p+1}$ has the same cost function as $V_{n,n-p}$, but stricter constraints. In particular, as both are maximization problems, we conclude that $\theta_{n,n-p+1} \le \theta_{n,n-p}$, as desired. 

As for claim ii), we similarly relate $V_{n+1,\ell}$ and $V_{n,\ell}$ by altering the names of $d_k,x_k,y_k$ to $d_{k+1},x_{k+1},y_{k+1}$, and changing the set of initial conditions from $\X_0$ to $\R^{n_x}$ (only if $n=\ell$)
Thus, $\theta_{\ell,\ell} \le \theta_{\ell+1,\ell} = \theta_{\ell+2,\ell} = \cdots$.
\end{proof}

Theorem \ref{thm.CertainAlgCorrectness} shows that Algorithm \ref{alg.VerifyCertainNoIota} can be used to verify that an unperturbed system $\Sigma$ satisfies an LTI contract $\C$. The algorithm uses a parameter $\iota \ge m-1$ dictating the number of linear programs solved by the algorithm. Namely, the first $\iota-m+1$ programs  deal with the initial conditions of the system, and the last program deals with the long-term behaviour of the system.
As $\iota$ becomes larger, the algorithm becomes less (over-)conservative - if $\iota_1 \le \iota_2$ then $\theta_{n,\iota_2} \le \theta_{n,\iota_1}$, so $\theta_{n,\iota_1} \le 0$ implies $\theta_{n,\iota_2} \le 0$. However, larger values of $\iota$ result in a larger number of linear programs, which are also more complex as they have more variables.
We must find a systematic way to choose the parameter $\iota$ effectively. As the following theorem shows, $\theta_{\iota+1,\iota}$ can be infinite for one value of $\iota$, but finite (and non-positive) for other values of $\iota$:

\begin{thm} \label{thm.ThetaInduciveLTI}
Let $\C = (\D,\OO)$ and $\Sigma$ be as in Theorem \ref{thm.CertainAlgCorrectness}. Let $\nu$ be the observability index of $\Sigma$, and define $\X_p = \R^{n_x}$ for all $p\neq 0$.
Define the sets: 
\begin{align*}
    \D_\star &= \left\{(d_0,d_1,\ldots,d_m): \sum_{r=0}^m \mathfrak A^r d_r \le \mathfrak a^0\right\},
    \end{align*}
    \begin{align*}
    \Omega_\star &= \left\{(d_0,y_0,\ldots,d_m,y_m) : \sum_{r=0}^m \mathfrak G^r \begin{bmatrix} d_r \\ y_r \end{bmatrix} \le \mathfrak g^0\right\}
\end{align*}
Assume $\D_\star$ is bounded, and that for any bounded set $E \subseteq \R^{(m+1)n_d}$, the intersection $\Omega_\star \cap (E\times \R^{(m+1)n_y})$ is bounded. Then $\theta_{n,\ell} < \infty$ for $n \ge \ell \ge \max\{m,\nu\}-1$, and $\theta_{n,\ell} = \infty$ if $n,\nu-1 > \ell \ge m-1$. 
\end{thm}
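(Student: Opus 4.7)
My approach is to treat $V_{n,\ell}$ as a linear program in the reduced decision variables $(d_p,\ldots,d_{n+1},x_p)$, obtained by eliminating $y_k$ and $x_{p+1},\ldots,x_{n+1}$ via the output equations and dynamics. The LP's value is $+\infty$ iff its feasible set is nonempty and the cost has strictly positive inner product with some recession direction; so the strategy is to characterize the recession cone of the feasible polyhedron and compute the cost on it.

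Since $\D_\star$ is bounded, its recession cone is $\{0\}$, forcing $d'_k=0$ along any recession direction $(d',x'_p)$. Applying the hypothesis with $E=\{0\}$ shows $\Omega_\star\cap(\{0\}\times\R^{(m+1)n_y})$ is bounded, so provided $\ell\ge m$ (ensuring at least one guarantee constraint is active), the recession forces $y'_k=0$ for $k=p,\ldots,n$. Combined with the homogeneous output $y'_k=Cx'_k$ and dynamics $x'_{k+1}=Ax'_k$, this yields $\Ob_\ell x'_p=0$, so the $x_p$-recession lies in $\ker\Ob_\ell$. Since $d'$ and $y'_p,\ldots,y'_n$ all vanish, only $y'_{n+1}=CA^{\ell+1}x'_p$ contributes to the cost, which collapses to $\max_i{\rm e}_i^\top\mathfrak G^m_y CA^{\ell+1}x'_p$, where $\mathfrak G^m_y$ denotes the $y$-block of $\mathfrak G^m$.

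For the finite case $\ell\ge\max\{m,\nu\}-1$: when $n=\ell$ (so $p=0$), the feasible set is bounded via $\X_0$ and $\theta_{n,n}<\infty$ is immediate; when $p>0$, the bound $\ell+1\ge\nu$ and the definition of the observability index give $\mathrm{rank}(\Ob_{\ell+1})=\mathrm{rank}(\Ob_\ell)$, hence $\ker\Ob_\ell\subseteq\ker CA^{\ell+1}$, making the recession cost vanish identically, so $\theta_{n,\ell}<\infty$. For the infinite case $m-1\le\ell<\nu-1$, $\ell<n$: the definition of $\nu$ yields $\mathrm{rank}(\Ob_{\ell+1})>\mathrm{rank}(\Ob_\ell)$, so there exists $x^*\in\ker\Ob_\ell\setminus\ker\Ob_{\ell+1}$ with $CA^{\ell+1}x^*\neq 0$. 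Starting from any feasible point and adding $t(0,x^*)$ gives a feasible ray along which the cost grows as $t\max_i{\rm e}_i^\top\mathfrak G^m_y CA^{\ell+1}x^*$; after a sign choice for $x^*$ this is strictly positive, giving $\theta_{n,\ell}=+\infty$.

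The main obstacle is in the infinite case: rigorously showing $\mathfrak G^m_y CA^{\ell+1}x^*\neq 0$. The boundedness hypothesis on $\Omega_\star\cap(\{0\}\times\R^{(m+1)n_y})$ controls the stacked matrix $[\mathfrak G^0_y,\ldots,\mathfrak G^m_y]$ rather than $\mathfrak G^m_y$ alone, so one must carefully exploit the fact that the earlier blocks act on $y'_{k-m+r}=0$ (for $r<m$) along the recession direction, leaving only $\mathfrak G^m_y$ to control $y'_{n+1}$. A secondary point is to verify $V_{n,\ell}$ is feasible so that the infinite-value conclusion is substantive rather than a vacuous $-\infty$.
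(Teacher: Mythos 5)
Your recession-cone argument is the paper's proof in different clothing: for the finite case the paper directly bounds the $d$'s, the $y_k$'s for $k\le n$ (via the guarantee constraints and the boundedness of $\Omega_\star$ over bounded inputs), and hence $p_\Ob(x_p)$ and $y_{n+1}$, while for the infinite case it perturbs a feasible point along a ray $\alpha\xi$ with $\xi$ killed by $\Ob_\ell$ but not by $\Ob_{\ell+1}$ --- exactly your recession direction $x^*$. The one step you flag as the main obstacle is closed in the paper by precisely the observation you sketch: the perturbation moves only the last output block, so if every row of $\mathfrak G^m_y$ annihilated $CA^{\ell+1}x^*$, the bounded set $\Omega_\star\cap(\D_\star\times\R^{(m+1)n_y})$ would contain the full line through a feasible window in the direction $(0,\ldots,0,CA^{\ell+1}x^*)$, a contradiction; this yields a row with strictly positive inner product after replacing $x^*$ by $-x^*$ if necessary, which is all the max-type cost needs. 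Your secondary caveat about feasibility of $V_{n,\ell}$ (so that the conclusion is $+\infty$ rather than $-\infty$) is a fair point, but the paper's proof leaves it equally implicit (``take any feasible solution''), so on that score you match rather than fall short of the paper.
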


\begin{proof}
Define $\mu = \max\{m,\nu\}$. We first show that $\theta_{n,\mu-1} < \infty$, implying that $\theta_{n,\ell} < \infty$ for all $\ell \ge \mu - 1$ by claim i) in the proof of Theorem \ref{thm.CertainAlgCorrectness}. 
Consider a feasible solution
$\{d_k,x_k,y_k\}_{k=n-\mu+1}^{n+1}$ of $V_{n,\mu-1}$. Because $\D_\star$ is bounded, $(\D_\star \times \R^{(m+1)n_y}) \cap \Omega_\star$ is bounded. Thus, for some constant $M_0>0$, we have $\|d_k\|,\|y_k\|,\|d_{n+1}\| \le M_0$ for $k=n-\nu+1,\ldots,n$ (as $\mu \ge m$). 
However, as $\mu \ge \nu$, $p_\Ob(x_{n-\mu+1})$ can be achieved as a linear combination of $y_{n-\mu+1},\ldots,y_n$ and $d_{n-\mu+1},\ldots,d_n$ using $\Ob_\mu$. We thus find some $M_1>0$, depending on $M_0$ and $\Ob_\nu$, such that $\|p_\Ob(x_{n-\mu+1})\| \le M_1$. 
As $\|d_{n-\mu+k}\| \le M_0$ for all $k$, we yield $\|p_\Ob(x_{n-\mu+k})\| \le M_k$ for $k=1,2,\ldots,\mu+1$, where $M_k = \|A\| M_{k-1} + \|B\|M_0$. Thus $\|y_{n+1}\| \le \|C\| M_{n+1} + \|D\|M_0$, implying that the set of feasible solutions $\{d_k,x_k,y_k\}_k$ of $V_{n,\mu-1}$ is bounded, and therefore $\theta_{n,\mu-1} < \infty$. 

For the second part, we note that if $\nu-1 > \ell \ge m-1$, then $\nu > m$. In particular, claim ii) in the proof of Theorem \ref{thm.CertainAlgCorrectness} implies it suffices to show that $\theta_{n,\nu-2} = \infty$.
By definition of the observability index, ${\rm rank}~\Ob_{\nu} > {\rm rank}~\Ob_{\nu-1}$, implying there exists a non-zero vector $\xi \in \ker(\Ob_\nu)^\perp \cap \ker(\Ob_{\nu-1})$, so $CA^k \xi = 0$ for $k\le \nu-2$, but $CA^{\nu-1}\xi \neq 0$. Take any feasible solution $\{d_k,x_k,y_k\}_{k=n-\nu+2}^{n+1}$ and some $\alpha \in \R$ to be chosen later. Define a new solution
$\{\check d_k,\check x_k,\check y_k\}_k$ by $\check d_k = d_k,$
\begin{align*}
     \check x_k = \begin{cases} x_{n-\nu+2} + \alpha\xi & k=n-\nu+2, \\ A\check x_{k-1} + B\check d_{k-1} & {\rm else} \end{cases},
\end{align*}
and $\check y_k = C\check x_k + D\check d_k$. We have that $\check d_k = d_k$,$\check x_k = x_k$ and $\check y_k = y_k$ for any $k\le n$, thus $\{\check d_k,\check x_k,\check y_k\}_{k=n-\nu+2}^{n+1}$ forms a feasible solution of $V_{n,\nu-2}$.
Moreover, $\check y_{n+1} = y_{n+1} + \alpha CA^{\nu-1}\xi$. We claim that for any $M>0$, there exists some $\alpha$ such that the value of the cost function of $V_{n,\nu-2}$ for the feasible solution $\{\check d_k,\check x_k,\check y_k\}_{k=n-\nu+2}^{n+1}$ is at least $M$.

Consider the set $Q = \Omega_\star\cap(D_\star \times \R^{(m+1)n_y})$. By assumption, $Q$ is bounded, hence $(\check d_{n-m+2},\check y_{n-m+2},\ldots,\check d_{n+1},\check y_{n+1})\not\in Q$ for any $\alpha\in \R$ such that $|\alpha|$ is large enough. This is only possible if there exists some $i$ such that the $i$-th row of $\mathfrak G^m$, denoted $\mathfrak G_i^m$, satisfies $(\mathfrak G_i^m)^\top \left[\begin{smallmatrix} 0 \\ \xi \end{smallmatrix}\right] \neq 0$. If we denote the sign of $(\mathfrak G_i^m)^\top \left[\begin{smallmatrix} 0 \\ \xi \end{smallmatrix}\right] \neq 0$ as $\lambda$ and choose $\alpha = \lambda t$ for $t$ arbitrarily large, the value of the cost function grows arbitrarily large. Thus $\theta_{n,n-\nu+2} = \infty$.
\end{proof}

Theorem \ref{thm.ThetaInduciveLTI} suggests a value for the parameter $\iota$ when running Algorithm \ref{alg.VerifyCertainNoIota}. Indeed, it shows that for guarantees defined by compact sets, the algorithm always declares ``false" if $\iota < \max\{m,\nu\}-1$, no matter if $\Sigma \sat \C$ or $\Sigma \not\sat \C$. As we already stated before, larger values of $\iota$ result in a less (over-)conservative algorithm, but also in a more complex and slower algorithm. For that reason, we run Algorithm \ref{alg.VerifyCertainNoIota} with $\iota = \max\{m,\nu\}-1$. We explicitly state this in Algorithm \ref{alg.VerifyCertainIota}. 
\begin{algorithm} [h]
\caption{Verification for Unperturbed Systems, Ver. 2}
\label{alg.VerifyCertainIota}
{\bf Input:} An LTI contract $\C = (\D,\OO)$ of the form \eqref{eq.LinCon}, of depth $m \ge 1$, an LTI system $\Sigma$ of the form \eqref{eq.GoverningEquations2}.\\
{\bf Output:} A boolean variable $\mathfrak{b}_{\mathcal{C},\Sigma}$
\begin{algorithmic}[1]
\State Compute the observability index $\nu$ of $\Sigma$.
\State Run Algorithm \ref{alg.VerifyCertainNoIota} with $\iota = \max\{m,\nu\}-1$, outputting the answer $\mathfrak{b}_{\mathcal{C},\Sigma,\iota}$.
\State {\bf Return} $\mathfrak{b}_{\mathcal{C},\Sigma} = \mathfrak{b}_{\mathcal{C},\Sigma,\iota}$
\end{algorithmic}
\end{algorithm}

The correctness of Algorithm \ref{alg.VerifyCertainIota}, as well as an estimate on its complexity, follow from Theorem \ref{thm.CertainAlgCorrectness}.

\section{Verification for Perturbed Systems} \label{sec.Uncertain}
The previous section provides an efficient method for verifying that a given unperturbed LTI system satisfies a given contract. We now extend our results to dynamical control systems with perturbations in the form of process and measurement noise, prescribing LP-based methods for verifying satisfaction.

For this section, we fix an LTI contract $\C = (\D,\OO)$ of the form \eqref{eq.LinCon}. We also fix a system $\Sigma$ as in \eqref{eq.GoverningEquations}, where the sets $\PP,\RR$ correspond to process noise and measurement noise.
\begin{rem}
Suppose we want to verify that $\Pi \sat \C$ for some \emph{nonlinear} system $\Pi$. In many cases, it suffices to show that $\Sigma \sat \C$ for some LTI system $\Sigma$ with appropriately chosen process and measurement noise. For example, if $\Pi$ is governed by the equations $x(k+1) = x(k) + \sin(x(k))$ and $y(k) = x(k)$, we can consider the perturbed LTI system $\Sigma$ governed by the equations $x(k+1) = x(k) + \omega(k)$ and $y(k) = x(k)$, where $|\omega(k)|\le 1$. Trajectories of $\Pi$ are also trajectories of $\Sigma$, so verifying that $\Sigma \sat \C$ is sufficient to prove that $\Pi \sat \C$.
\end{rem}

We can consider an analogue of $V_{n,p}$ for the perturbed system $\Sigma$ and the contract $\C$, which would be of the form:
\begin{align} \label{eq.Vnp_noise}
    \max_{d_k,x_k,y_k} ~&~ \max_i \left[{\rm e}_i^\top\left( \sum_{r=0}^{m} \mathfrak G^{r} \begin{bmatrix}d_{n+1-m+r} \\ y_{n+1-m+r}\end{bmatrix} - \mathfrak g^0\right)\right]\\ \nonumber
    {\rm s.t.} ~&~  \sum_{r=0}^{m} \mathfrak G^r \begin{bmatrix}d_{k-m+r} \\ y_{k-m+r}\end{bmatrix} \le \mathfrak g^0~,\forall k=m+p,\ldots,n,\\ \nonumber
    ~&~\sum_{r=0}^{m} \mathfrak A^r d_{k-m+r} \le \mathfrak a^0~,\forall k=m+p,\ldots,n+1, \\ \nonumber
    ~&~x_{k+1} = Ax_k + Bd_k+E\omega_k~,\forall k=p,\ldots,n, \\ \nonumber
    ~&~y_k = Cx_k + Dd_k+F\zeta_k~,\forall k=p,\ldots,n+1, \\ \nonumber
    ~&~x_p \in \X_p,\\ \nonumber
    ~&~\omega_k\in\mathcal{P},\zeta_k\in\mathcal{R}~,\forall k=p,\ldots,n+1.
\end{align}
As before, the computational tractability of the problem depends on the set $\X_p$, as well as the sets $\mathcal{R}$ and $\mathcal{P}$. If $\X_p,\mathcal{P}$ and $\mathcal{R}$ are all defined by linear inequalities, we get a linear program. However, if the sets $\mathcal{P} ,\mathcal{R}$ are not defined by linear inequalities, we might get a nonlinear problem, or even a non-convex problem. For example, a uniform norm bound on the process noise, $\PP = \{\omega : \omega^\top P \omega \le \gamma^2\}$, yields a quadratic optimization problem with $n-p+1 = \ell+1$ quadratic constraints. Another case is when the perturbation stems from sensor noise, and the sensor provides an estimate on its magnitude. In that case, we can write $\omega = (\delta,\Delta)$ where $\delta \in \R^{n_\delta}$ is the sensor noise, $\Delta\in \R$ is the estimate on its size satisfying $0 \le \Delta \le 1$, and $\|\delta\| \le \Delta$ holds. In this case, the optimization problem turns out to be non-convex.

\subsection{Comparison-based Verification}
To avoid nonlinear (or non-convex) problems, we take a different approach. Intuitively, the perturbed system $\Sigma$ satisfies the contract $\C$ if and only if the nominal version of $\Sigma$, with no process or measurement noise, satisfies a robustified version of $\C$. The goal of this section is to make this claim precise. The nominal version of $\Sigma$, denoted $\hat{\Sigma}$, is governed by:
\begin{align} \label{eq.Noiseless}
    &x(0) \in \X_0,\\ \nonumber
    &x(k+1) = Ax(k) + Bd(k),~\forall k\in \N,\\ \nonumber
    &y(k) = Cx(k) + Dd(k),~\forall k\in \N.
\end{align}
The system $\hat{\Sigma}$ is an unperturbed LTI system, so checking whether it satisfies some LTI contract is possible using Algorithm \ref{alg.VerifyCertainIota}. 
The following theorem precisely defines the robustified version of $\C$:
\begin{thm} \label{thm.UncertainEquiv}
Let $\Sigma$ be a perturbed LTI system governed by \eqref{eq.GoverningEquations}, and let $\C$ be an LTI contract of the form \eqref{eq.LinCon}, where $\mathfrak G^i = [\mathfrak G^i_d,\mathfrak G^i_y]$ for $i=0,\ldots,m$. Define the auxiliary system $\hat \Sigma$ as \eqref{eq.Noiseless}, and let $T = \sum_{r=0}^m \mathfrak G_y^r C A^r$. 
The system $\Sigma$ satisfies $\C$ if and only if $\hat{\Sigma}$ satisfies the contract $\C^\prime = (\D,\OO^\prime)$, where:\small
\begin{align} \label{eq.Cprime}
    \OO^\prime = \{&(d(\cdot),y(\cdot))\in \Sig^{n_d}\times \Sig^{n_y}: \\&\sum_{r=0}^m \mathfrak G^r\begin{bmatrix}d(k-m+r) \\ y(k-m+r)\end{bmatrix} \le \mathfrak g^0 - \tau^k, ~\forall k\ge m\},\normalsize \nonumber
\end{align}\normalsize
and the $i$-th entry of the vector $\tau^k$ is given by $\tau^{\mathcal R}_i + \tau^{\mathcal P,{\rm e}}_i + \sum_{\varsigma=0}^{k-m-1} \tau^{\mathcal{P},{\rm m},\varsigma}_i$, where:
\vspace{-5pt}
\small
\begin{align} \label{eq.taus}
    &\tau^{\mathcal R}_i = \sum_{\ell=0}^m \max \left\{{\rm e}_i^\top \mathfrak G_y^\ell F \zeta_{\ell}:~\zeta_\ell \in \mathcal{R},~\forall \ell\right\},\\ \nonumber
    &\tau^{\mathcal P,{\rm e}}_i = \sum_{\ell = 0}^{m-1} \max\left\{\left({\rm e}_i^\top\sum_{r=\sigma+1}^m\mathfrak G_y^r CA^{r-1-\ell}E\right)\omega_\ell :~\omega_\ell \in \mathcal{P} \right\},\\ \nonumber
    &\tau^{\mathcal P,{\rm m},\varsigma}_i = \max \left\{{\rm e}_i^\top T A^\varsigma E \omega: ~\omega \in \mathcal{P}\right\},~\forall \varsigma\in\N.     
\end{align} \normalsize
\end{thm}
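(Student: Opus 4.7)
The plan is to compare trajectories of $\Sigma$ and $\hat{\Sigma}$ driven by the same input $d(\cdot)$ and sharing the same initial state $x(0)\in\X_0$, and to quantify exactly how much the guarantee expression at time $k$ can differ between the two systems. Unrolling \eqref{eq.GoverningEquations} against \eqref{eq.Noiseless} and subtracting gives
\begin{align*}
y(k)-\hat y(k)=\sum_{j=0}^{k-1}CA^{k-1-j}E\omega(j)+F\zeta(k),
\end{align*}
so the left-hand side of the $\OO$-inequality for $\Sigma$ at time $k$ equals the corresponding left-hand side of the $\OO'$-inequality for $\hat{\Sigma}$ at time $k$, plus a noise-dependent correction $\Delta(k)=\sum_{r=0}^{m}\mathfrak{G}^r_y\bigl(y(k-m+r)-\hat y(k-m+r)\bigr)$.

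The core technical step is to prove that, for every row index $i$, the componentwise supremum of ${\rm e}_i^\top\Delta(k)$ over all admissible noise sequences equals $\tau^k_i$. I would substitute the identity above into $\Delta(k)$ and then reorder the resulting double sum. The measurement-noise piece decouples into independent terms $\mathfrak{G}^r_y F\zeta(k-m+r)$ whose componentwise maxima sum to $\tau^{\RR}_i$. For the process-noise piece
\begin{align*}
\sum_{r=0}^m \mathfrak{G}^r_y \sum_{j=0}^{k-m+r-1}CA^{k-m+r-1-j}E\omega(j),
\end{align*}
I would exchange the order of summation and split according to whether $j$ lies in the ``bulk'' $j\le k-m-1$ or at the ``edge'' $k-m\le j\le k-1$. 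Bulk samples, reindexed by $\varsigma=k-m-1-j\ge 0$, pick up every $r\in\{0,\ldots,m\}$ and therefore receive the coefficient $\bigl(\sum_{r=0}^{m}\mathfrak{G}^r_y CA^{r}\bigr)A^{\varsigma}E=TA^{\varsigma}E$, contributing $\sum_{\varsigma=0}^{k-m-1}\tau^{\PP,\mathrm{m},\varsigma}_i$ to the supremum. Edge samples, reindexed by $\ell=j-k+m\in\{0,\ldots,m-1\}$, pick up only $r\ge\ell+1$, giving the truncated coefficient $\sum_{r=\ell+1}^m \mathfrak{G}^r_y CA^{r-1-\ell}E$ that produces $\tau^{\PP,\mathrm{e}}_i$; adding the three pieces recovers $\tau^k_i$ exactly.

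With this identity in hand, both directions of the equivalence are short. For ``$\Leftarrow$'', take $d\in\D$ and $y\in\Sigma(d)$ with witnessing state $x$ and noise $\omega,\zeta$; the nominal trajectory $\hat y\in\hat{\Sigma}(d)$ generated from the same $x(0)$ and $d$ satisfies the robustified guarantee by hypothesis, and since $\Delta(k)\le\tau^k$ componentwise, adding it preserves the inequality and certifies $(d,y)\in\OO$. For ``$\Rightarrow$'', fix $d\in\D$, $\hat y\in\hat{\Sigma}(d)$, a time $k\ge m$, and a row index $i$; choose noise sequences that attain the entry-wise maxima defining $\tau^k_i$, and let $y\in\Sigma(d)$ be the perturbed trajectory from the same $x(0)$ driven by these noises. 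The $i$-th component of $\Sigma\sat\C$ applied to $(d,y)$, minus the achieved $\tau^k_i$, yields the $i$-th component of the $\OO'$-inequality for $(d,\hat y)$.

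The main obstacle is the combinatorial bookkeeping in reordering the double sum: the boundary $j=k-m-1$ cleanly separates the ``steady-state'' contributions $\tau^{\PP,\mathrm{m},\varsigma}$ from the ``edge'' contributions $\tau^{\PP,\mathrm{e}}$, and care is required to count each $\omega(j)$ exactly once with the correct coefficient, so that the compact expression $T=\sum_{r=0}^m\mathfrak{G}^r_y CA^r$ emerges cleanly from the bulk terms. A minor subtlety in the ``$\Rightarrow$'' direction is that $\tau^k$ is taken entry by entry, so different rows $i$ may demand different worst-case noise sequences; since the satisfaction condition is itself componentwise, we may construct a separate $y\in\Sigma(d)$ for each pair $(k,i)$ without loss.
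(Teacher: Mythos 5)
Your proposal is correct and follows essentially the same route as the paper's proof: both decompose $y=\hat y+\tilde y$, substitute into the guarantee, reorder the double sum over process-noise samples, split it at $j=k-m-1$ into the bulk terms giving $TA^{\varsigma}E$ and the edge terms giving the truncated coefficients, and then optimize entrywise over the noise. Your explicit remark that different rows $i$ (and times $k$) may require different worst-case noise realizations in the ``$\Rightarrow$'' direction is a point the paper leaves implicit, but it does not change the argument.
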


\begin{proof}
We fix some $d(\cdot)\in \D$ and consider a trajectory $(d(\cdot),x(\cdot),y(\cdot))$ of $\Sigma$. By definition, there exist some signals $\omega(\cdot),\zeta(\cdot)$ such that for any $k\in \N$, we have
\begin{align*}
\begin{cases}
    \omega(k)\in \PP,~\zeta(k) \in \RR,~x(0) \in \X_0,\\
    x(k+1) = Ax(k) + Bd(k) + E\omega(k),\\
    y(k) = Cx(k) + Dd(k) + F\zeta(k).
\end{cases}
\end{align*}
We consider the corresponding trajectory $(d(\cdot),\hat x(\cdot),\hat y(\cdot))$ of $\hat{\Sigma}$ with no process nor measurement noise, i.e., we define
\begin{align*}
\begin{cases}
    \hat x(0) = x(0),\\
    \hat x(k+1) = A\hat x(k) + Bd(k),~\forall k\in \N.\\
    \hat y(k) = C\hat x(k) + Dd(k),~\forall k\in \N.
\end{cases}
\end{align*}
It is clear that for any time $t\in \N$, we have $y(t) = \hat y(t) + \tilde y(t)$ where $\tilde y(t) = F\zeta(t) + \sum_{s=0}^{t-1}CA^{t-s-1}E\omega(s)$. Fixing a time $k\ge m$, the guarantee of the contract $\C$ can be written as
\begin{align*}
\sum_{r=0}^m \mathfrak G^r \begin{bmatrix}d(k-m+r) \\ \hat y(k-m+r)\end{bmatrix} + \sum_{r=0}^m \mathfrak G_y^r \tilde y(k-m+r) \le \mathfrak g^0,
\end{align*}
or equivalently, by plugging the exact form of $\tilde y$, as

\small
\begin{align} \label{eq.Ineq1}
    \sum_{r=0}^m &\mathfrak G^r \begin{bmatrix}d(k-m+r) \\  \hat y(k-m+r)\end{bmatrix} \le \mathfrak g^0 - \\& \nonumber \sum_{r=0}^m \mathfrak G_y^r \left(F\zeta(k-m+r) + \sum_{s=0}^{k-m+r-1}CA^{k-m+r-s-1}E\omega(s)\right)
\end{align} \normalsize
By replacing the order of summation, the double sum on the right-hand side of \eqref{eq.Ineq1} can be written as
\begin{align} \label{eq.UncertainTechnical}
    \sum_{r=0}^m &\mathfrak G_y^rF\zeta(k-m+r) +\\& \nonumber \sum_{s=0}^{k-1}\sum_{r=\max\{m+s+1-k,0\}}^m \mathfrak G_y^rCA^{k-m+r-s-1}E\omega(s).
\end{align}
We can break the second sum into two double sums, one from $s=0$ to $s=k-m-1$ (for which the sum on $r$ starts at $0$), and one from $s=k-m$ to $s=k-1$ (for which the sum on $r$ starts at $m+s+1-k$). We thus get
\begin{align*}
    \sum_{s=0}^{k-1}&\sum_{r=\max\{m+s+1-k,0\}}^m \left(\mathfrak G_y^rCA^{k-m+r-s-1}E\right)\omega(s) \\= &\sum_{s=0}^{k-m-1}\sum_{r=0}^m \mathfrak G_y^rCA^{k-m+r-s-1}E\omega(s) \\&\hspace{-9pt}+ \sum_{s=k-m}^{k-1}\sum_{r=m+s+1-k}^m \mathfrak G_y^rCA^{k-m+r-s-1}E\omega(s) 
\end{align*}
replacing the summation index, $\varsigma = k-m-1-s$ for the first double sum and $\sigma = s-k+m$ for the second double sum, we arrive at the following expression
\begin{align*}
    &\sum_{\varsigma=0}^{k-m-1}\left(\sum_{r=0}^m \mathfrak G_y^r CA^r\right) A^\varsigma E \omega(k-m-1-\varsigma) \\ +&\hspace{7pt}\sum_{\sigma = 0}^{m-1} \left(\sum_{r=\sigma+1}^m\mathfrak G_y^r CA^{r-1-\sigma}E\right)\omega(\sigma+k-m).
\end{align*}
We plug this expression in place of the double sum in \eqref{eq.UncertainTechnical}, which is then plugged into \eqref{eq.Ineq1}. Fixing $\hat y(\cdot)$, the inequality must hold for any choice of $\zeta(\cdot),\omega(\cdot)$ (corresponding to different trajectories of $\Sigma$ for the same input $d(\cdot)$). Optimizing over $\omega(t)\in \mathcal{P},\zeta(t)\in\mathcal{R}, \forall t$ gives
\begin{align*} 
    \sum_{r=0}^m \mathfrak G^r \begin{bmatrix}d(k-m+r) \\ \hat y(k-m+r)\end{bmatrix} \le \mathfrak g^0 - \tau^{\mathcal{R}} - \sum_{\varsigma=0}^{k-m-1} \tau^{\mathcal R,{\rm m},\varsigma} - \tau^{\mathcal{P},{\rm e}},
\end{align*}
concluding the proof.
\end{proof}
{
Loosely speaking, the result of Theorem \ref{thm.UncertainEquiv} transfers the effect of perturbations from the system to the contract. In (11), $\tau_i^\RR$ captures the effect of measurement noise, whereas $\tau_i^{\PP,\rm e}$ and $\tau_i^{\PP,\rm m,\sigma}$ account for the effect of process noise (whose effect propagates through time). The theorem therefore} prescribes a comparison-based method of asserting that a perturbed LTI system $\Sigma$ satisfies a given time-invariant contract $\C$. Namely, we can check that an auxiliary (unperturbed) LTI system $\hat \Sigma$ satisfies another, robustified contract $\C^\prime$. The contract $\C^\prime = (\D,\OO^\prime)$ is defined by \emph{time-dependent} linear inequalities, as the vector $\tau^k$ explicitly depends on $k$. As a result, the methods exhibited in Section \ref{sec.Cert} are ineffective, as they assume the contract is time-invariant. In the next section, we overcome this problem using refinement and approximation.

\subsection{Tractability through Refinement and Approximation}
In order to overcome the problem of time-dependence, and to use the methods of Section \ref{sec.Cert}, we refine $\C^\prime$ by a time-invariant contract $\hat{\C} = (\D,\hat \OO)$, where:
\begin{align} \label{eq.C_hat}
   \hat \OO = \{(&d(\cdot),y(\cdot))\in \Sig^{n_d}\times \Sig^{n_y}:\\\nonumber &\sum_{r=0}^m\mathfrak G^r\begin{bmatrix}d(k-m+r) \\ y(k-m+r)\end{bmatrix} \le \mathfrak g^0 - \tau^\infty, ~\forall k\ge m\},
\end{align}
and we define $\tau^\infty_i = \tau^{\mathcal R}_i + \tau^{\mathcal P,{\rm e}}_i + \sum_{\varsigma=0}^{\infty} \tau^{\mathcal{P},{\rm m},\varsigma}_i$. It is obvious that $\hat{\C} \preccurlyeq \C^\prime$ if $\tau^{\mathcal{P},{\rm m},\varsigma}_i \ge 0$, which is guaranteed if $0\in \PP$. In fact, $\C^\prime$ is the ``largest" or ``most lenient" time-invariant contract which refines $\C^\prime$.
However, this raises a new problem - computing the vector $\tau^\infty$ requires computing the $\tau^{\mathcal{P},{\rm m},\varsigma}_i$ for all $\varsigma\in \N$ and all $i$, i.e., it requires solving infinitely many optimization problems. We address this issue by truncating the infinite sum and overestimating its tail. This approach is formalized in the following theorem:
\begin{prop} \label{prop.Taus}
Suppose the assumptions of Theorem \ref{thm.UncertainEquiv} hold, and let $\hat{\C} = (\D,\hat{\OO})$ be as in \eqref{eq.C_hat}. Assume that for some $N_0 \in \N$, the matrix $A^{N_0}$ is contracting, i.e., the operator norm $\|A^{N_0}\|$ is strictly smaller than $1$. Moreover, assume that $\PP,\RR$ are bounded sets. Then for any $i$, $\tau^\infty_i < \infty$. Furthermore, define $M_\PP = \max_{\omega\in\mathcal{P}}\|\omega\|$ and $K_{A,N_0} = 1+\|A\|+\ldots+\|A^{N_0-1}\|$. Then for any $\epsilon > 0$, if we define a contract $\hat{\C}_\epsilon = (\D,\hat{\Omega}_\epsilon)$ by:
\begin{align} \label{eq.C_hat_eps}
   \hat \OO_\epsilon = \{(&d(\cdot),y(\cdot))\in \Sig^{n_d}\times \Sig^{n_y}:\\\nonumber &\sum_{r=0}^m\mathfrak G^r\begin{bmatrix}d(k-m+r) \\ y(k-m+r)\end{bmatrix} \le \mathfrak g^0 - \tau^{\epsilon}, ~\forall k\ge m\},
\end{align}
then $\hat{\C}_\epsilon \preccurlyeq \hat{\C}$, where the entries of the vector $\tau^\epsilon$ are defined by:
\begin{align*}
&\tau^{\epsilon}_i =  \tau^{\mathcal R}_i + \tau^{\mathcal P,{\rm e}}_i + \sum_{\varsigma=0}^{N(\epsilon,i)-1} \tau^{\mathcal{P},{\rm m},\varsigma}_i + \epsilon,
\end{align*}
and:

\vspace{-15pt}
\small
\begin{align} \label{eq.NeiStableA}
   &N(\epsilon,i) = \max\left\{\left\lceil N_0 \log_{\frac{1}{\|A^{N_0}\|}} \left( \frac{\|T^\top e_i\|\|E\|K_{A,N_0} M_\PP    }{(1-\|A^{N_0}\|)\epsilon}\right)\right\rceil,N_0\right\}
\end{align}
\end{prop}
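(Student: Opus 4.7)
The plan is to show that the tail of the series defining $\tau^\infty_i$ decays geometrically, which yields both the finiteness assertion and a quantitative truncation bound that justifies the choice of $N(\epsilon,i)$. The refinement claim will then follow mechanically from Definition~\ref{def.refine}.

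First, I would bound each term $\tau^{\mathcal{P},\rm m,\varsigma}_i$ using sub-multiplicativity of the operator norm together with boundedness of $\mathcal{P}$:
\begin{equation*}
\tau^{\mathcal{P},\rm m,\varsigma}_i = \max_{\omega \in \mathcal{P}} {\rm e}_i^\top T A^\varsigma E \omega \le \|T^\top {\rm e}_i\|\,\|A^\varsigma\|\,\|E\|\,M_\mathcal{P}.
\end{equation*}
The main work then consists of bounding $\sum_{\varsigma = N}^\infty \|A^\varsigma\|$ using the hypothesis $\|A^{N_0}\| < 1$. I would write each index as $\varsigma = q N_0 + r$ with $q \ge 0$ and $0 \le r < N_0$, and apply $\|A^\varsigma\| \le \|A^{N_0}\|^q\,\|A^r\|$. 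Summing $r$ from $0$ to $N_0 - 1$ produces the factor $K_{A,N_0}$, while summing the geometric series in $q$ produces $(1-\|A^{N_0}\|)^{-1}$. Applied with $N=0$, this already yields $\tau^\infty_i < \infty$; for the tail starting at $N$, the analogous computation gives
\begin{equation*}
\sum_{\varsigma \ge N} \|A^\varsigma\| \le \frac{K_{A,N_0}\,\|A^{N_0}\|^{\lfloor N/N_0 \rfloor}}{1 - \|A^{N_0}\|}.
\end{equation*}

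With this inequality in hand, the role of the closed-form expression \eqref{eq.NeiStableA} becomes transparent: it is essentially the smallest $N \ge N_0$ for which the right-hand side above, multiplied by the constants $\|T^\top {\rm e}_i\|\,\|E\|\,M_\mathcal{P}$, is bounded by $\epsilon$. Solving the resulting inequality for the exponent yields the logarithm in \eqref{eq.NeiStableA}, and the outer $\max\{\cdot,N_0\}$ ensures $N \ge N_0$ as required. Hence $\sum_{\varsigma \ge N(\epsilon,i)} \tau^{\mathcal{P},\rm m,\varsigma}_i \le \epsilon$, which, combined with the definitions of $\tau^\infty_i$ and $\tau^\epsilon_i$, gives $\tau^\epsilon_i \ge \tau^\infty_i$ component-wise. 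Comparing \eqref{eq.C_hat} with \eqref{eq.C_hat_eps}, the per-$k$ inequality defining $\hat{\Omega}_\epsilon$ is entry-wise stricter than the one defining $\hat{\Omega}$, so $\hat{\Omega}_\epsilon \subseteq \hat{\Omega}$; since both contracts share the assumption set $\D$, Definition~\ref{def.refine} yields $\hat{\C}_\epsilon \preccurlyeq \hat{\C}$.

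The main obstacle I anticipate is the careful arithmetic around the floor and ceiling functions when verifying that the specific closed-form $N(\epsilon,i)$ in \eqref{eq.NeiStableA} indeed enforces $\|A^{N_0}\|^{\lfloor N/N_0 \rfloor} \le (1-\|A^{N_0}\|)\epsilon / (K_{A,N_0}\|T^\top {\rm e}_i\|\|E\| M_\mathcal{P})$, as opposed to the same bound with $N/N_0$ in place of the floor. The underlying geometric decay is clean, but matching constants exactly, including the role of the $\max\{\cdot,N_0\}$, is where the bookkeeping must be done carefully.
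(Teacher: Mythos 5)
Your proposal is correct and follows essentially the same route as the paper's proof: the term-wise bound $\tau^{\mathcal P,{\rm m},\varsigma}_i \le \|T^\top {\rm e}_i\|\,\|A^\varsigma\|\,\|E\|\,M_\PP$, the decomposition $\varsigma = qN_0 + r$ yielding the factor $K_{A,N_0}$ times a geometric series, and the observation that $\tau^\epsilon_i \ge \tau^\infty_i$ entry-wise gives the refinement. Your tail bound with the exponent $\lfloor N/N_0\rfloor$ is in fact the more careful version of the paper's estimate (which writes the exponent as $N/N_0$ without the floor), and you correctly flag that reconciling this with the ceiling in \eqref{eq.NeiStableA} is the only delicate bookkeeping step.
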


\begin{proof}
It is enough to show that for any $i$, the inequality $\tau_i^\infty \le \tau_i^{\epsilon}$ holds, or equivalently, that  $\sum_{\varsigma = N(i,\epsilon)}^\infty \tau_i^{\mathcal{P},{\rm m},\varsigma} \le \epsilon$ holds.
For any $\varsigma \in \N$, we have:
\begin{align} \label{eq.TauBound}
\tau^{\mathcal P,{\rm m},\varsigma}_i &= \max \left\{e_i^\top T A^\varsigma E \omega: ~\omega \in \mathcal{P}\right\} \\&\le \nonumber \|A^\varsigma\|\|E\|\|T^\top e_i\| M_\PP
\end{align}
By using the inequality $\|A^{aN_0+b}\| \le \|A^{N_0}\|^a\|A\|^b$ for $a,b\in\N$, we conclude that:
\begin{align*}
    \sum_{\varsigma = N(i,\epsilon)}^{\infty} \|A^\varsigma\| &\le \left(\sum_{t=0}^{N_0-1} \|A\|^t\right) \left(\sum_{\varsigma= N(i,\epsilon)/N_0}^\infty \|A^{N_0}\|^\varsigma\right) \\&\le K_{A,N_0}\frac{\|A^{N_0}\|^{{N(i,\epsilon)}/N_0}}{1-\|A^{N_0}\|},
\end{align*}
where we use $\|A^{N_0}\| < 1$. Thus, we get:
\begin{align*}
    \sum_{\varsigma = N(i,\epsilon)}^\infty \tau_i^{\mathcal{P},{\rm m},\varsigma} \le \|T^\top e_i\|\|E\|K_{A,N_0}M_\PP \frac{\|A^{N_0}\|^{{N(i,\epsilon)}/N_0}}{1-\|A^{N_0}\|}
\end{align*}
plugging in $N(i,\epsilon)$ from \eqref{eq.NeiStableA}, we conclude that the expression on the right hand side is smaller or equal than $\epsilon$, as desired.
\end{proof}

Theorem \ref{thm.UncertainEquiv} and Proposition \ref{prop.Taus} suggest the following comparison-based algorithm for verification for perturbed systems, at least when the assumptions of Proposition \ref{prop.Taus} hold:
\begin{algorithm} [h]
\caption{Verification for Perturbed Systems}
\label{alg.VerifyUncertain}
{\bf Input:} An LTI contract $\C$ of the form \eqref{eq.LinCon}, a perturbed system $\Sigma$ of the form \eqref{eq.GoverningEquations}, and a conservatism parameter $\epsilon > 0$.\\
{\bf Output:} A boolean variable $\mathfrak{b}_{\mathcal{C},\Sigma}$.
\begin{algorithmic}[1]
\State Define the auxiliary noiseless system $\hat{\Sigma}$.
\For {each $i$},
\State Compute $N(\epsilon,i)$ as in \eqref{eq.NeiStableA}.
\State Compute $\tau^{\mathcal R}_i,\tau^{\mathcal P,{\rm e}}_i$ and $\tau^{\mathcal P,{\rm m},\varsigma}_i$ according to \eqref{eq.taus} for $\varsigma = 0,1,\ldots,N(\epsilon,i)-1$ . 
\State Compute $\tau^{\epsilon}_i =  \tau^{\mathcal R}_i + \tau^{\mathcal P,{\rm e}}_i + \sum_{\varsigma=0}^{N(\epsilon,i)-1} \tau^{\mathcal{P},{\rm m},\varsigma}_i + \epsilon$.
\EndFor
\State Define the contract $\hat{\C}_\epsilon = (\mathcal{D},\hat{\Omega}_\epsilon)$ as in \eqref{eq.C_hat_eps}.
\State Run Algorithm \ref{alg.VerifyCertainIota} for the system $\hat{\Sigma}$ and the contract $\hat{\C}_\epsilon$, outputting the answer $\mathfrak{b}_{\hat{\mathcal{C}}_\epsilon,\hat \Sigma}$ .
\State {\bf Return} $\mathfrak{b}_{\mathcal{C},\Sigma} = \mathfrak{b}_{\hat{\mathcal{C}}_\epsilon,\hat \Sigma}$ .
\end{algorithmic}
\end{algorithm}

\subsection{Properties of Algorithm \ref{alg.VerifyUncertain}} \label{subsec.AnalysisUncertain}
The rest of this section is devoted to studying the correctness, the assumptions, the conservatism, and the computational complexity of Algorithm \ref{alg.VerifyUncertain}. First, we claim that the algorithm correctly verifies satisfaction:

\begin{thm}[Correctness]
Suppose the assumptions of Theorem \ref{thm.UncertainEquiv} hold. If Algorithm \ref{alg.VerifyUncertain} outputs $\mathfrak{b}_{\C,\Sigma} =$ true, then $\Sigma \sat \C$.
\end{thm}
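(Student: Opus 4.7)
The plan is essentially to chase the algorithm through the chain of previously established equivalences and refinements. If Algorithm \ref{alg.VerifyUncertain} returns $\mathfrak{b}_{\C,\Sigma}=$ true, this is precisely the boolean produced by the call to Algorithm \ref{alg.VerifyCertainIota} on the pair $(\hat{\Sigma},\hat{\C}_\epsilon)$. So the first step is to invoke the correctness statement in Theorem \ref{thm.CertainAlgCorrectness} (applied to Algorithm \ref{alg.VerifyCertainIota} via Algorithm \ref{alg.VerifyCertainNoIota} with $\iota=\max\{m,\nu\}-1$) to conclude that $\hat{\Sigma}\sat \hat{\C}_\epsilon$. This step requires only that $\hat{\C}_\epsilon$ is a well-defined LTI contract of depth $m\ge 1$, which is evident from \eqref{eq.C_hat_eps}.

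Next, I would stitch together the two refinements $\hat{\C}_\epsilon\preccurlyeq \hat{\C}\preccurlyeq \C^\prime$. The first one is given directly by Proposition \ref{prop.Taus} (whose hypotheses are part of the standing assumptions for Algorithm \ref{alg.VerifyUncertain}). The second one is the observation noted immediately after \eqref{eq.C_hat}: since each $\tau^{\PP,{\rm m},\varsigma}_i\ge 0$ (which holds under the mild condition $0\in\PP$ that is implicit in the algorithm setup), the componentwise inequality $\tau^\infty\ge \tau^k$ yields $\hat{\OO}\subseteq \OO^\prime$, and both contracts share the same assumption set $\D$, so $\hat{\C}\preccurlyeq \C^\prime$ per Definition \ref{def.refine}. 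Transitivity of $\preccurlyeq$ then gives $\hat{\C}_\epsilon\preccurlyeq \C^\prime$.

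The third step is a short monotonicity observation: if $\Sigma^\prime \sat \C_1$ and $\C_1\preccurlyeq \C_2$, then $\Sigma^\prime \sat \C_2$. This follows directly from Definition \ref{def.refine}: for any $d\in\D_2\subseteq\D_1$ and $y\in\Sigma^\prime(d)$, one has $(d,y)\in\OO_1\cap(\D_2\times\Sig^{n_y})\subseteq \OO_2$. Applying this to $\hat{\Sigma}\sat \hat{\C}_\epsilon$ together with $\hat{\C}_\epsilon\preccurlyeq \C^\prime$ yields $\hat{\Sigma}\sat \C^\prime$. Finally, Theorem \ref{thm.UncertainEquiv} states that $\hat{\Sigma}\sat \C^\prime$ is equivalent to $\Sigma\sat \C$, which is what we wanted.

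The argument is a one-line chain once each piece is in place; no new computation is required. The main things one must be careful about are therefore purely bookkeeping: verifying that the assumptions carried through the algorithm (boundedness of $\PP,\RR$, the contraction property of $A^{N_0}$, and nonnegativity of the $\tau^{\PP,{\rm m},\varsigma}_i$) are exactly those needed to invoke Proposition \ref{prop.Taus} and Theorem \ref{thm.UncertainEquiv}, and that the monotonicity of satisfaction under $\preccurlyeq$ is explicitly noted (it is immediate but not stated as a named lemma earlier). Beyond that, there is no substantive obstacle.
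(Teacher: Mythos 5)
Your proposal is correct and follows essentially the same route as the paper's own proof: conclude $\hat{\Sigma}\sat\hat{\C}_\epsilon$ from Theorem \ref{thm.CertainAlgCorrectness}, chain the refinements $\hat{\C}_\epsilon\preccurlyeq\hat{\C}\preccurlyeq\C^\prime$ to get $\hat{\Sigma}\sat\C^\prime$, and invoke Theorem \ref{thm.UncertainEquiv}. Your explicit spelling-out of why satisfaction is monotone under refinement, and your flagging of the $0\in\PP$ condition behind $\hat{\C}\preccurlyeq\C^\prime$, are details the paper leaves implicit but add nothing beyond bookkeeping.
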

\begin{proof}
Algorithm \ref{alg.VerifyUncertain} outputs $\mathfrak{b}_{\C,\Sigma} =$ true if and only if Algorithm \ref{alg.VerifyCertainIota}, when applied on the nominal system $\hat{\Sigma}$ and the robustified contract $\hat{\C}_\epsilon$, outputs $\mathfrak{b}_{\hat\C_\epsilon,\hat\Sigma} =$ true. In that case, Theorem \ref{thm.CertainAlgCorrectness} implies that $\hat{\Sigma} \sat \hat{\C}_\epsilon$, hence $\hat{\Sigma} \sat \C^\prime$ as $\hat{\C}_\epsilon \preccurlyeq \hat{\C} \preccurlyeq \C^\prime$. Thus, Theorem \ref{thm.UncertainEquiv} implies that $\Sigma \sat \C$.
\end{proof}

We now study the assumptions of Algorithm \ref{alg.VerifyUncertain}, claiming they are not too strict.
\begin{thm}[Generality of Assumptions]\label{thm.AssumptionTau}
Suppose the assumptions of Theorem \ref{thm.UncertainEquiv} hold. Then:
\begin{itemize}
    \item There exists $N_0 \in \N$ such that $\|A^{N_0}\|<1$ if and only if $A$ is a strictly stable matrix, i.e., all of its eigenvalues are inside the open unit disc in the complex plane.
    \item Suppose that $A$ is not strictly stable, that $0\in \RR$, and that the set $\mathcal{P}$ contains a neighborhood of the origin. Suppose further that $E$ has full row rank and that the image of $T^\top$ is not contained within the stable subspace of $A$. Moreover, assume that for some $d_0,d_1,\ldots,d_m$, the following set is bounded and non-empty:
    \begin{align*}
        Q = \left\{(y_0,\ldots,y_m) : \sum_{r=0}^m \mathfrak G^r \begin{bmatrix} d_r \\ y_r \end{bmatrix} \le \mathfrak g^0\right\}.
    \end{align*}
    Then $\Sigma \not\sat \C$.    
\end{itemize}
\end{thm}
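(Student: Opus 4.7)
\textbf{Part 1} is a standard spectral radius fact. By Gelfand's formula, $\rho(A)=\lim_{k\to\infty}\|A^k\|^{1/k}$, so if $A$ is strictly stable (i.e.\ $\rho(A)<1$), then $\|A^{N_0}\|<1$ for all sufficiently large $N_0$. Conversely, $\|A^{N_0}\|<1$ yields $\rho(A)^{N_0}=\rho(A^{N_0})\le \|A^{N_0}\|<1$, so $\rho(A)<1$.

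\textbf{Part 2} is proved by appealing to Theorem \ref{thm.UncertainEquiv}: it suffices to exhibit $d\in \D$ and $\hat{y}\in \hat{\Sigma}(d)$ with $(d,\hat{y})\not\in \OO^\prime$, since this implies $\hat\Sigma \not\sat \C^\prime$ and hence $\Sigma \not\sat \C$. The crux is to show that some coordinate of the robustification $\tau^k$ diverges to $+\infty$. Since $\mathcal{P}$ contains a ball $B_\delta(0)$ and $E$ has full row rank, we lower-bound
\begin{align*}
\tau^{\mathcal{P},{\rm m},\varsigma}_i \;\ge\; \delta\,\sigma_{\min}(E)\,\|(A^\top)^\varsigma T^\top {\rm e}_i\|.
\end{align*}
The hypothesis that ${\rm Im}(T^\top)$ is not contained in the stable subspace of $A$, combined with the standard duality between $A$-invariant and $A^\top$-invariant subspaces, implies that for some $i$ the vector $T^\top {\rm e}_i$ has a non-trivial component along the non-decaying invariant subspace of $A^\top$; since $A$ is not strictly stable (so, by Part 1, neither is $A^\top$), this forces $\sum_{\varsigma\ge 0}\|(A^\top)^\varsigma T^\top {\rm e}_i\|=\infty$, and hence $\tau^k_i\to +\infty$.

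To construct the violating trajectory, invoke the bounded-$Q$ assumption: fix $(d_0,\ldots,d_m)$ with $Q$ bounded and non-empty, use extendibility of $(\{\mathfrak{A}^r\}_{r=0}^m,\mathfrak{a}^0)$ to extend this window to a signal $d\in \D$ (e.g.\ the constant signal $d\equiv d_0$ when $(d_0,\ldots,d_0)\in \D_\star$), and pick any $\hat{x}(0)\in \X_0$, giving $\hat{y}\in \hat{\Sigma}(d)$. If already $(d,\hat{y})\not\in \OO$, then $(d,\hat{y})\not\in \OO^\prime$ (since $0\in \mathcal{P}$ gives $\tau^k\ge 0$, so $\OO^\prime\subseteq \OO$), and $(d,\hat{y})$ is also a zero-noise trajectory of $\Sigma$, proving $\Sigma\not\sat \C$. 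Otherwise each length-$(m+1)$ output window $(\hat{y}(k-m),\ldots,\hat{y}(k))$ lies in the bounded set $Q$, so the quantity ${\rm e}_i^\top \sum_{r=0}^m \mathfrak{G}^r [d(k-m+r)^\top, \hat{y}(k-m+r)^\top]^\top$ remains bounded below by some constant $L_{\min}$; choosing $k$ with $\tau^k_i > \mathfrak{g}^0_i - L_{\min}$ then violates the $i$-th inequality of $\OO^\prime$ at time $k$, yielding $\hat{\Sigma}\not\sat \C^\prime$ and hence $\Sigma \not\sat \C$. The main obstacle lies in the duality argument for non-decay of $(A^\top)^\varsigma T^\top {\rm e}_i$---interpreting the hypothesis on ${\rm Im}(T^\top)$ in terms of the correct left/right invariant decomposition---together with verifying that the $(d_0,\ldots,d_m)$ supplied by the bounded-$Q$ assumption really extends to an admissible signal in $\D$ that produces a uniformly bounded nominal output window.
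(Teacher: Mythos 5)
Your overall skeleton matches the paper's: show that some coordinate $\tau^k_i$ of the robustification diverges, use the boundedness of $Q$ to conclude that the robustified guarantee defining $\OO^\prime$ becomes unsatisfiable at large times, and invoke Theorem \ref{thm.UncertainEquiv}. However, the step you defer as ``the main obstacle'' is precisely the step that carries the content of the proof, and the route you sketch for it fails as stated. You want to deduce $\liminf_{\varsigma}\|(A^\top)^\varsigma T^\top {\rm e}_i\|>0$ from ``$T^\top {\rm e}_i$ is not in the stable subspace of $A$.'' But the decay of $(A^\top)^\varsigma z$ is governed by the position of $z$ relative to the stable subspace of $A^\top$ (equivalently, the orthogonal complement of the antistable subspace of $A$), not relative to the stable subspace of $A$; these are different subspaces in general. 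Concretely, for $A=\left[\begin{smallmatrix}2&0\\1&0\end{smallmatrix}\right]$ and $z=(1,-2)^\top$, the stable subspace of $A$ (span of generalized eigenvectors with $|\lambda|<1$) is ${\rm span}\{(0,1)^\top\}$, so $z$ lies outside it, yet $A^\top z=0$ and hence $\|(A^\top)^\varsigma z\|=0$ for all $\varsigma\ge 1$; your lower bound on $\tau_i^{\PP,{\rm m},\varsigma}$ collapses to zero. The paper resolves this in its Lemma \ref{lem.2} by reading the hypothesis as the existence of an eigenvector $v$ of $A$ with $|\lambda|\ge 1$ and $v^\top T^\top{\rm e}_i\neq 0$: testing against $\kappa=\pm v_R,\pm v_I$ (real and imaginary parts of $v$) gives $\max_{\|\kappa\|\le 1}{\rm e}_i^\top T A^\varsigma\kappa\ge |\lambda^\varsigma v^\top T^\top{\rm e}_i|/\sqrt{2}\ge |v^\top T^\top{\rm e}_i|/\sqrt{2}>0$ uniformly in $\varsigma$. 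You need either to adopt that reading of the hypothesis or to carry out the non-decay argument with respect to the correct ($A^\top$-invariant) decomposition; as written, the implication you rely on is false.

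A secondary gap: in your ``otherwise'' branch you place the output window at time $k$ in the set $Q$, but $Q$ is defined with the \emph{fixed} window $(d_0,\ldots,d_m)$ from the hypothesis, whereas the trajectory's window at time $k$ is $(d(k-m),\ldots,d(k))$, which need not coincide with it (and the constant extension $d\equiv d_0$ need not be admissible). The paper closes this by showing (Lemma \ref{lem.3}) that compactness of $Q$ depends only on the matrices $\mathfrak G^r_y$ on the left-hand side, hence holds uniformly over all input windows, and then (Lemma \ref{lem.4}) that subtracting a sufficiently large $\tau^k$ from the right-hand side makes the guarantee polytope empty, so that \emph{no} output can satisfy $\OO^\prime$ at large $k$. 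Your ``bounded below by $L_{\min}$'' argument is an acceptable substitute only once this uniformity over input windows is established.
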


The first claim implies the algorithm is applicable for strictly stable systems, and the second shows that systems which are not strictly stable cannot satisfy compact specifications, at least generically (as the matrix $T$ depends on the constraints). 

\begin{proof}
We prove the claims in order. First, we denote the spectral radius of the matrix $A$ by $\rho(A) = \max\{|\lambda|: \exists v\neq 0, Av=\lambda v\}$. This is the maximum absolute value of an eigenvalue of $A$. By definition, $A$ is strictly stable if and only if $\rho(A)<1$. Moreover,  Gelfand's formula states that $\lim_{n\to \infty} \|A^n\|^{1/n} = \inf_{n\ge 1} \|A^n\|^{1/n}= \rho(A)$ \cite{Lax2002}. Thus, there exists some $N_0 \in \N$ such that $\|A^{N_0}\|<1$ if and only if $\rho(A) < 1$, as claimed.


The proof of the second claim is relegated to the appendix, as it is a bit more involved. We will, however, give a sketch of the proof here. First, we show that most entries of the vector $\tau^k$ grow arbitrarily large as $k$ grows to infinity. Namely, we show that the $i$-th entry grows arbitrarily large if $T^\top {\rm e}_i$ is outside the stable subspace of $A$. In the second stage, we use this to show that the inequality defining the set $\Omega^\prime$ in \eqref{eq.Cprime} defines an empty set if $k$ is large enough. We will then conclude from Theorem \ref{thm.UncertainEquiv} that $\Sigma \not \sat \C$.
\end{proof}

Next, we study the algorithm's approximation properties:
\begin{thm} \label{thm.epsTau}
Suppose that the assumptions of Theorem \ref{thm.UncertainEquiv} hold. Let $n,\ell\in \mathbb{N}$ such that $n\ge \ell \ge m-1$, and let $\epsilon > 0$. We denote the problems \eqref{eq.Prob_np_Lin} associated with $\hat \Sigma \sat \hat \C$ and $\hat \Sigma \sat \hat \C_\epsilon$ by and $V_{n,\ell}$ and $V_{n,\ell|\epsilon}$ respectively, and their values by $\theta_{n,\ell}$ and $\theta_{n,\ell|\epsilon}$. If $\theta_{n,\ell|\epsilon} > \epsilon$, then $\theta_{n,\ell} > 0$. In particular, Algorithm \ref{alg.VerifyCertainIota} would declare that $\hat\Sigma\not\sat \hat\C$.
\end{thm}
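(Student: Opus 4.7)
The plan is to compare the two linear programs $V_{n,\ell}$ and $V_{n,\ell|\epsilon}$ directly, noting that they differ only by the substitution $\mathfrak g^0 \mapsto \mathfrak g^0 - \tau^\infty$ versus $\mathfrak g^0 \mapsto \mathfrak g^0 - \tau^\epsilon$, appearing both in the guarantee constraints at times $m+p,\ldots,n$ and in the cost term at time $n+1$. Everything reduces to a uniform entry-wise bound on $\tau^\epsilon - \tau^\infty$.

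First I would record the two-sided inequality $0 \le \tau^\epsilon_i - \tau^\infty_i \le \epsilon$ for every coordinate $i$. The lower bound is exactly the refinement $\hat{\C}_\epsilon \preccurlyeq \hat{\C}$ proved in Proposition \ref{prop.Taus}, which rested on the tail estimate $\sum_{\varsigma \ge N(\epsilon,i)} \tau^{\mathcal{P},\mathrm{m},\varsigma}_i \le \epsilon$. The upper bound follows by rewriting $\tau^\epsilon_i - \tau^\infty_i = \epsilon - \sum_{\varsigma \ge N(\epsilon,i)} \tau^{\mathcal{P},\mathrm{m},\varsigma}_i$ and using the nonnegativity of each $\tau^{\mathcal{P},\mathrm{m},\varsigma}_i$, which is guaranteed as soon as $0 \in \mathcal{P}$, the condition implicit throughout Proposition \ref{prop.Taus}.

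With these bounds in hand, the comparison is essentially algebraic. Since $\tau^\epsilon \ge \tau^\infty$ coordinate-wise, every constraint of $V_{n,\ell|\epsilon}$ is a tightening of the corresponding constraint of $V_{n,\ell}$, so any $V_{n,\ell|\epsilon}$-feasible triple $(d^\star, x^\star, y^\star)$ is also $V_{n,\ell}$-feasible. Letting $c_i$ denote the $i$-th entry of $\sum_{r=0}^m \mathfrak G^r \left[\begin{smallmatrix} d^\star_{n+1-m+r} \\ y^\star_{n+1-m+r} \end{smallmatrix}\right] - \mathfrak g^0$, the $V_{n,\ell|\epsilon}$-cost at such a point equals $\max_i(c_i + \mathrm{e}_i^\top \tau^\epsilon)$ while the $V_{n,\ell}$-cost equals $\max_i(c_i + \mathrm{e}_i^\top \tau^\infty)$. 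The uniform inequality $\mathrm{e}_i^\top(\tau^\epsilon - \tau^\infty) \le \epsilon$ gives $c_i + \mathrm{e}_i^\top \tau^\epsilon \le c_i + \mathrm{e}_i^\top \tau^\infty + \epsilon$ for every $i$, and taking maxima yields the key estimate $\mathrm{cost}_{V_{n,\ell|\epsilon}}(d^\star, y^\star) \le \mathrm{cost}_{V_{n,\ell}}(d^\star, y^\star) + \epsilon$.

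Assuming $\theta_{n,\ell|\epsilon} > \epsilon$, I would pick a $V_{n,\ell|\epsilon}$-feasible solution whose objective strictly exceeds $\epsilon$; by the previous display its $V_{n,\ell}$-objective is strictly positive, which proves $\theta_{n,\ell} > 0$. The ``in particular'' clause then follows because Algorithm \ref{alg.VerifyCertainIota} returns true only when every $\theta_{n,\ell}$ it evaluates is non-positive, so a single strictly positive value in its prescribed index set forces a false output. I do not anticipate any substantive obstacle; the only delicate point is the two-sided control on $\tau^\epsilon - \tau^\infty$, where the upper bound hinges on the nonnegativity $\tau^{\mathcal{P},\mathrm{m},\varsigma}_i \ge 0$ furnished by $0 \in \mathcal{P}$.
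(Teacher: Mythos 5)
Your proposal is correct and follows essentially the same route as the paper's proof: both rest on the two-sided bound $\tau^{\epsilon}_i - \epsilon \le \tau^{\infty}_i \le \tau^{\epsilon}_i$, the feasible-set inclusion $\mathbb{X}_{n,\ell|\epsilon} \subseteq \mathbb{X}_{n,\ell}$, and the observation that the two cost functions differ by at most $\epsilon$, yielding $\theta_{n,\ell} \ge \theta_{n,\ell|\epsilon} - \epsilon > 0$. Your explicit justification of the lower bound via the nonnegativity of $\tau^{\mathcal{P},\mathrm{m},\varsigma}_i$ (from $0\in\mathcal{P}$) is a detail the paper leaves implicit, but the argument is the same.
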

In other words, the parameter $\epsilon$ serves as a tunable conservatism parameter for the approximation $\hat{C}_\epsilon \preccurlyeq \hat{\C}$.
\begin{proof}
We let $\mathbb{X}_{n,\ell}$ denote the feasible set of $V_{n,\ell}$, and $\mathbb{X}_{n,\ell|\epsilon}$ denote the feasible set of $V_{n,\ell|\epsilon}$. By construction,  $\tau^{\epsilon}_i - \epsilon\le \tau^\infty_i \le \tau^{\epsilon}_i$ holds for every $i$. 
Thus, by definition of the contracts $\hat\C,\hat{\C}_\epsilon$, we conclude that $\mathbb{X}_{n,\ell} \supseteq \mathbb{X}_{n,\ell|\epsilon}$, as the constraints corresponding to the assumptions and the dynamics are identical, but the constraints corresponding to the guarantees are stricter.
Moreover, fixing some index $i$ in the cost function, the two problems ${V}_{n,\ell},{V}_{n,\ell|\epsilon}$ have the same cost function up to a constant, equal to $\tau^\infty_i - \tau^{\epsilon}_i$.

Choose an index $i$ such that at the optimal solution of ${V}_{n,\ell|\epsilon}$, the maximum of the cost function is attained at the index $i$. As both ${V}_{n,\ell|\epsilon}$ and $V_{n,\ell}$ are maximization problems, we yield:
\begin{align*}
    {\theta}_{n,\ell} \ge {\theta}_{n,\ell|\epsilon} + \tau^\infty_i - \tau^{\epsilon}_i \ge
    {\theta}_{n,\ell|\epsilon} - \epsilon > 0
\end{align*}
as claimed.
\end{proof}

Lastly, we shed light on the computational complexity of the algorithm. As before, we denote the depth of the contract $\C$ as $m$, and the observability index of the noiseless system $\hat{\Sigma}$ by $\nu$. The algorithm revolves around solving optimization problems of three different kinds:
\begin{itemize}
    \item[i)] Solving the linear programs determining whether $\hat{\Sigma} \sat \hat{\C}_\epsilon$. There are a total of $\max\{\nu,m\}+1$ linear programs, of dimension at most $(n_d+n_y+n_x)(\max\{\nu,m\}+1)$.
    \item[ii)] Solving $M_\PP = \max_{\omega\in \mathcal{P}}\|\omega\|$ to compute $N(\epsilon,i)$. 
    \item[iii)] Solving the optimization problems in \eqref{eq.taus}. We need to solve a total of $\sum_{i} (N(\epsilon,i)+2m+1)$ problems.
\end{itemize}
Solving the optimization problem (i) can be done very quickly using off-the-shelf optimization software, e.g., Yalmip \cite{Lofberg2004}. The tractability of the problems (ii) and (iii) depends on the exact form of $\mathcal{P},\mathcal{R}$. However, solving them is much more simple than solving \eqref{eq.Vnp_noise} for four main reasons:

First, these problems consider a single instance of $\omega$ or $\zeta$ at any given time, meaning that they are of a significantly lower dimension than \eqref{eq.Vnp_noise}, and they include far less constraints.

Second, the cost functions of these maximization problems are convex, meaning that the maximum is achieved on an extreme point of the set $\mathcal{P}$ or $\mathcal{R}$ \cite[Theorem 32.2]{Rockafellar1970}. Thus, even if the sets $\mathcal{P},\mathcal{R}$ are not convex, we can replace them by their convex hulls without changing the value of the problem. In other words, the convex relaxations of these optimization problems have the same value as the original problems.

Third, even if the sets $\mathcal{P},\mathcal{R}$ (or their convex hulls) are not defined using linear or quadratic inequalities, so standard LP and quadratic programming methods cannot be used, we can still use gradient-based, duality-based or interior-point-based methods. These methods will converge much faster for the optimization problems (ii) and (iii) than for the problem \eqref{eq.Vnp_noise}, due to the reduced dimension. 

Lastly, the simplicity of the optimization problems (ii) and (iii) allows one to give closed-form formulae for the solution if $\mathcal{P},\mathcal{R}$ are described using simple terms, thus eliminating the need for a numerical solution of the problems. Indeed, the following proposition gives closed-form solution to the optimization problems appearing in \eqref{eq.taus} and in Proposition \ref{prop.Taus}:
\begin{prop} \label{prop.TauExplicit}
Consider a set $\mathcal{H} \subseteq \R^{q}$. We take a vector $b \in \R^{q}$, and define $M_b = \max_{z\in \mathcal{H}} b^\top z$ and $M_{\|} = \max_{z\in \mathcal{H}} \|z\|$.
\begin{itemize}
    \item If $\mathcal{H} = \{z: z^\top H z \le \gamma^2\}$ for some positive-definite matrix $H$ and $\gamma > 0$, then $M_b = \gamma \|H^{-1/2}b\|$ and $M_\| = \gamma \|H^{-1/2}\|$.
    \item If $\mathcal{H}$ is a bounded polyhedral set given in vertex representation, $\mathcal{H} = \{F\lambda : \mathds{1}^\top \lambda = 1, \lambda \ge 0\}$, then $M_b = \max_i {\rm e}_i^\top F^\top b$ and $M_\| = \max_i \|F{\rm e}_i\|$
\end{itemize}
\end{prop}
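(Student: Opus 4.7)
My plan is to treat the two bullet points separately, since each corresponds to a standard optimization primitive. In both cases the key observations are (i) a linear functional over a convex set attains its maximum on the boundary, and (ii) a convex function (like $\|z\|$) over a compact convex set attains its maximum at an extreme point.

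For the ellipsoidal case $\mathcal{H} = \{z : z^\top H z \le \gamma^2\}$, I would perform the change of variables $w = H^{1/2} z$, so that $\mathcal{H}$ is mapped bijectively to the Euclidean ball $\{w : \|w\| \le \gamma\}$. Then $b^\top z = (H^{-1/2}b)^\top w$, and Cauchy--Schwarz gives $M_b = \gamma \|H^{-1/2} b\|$, with the maximizer $w = \gamma H^{-1/2}b/\|H^{-1/2}b\|$. For $M_\|$, I write $\|z\| = \|H^{-1/2} w\|$ and maximize over $\|w\| \le \gamma$; by definition of the operator norm this is exactly $\gamma \|H^{-1/2}\|$, attained by taking $w$ to be $\gamma$ times a right singular vector of $H^{-1/2}$ corresponding to its largest singular value.

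For the polyhedral case, note that $\mathcal{H}$ is (by the definition given) the convex hull of the columns $F{\rm e}_i$ of $F$. Any $z \in \mathcal{H}$ has the form $z = \sum_i \lambda_i F{\rm e}_i$ with $\lambda_i \ge 0$ and $\sum_i \lambda_i = 1$. For the linear objective, $b^\top z = \sum_i \lambda_i (b^\top F {\rm e}_i)$ is a convex combination, hence bounded above by $\max_i {\rm e}_i^\top F^\top b$, with equality when $\lambda$ concentrates on a maximizing index. For $M_\|$, I invoke convexity of the norm: $\|z\| = \|\sum_i \lambda_i F {\rm e}_i\| \le \sum_i \lambda_i \|F {\rm e}_i\| \le \max_i \|F {\rm e}_i\|$, with equality when $\lambda$ concentrates on an index attaining the max.

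I do not expect any genuine obstacle here; the proof is essentially a routine application of Cauchy--Schwarz and the extreme-point principle. The one minor subtlety worth stating carefully is that $H$ being positive definite ensures $H^{1/2}$ and $H^{-1/2}$ are well defined and that the change of variables is a bijection, so that no maximizer is lost. Everything else is an immediate consequence of standard convex-analytic facts (cf.\ \cite{Rockafellar1970}, which is already cited in the paragraph preceding the proposition).
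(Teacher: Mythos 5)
Your proposal is correct and takes essentially the same route as the paper: the identical change of variables reducing the ellipsoid to a Euclidean ball (followed by Cauchy--Schwarz and the operator-norm definition), and the extreme-point principle for the polytope, which you spell out via convex combinations while the paper simply cites \cite[Theorem 32.2]{Rockafellar1970}. No gaps.
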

\begin{proof}
For the first case, we note that $z^\top H z \le \gamma^2$ if and only if $\|v\| \le \gamma$, where $z = H^{-1/2}v$. Thus:
\begin{align*}
    M_b &=  \gamma\max_{\|v\|\le 1} (H^{-1/2}b)^\top v = \gamma \|H^{-1/2}b\|,\\
    M_\| &=  \gamma\max_{\|v\|\le 1} \|H^{-1/2}v\|= \gamma \|H^{-1/2}\|
\end{align*}
For the second case, the result follows from the fact that the maximum of a convex function on a bounded polyhedral set is attained at one of its vertices \cite[Theorem 32.2]{Rockafellar1970}.
\end{proof}

We make one last remark about the number $N(\epsilon,i)$, which dictates the number of problems \eqref{eq.taus} we have to solve.
\begin{rem}
In Algorithm \ref{alg.VerifyUncertain}, we compute $N(\epsilon,i)$ using \eqref{eq.NeiStableA}, which depends on a number $N_0$ such that $\|A^{N_0}\| < 1$. First, the number $N(\epsilon,i)$ depends logarithmically on $1/\epsilon$, meaning that the algorithm is computationally tractable even for extremely small values of $\epsilon$. Second, if $A$ is strictly stable, then there exist infinitely many $N_0$ such that $\|A^{N_0}\| < 1$. Moreover, $N(\epsilon,i) \ge N_0$ holds by definition. Thus, we can iterate over different values of $N_0$ to find the smallest possible value of $N(\epsilon,i)$ for fixed $\epsilon$ and $i$. See Algorithm \ref{alg.ComputeNepsi} for details. 
\end{rem}

\begin{algorithm} [h]
\caption{Computing the Optimal Threshold $N(\epsilon,i)$}
\label{alg.ComputeNepsi}
{\bf Input:} A stable matrix $A$, a matrix $C$, matrices $\{\mathfrak G_y^r\}_{r=0}^m$, a perturbation set $\mathcal{P}$, and a parameter $\epsilon > 0$\\
{\bf Output:} An optimal value of $N(\epsilon,i)$.
\begin{algorithmic}[1]
\State Compute $T = \sum_{r=0}^m \mathfrak G_y^r C A^r$ and $M_\PP = \max_{\omega \in \mathcal{P}} \|\omega\|$.
\State Put $N_0 = 1$, $N^{\epsilon,i}_{\rm opt} = \infty$, and $K_{A,N_0} = 0$.
\While{$N_0 \le N^{\epsilon,i}_{\rm opt}$}
\State Add $\|A^{N_0 - 1}\|$ to the value of $K_{A,N_0}$.
\If{$\|A^{N_0}\| < 1$}
\State Compute $N(\epsilon,i)$ according to \eqref{eq.NeiStableA}.
\State Assign the value $\min\{N^{\epsilon,i}_{\rm opt},N(\epsilon,i)\}$ to $N^{\epsilon,i}_{\rm opt}$.
\EndIf
\State Assign the value $N_0+1$ to $N_0$.
\EndWhile
\State {\bf return} $N_{\rm opt}^{\epsilon,i}$
\end{algorithmic}
\end{algorithm}

\section{Numerical Examples} \label{sec.CaseStudy}
In this section, we apply the presented verification algorithm in two case studies. The first deals with a two-vehicle autonomous driving scenario, and the second deals with formation control for multi-agent systems. 

\subsection{Two-Vehicle Leader-Follower system}
We consider two vehicles driving along a single-lane highway, as in Fig. \ref{fig.LeaderFollower}. We are given a headway $h>0$, and our goal is to verify that the follower keeps at least the given headway from the leader. Denoting the position and velocity of the follower as $p_f(k)$, $v_f(k)$, and the position and velocity of the leader as $p_l(k),v_l(k)$, the follower vehicle keeps the headway if and only if $p_f(k) - p_l(k) - hv_l(k) \ge 0$ holds at any time $k\in \N$. This scenario has been studied in \cite{SharfADHS2020} where the follower is assumed to have a known and unperturbed model. Here, we instead consider the same scenario for a follower with a perturbed model, affected by process noise.

\begin{figure}[b]
    \centering
    \includegraphics[width = 0.5\textwidth]{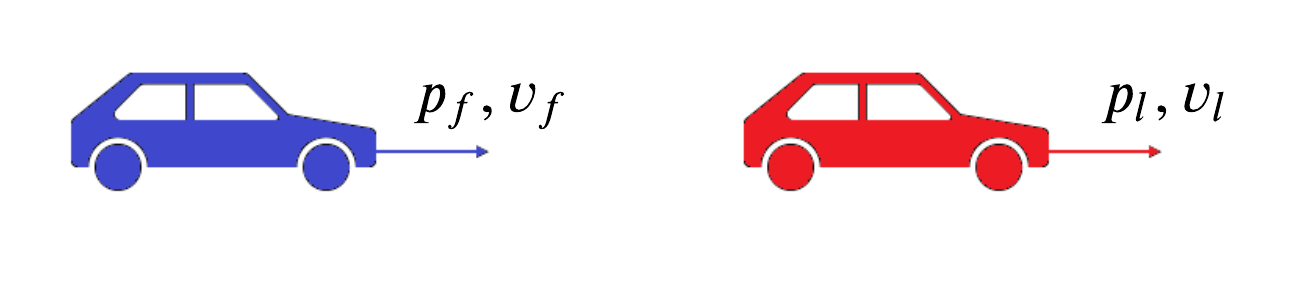}
    \caption{Two vehicles on a single-lane highway.}
    \label{fig.LeaderFollower}
\end{figure}

We start by explicitly stating the contract on the follower. The input to the follower includes the position and velocity of the leader, i.e., $d(k) = [p_l(k),v_l(k)]^\top$. The output from the follower includes its position and velocity, i.e., $y(k) = [p_f(k),v_f(k)]^\top$. 
For assumptions on the input, we assume the leader vehicle follows the kinematic laws with a bound on the acceleration, i.e., for any time $k$,
\begin{align*}
&p_l(k+1) = p_l(k) + \Delta t v_l(k),~v_l(k+1) = v_l(k) + \Delta t a_l(k),~\\
&a_l(k) \in [-a_{\rm min},a_{\rm max}],
\end{align*}
where $a_l(k)$ is the acceleration of of the leader vehicle at time $k$, and $\Delta t>0$ is the length of the discrete time-step. For guarantees, we specify that the headway is kept, i.e., that  $p_l(k) - p_f(k) - hv_f(k) \ge 0$ holds for any $k\in \N$. These specifications define a linear time-invariant contract $\C$ of depth $m=1$, defined using the following matrices and vectors:
\begin{align*}
&\mathfrak A^1 = \begin{bmatrix} 1 & 0 \\ -1 & 0 \\ 0 & 1 \\ 0 & -1 \end{bmatrix},~
\mathfrak A^0 = \begin{bmatrix} -1 & -\Delta t \\ 1 & \Delta t \\ 0 & -1 \\ 0 & 1  \end{bmatrix},~
\mathfrak a^0 = \begin{bmatrix} 0 \\ 0 \\ \Delta t a_{\rm max} \\ \Delta t a_{\rm min}\end{bmatrix}, \\
&\mathfrak G^1 = \begin{bmatrix} 0 & 0 & 0 & 0\end{bmatrix},~
\mathfrak G^0 = \begin{bmatrix} -1 & 0 & 1 & h\end{bmatrix},~ 
\mathfrak g^0 = [0].
\end{align*}

We now describe the dynamical control system governing the follower vehicle. The state of the follower includes only the position and the velocity, $x(k) = [p_f(k),v_f(k)]^\top$, meaning that the system has a state-observation, i.e., $y(k)=x(k)$. We assume that the state evolves according to the kinematic laws:
\begin{align*}
    p_f(k+1) &= p_f(k) + \Delta t v_f(k),~\\
    v_f(k+1) &= v_f(k) + \Delta t a_f(k) + \omega(k),
\end{align*}
where $a_f(k)$ is the acceleration of the follower, and $\omega(k)$ is the process noise, which can be understood as the aggregation of exogenous forces acting upon the vehicle, e.g., wind, drag, and friction. The acceleration of the follower is taken according to the following control law:
\begin{align*}
{
    a_f(k) = \frac{p_l(k)-p_f(k) - hv_f(k)}{h\Delta t} + \frac{v_l(k) - v_f(k)}{h} - 1_{\rm m/s^2},}
\end{align*}
{in which the acceleration is dictated by the current headway, the difference in speed between the vehicles, and a constant term added to enhance robustness.}
The closed-loop system is hence governed by:
\begin{align*}
    &x(k+1) = Ax(k) + Bd(k) + w + E\omega(k),~ \omega(k)\in \mathcal{P},~\\
    &y(k) = x(k),~\mathcal{P} = \{\omega\in \R: |\omega| \le \Phi\}
\end{align*}
where:
\begin{align*}
    A = \begin{bmatrix} 1 & \Delta t \\ -\frac{1}{h} & -\frac{\Delta t}{h} \end{bmatrix},~
    B = \begin{bmatrix} 0 & 0 \\ \frac{1}{h} & \frac{\Delta t}{h} \end{bmatrix},~E = \begin{bmatrix} 0 \\ 1 \end{bmatrix},~w = \begin{bmatrix}0 \\ \Delta t\end{bmatrix}
\end{align*}
As for initial conditions, we follow Remark \ref{rem.InitDepend} and choose the set of initial conditions depending on $d(0) = [p_l(0),v_l(0)]^\top$. Namely, we assume that the headway at time $k=0$ satisfies $p_l(0) - p_f(0) - hv_f(0) \ge 0.7$.

We want to prove that the follower satisfies the contract with the given assumption and guarantees for a specific choice of parameters, and we do so by running Algorithm \ref{alg.VerifyUncertain}. We choose the parameters $\Delta t = 0.3{\rm s}$, $h = 2{\rm s}$, $a_{\rm max} = a_{\rm min} = 9.8{\rm m/s^2}$, $\Phi = 29{\rm cm}$ and a conservatism parameter $\epsilon = 10^{-12}$. 

In order to run Algorithm \ref{alg.VerifyUncertain}, we first verify that $A$ is a strictly stable matrix. The eigenvalues of $A$ can be numerically computed to be $\lambda_1 = 0$ and $\lambda_{2} = 0.85$, and all are inside the open unit disc in the complex plane. Thus the assumptions of Algorithm \ref{alg.VerifyUncertain} hold. Running the algorithm, and using Algorithm~\ref{alg.ComputeNepsi} to compute the parameter $N(\epsilon,1)$\footnote{Note that here, the matrices $\mathfrak{G}^0,\mathfrak{G}^1$ only have one row, so we need to compute only a single parameter.} and Proposition \ref{prop.TauExplicit} to compute $\tau^\epsilon$, we find that $N(\epsilon,1) = 183$ and that $\tau^\epsilon$ is given by $\tau^\epsilon = 0.58$. As instructed by Algorithm \ref{alg.VerifyUncertain}, we now run Algorithm \ref{alg.VerifyCertainIota} for the system with no perturbation, i.e., the system given by the state-space representation defined by:
\begin{align*}
&x(k+1) = Ax(k) + Bd(k) + w,\\
&y(k) = x(k),\\
&p_l(0) - p_f(0) - hv_f(0) \ge 0.6,
\end{align*}
and the robustified contract $\hat{\Omega}_{\epsilon} = (\D,\hat\Omega_\epsilon)$, where the assumptions are given by $\mathfrak{A}^1,\mathfrak{A}^0,\mathfrak{a}^0$ and the guarantees are given by $\mathfrak{G}^1,\mathfrak{G}^0,\mathfrak{g}^0-\tau^\epsilon$. The observability index $\nu$ is equal to $1$ in this case, and the depth of the LTI contract $\hat{\C}_\epsilon$ is $m=1$. Thus, $\iota = \max\{1,1\}-1 = 0$, and we are required to solve a total of $\iota+2 = 2$ optimization problems, $V_{0,0}$ and $V_{1,0}$. We use MATLAB's internal solver, {\tt linprog}, to solve the linear programs, and find that $\theta_{0,0} = -0.12 < 0$ and that $\theta_{1,0} = -0.02<0$. Thus, we conclude using Proposition \ref{prop.Taus} that the perturbed system defining the follower satisfies the contract. We also report that the algorithm was run on a Dell Latitude 7400 computer with an Intel Core i5-8365U processor, and the total runtime was $0.15$ seconds.

We demonstrate the fact that the follower satisfies the contract by simulation. We consider the following trajectory of the leader - its initial speed is about $110{\rm km/h}$, which is roughly kept for 30 seconds. It then starts to sway wildly for 30 seconds between $20-30{\rm km/h}$ and $110{\rm km/h}$, braking and accelerating as hard as possible. Finally, it stops swaying and keeps its velocity for 30 more seconds. The velocity and acceleration of the leader can be seen in Fig. \ref{fig.LeaderSimulation}(a) and \ref{fig.LeaderSimulation}(b). In particular, the leader vehicle satisfies the assumptions of the contract. The follower starts $46{\rm m}$ behind the leader, at a speed of $80{\rm km/h}$, meaning that the requirement on the initial condition is satisfied. We simulate the follower system for two cases, the first is where the noise $\omega(k)$ is adversarial, choosing the worst case value at each time, and the second is where the noise $\omega(k)$ distributes uniformly across $\mathcal{P}$. The results of the simulation can be seen in Fig. \ref{fig.LeaderSimulation}(c)-(f). In particular, it can be seen that the headway in both cases is always at least $h=2{\rm s}$, i.e., the guarantees are satisfied. 

\begin{figure*}[t]
    \centering
    \subfigure[Velocity of leader] {\scalebox{.43}{\includegraphics{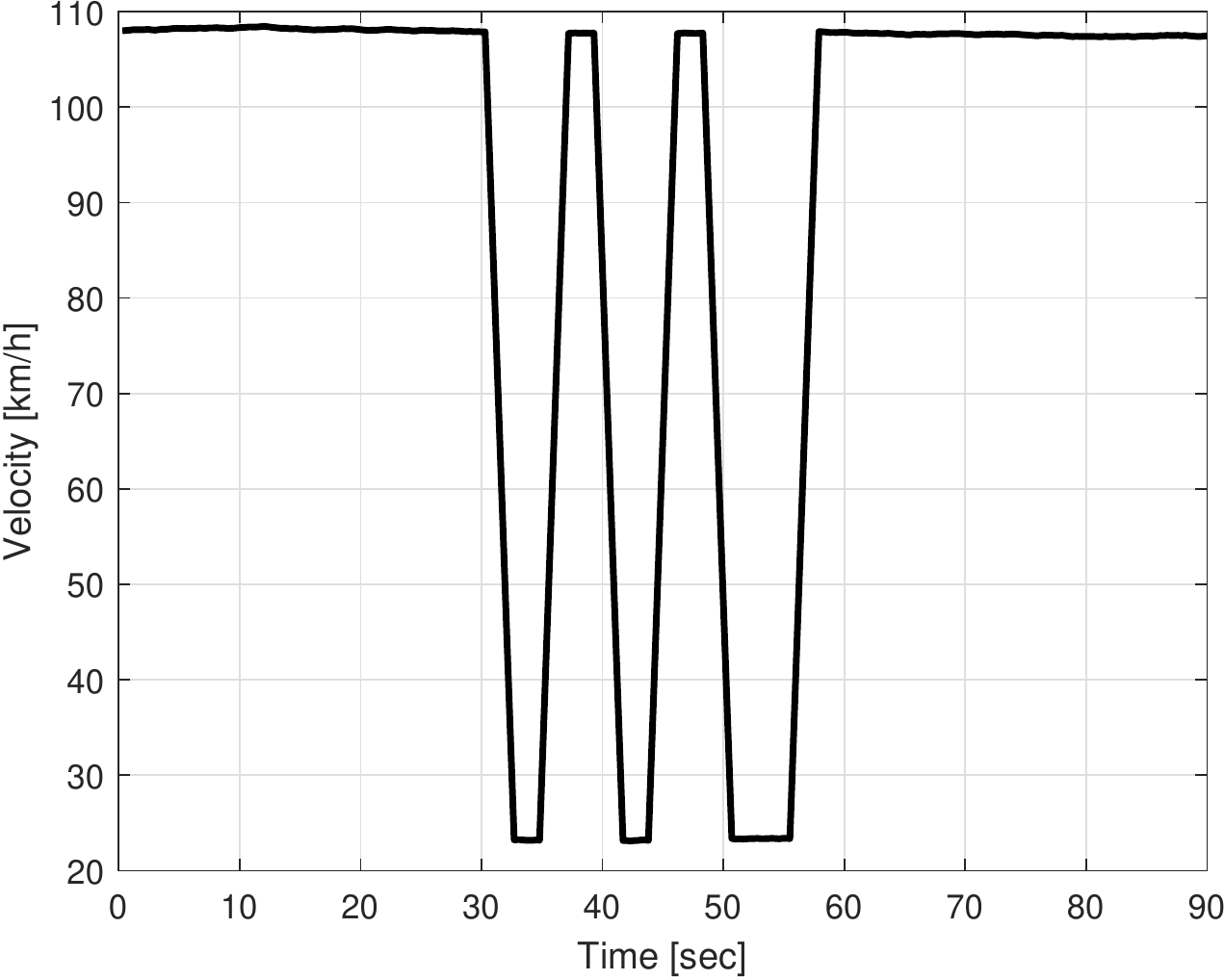}}} \hspace{.5cm}
    \subfigure[Acceleration of leader] {\scalebox{.43}{\includegraphics{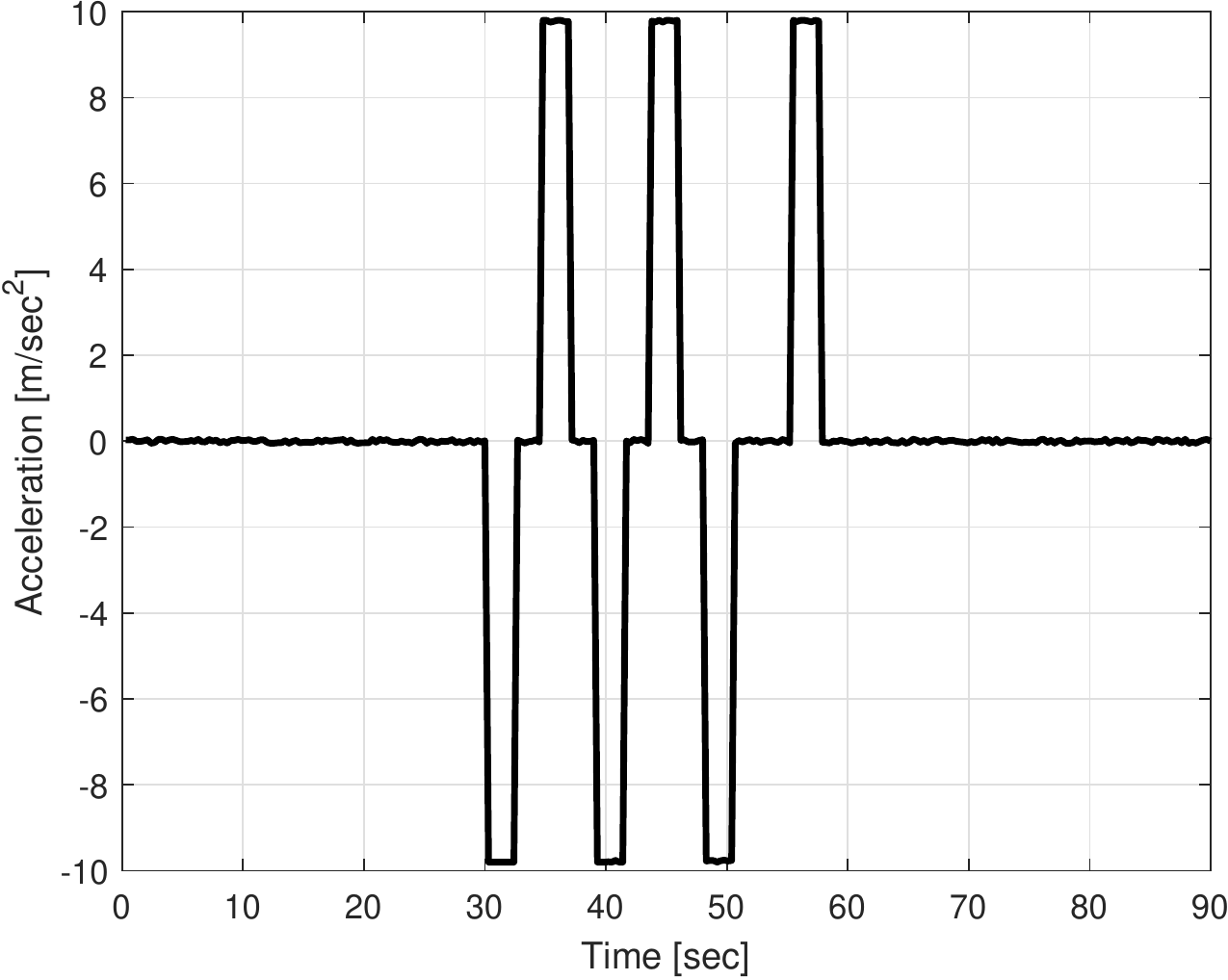}}}\hspace{.5cm}
    \subfigure[Headway] {\scalebox{.43}{\includegraphics{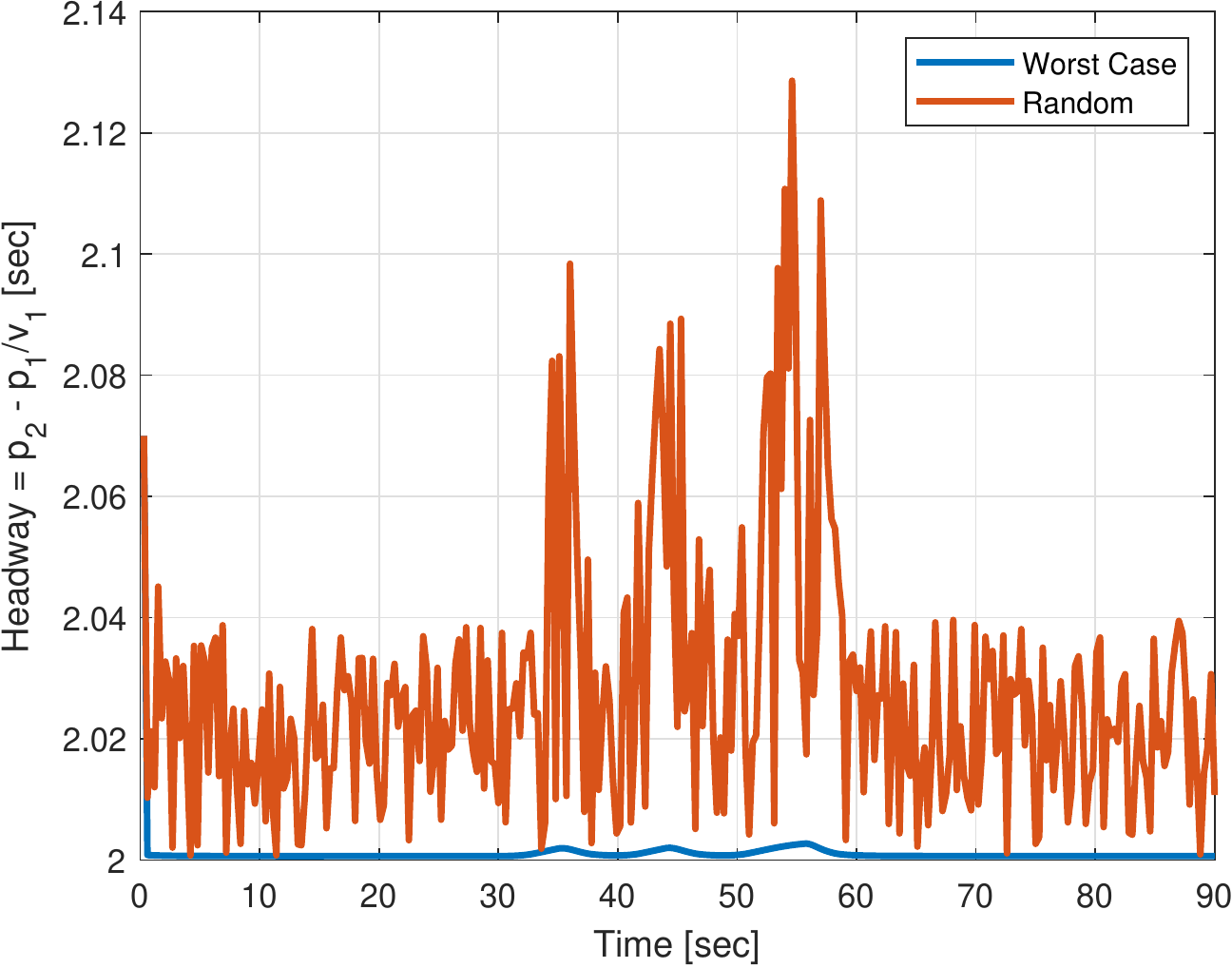}}}\hspace{2cm}
    \subfigure[Distance between the vehicles] {\scalebox{.43}{\includegraphics{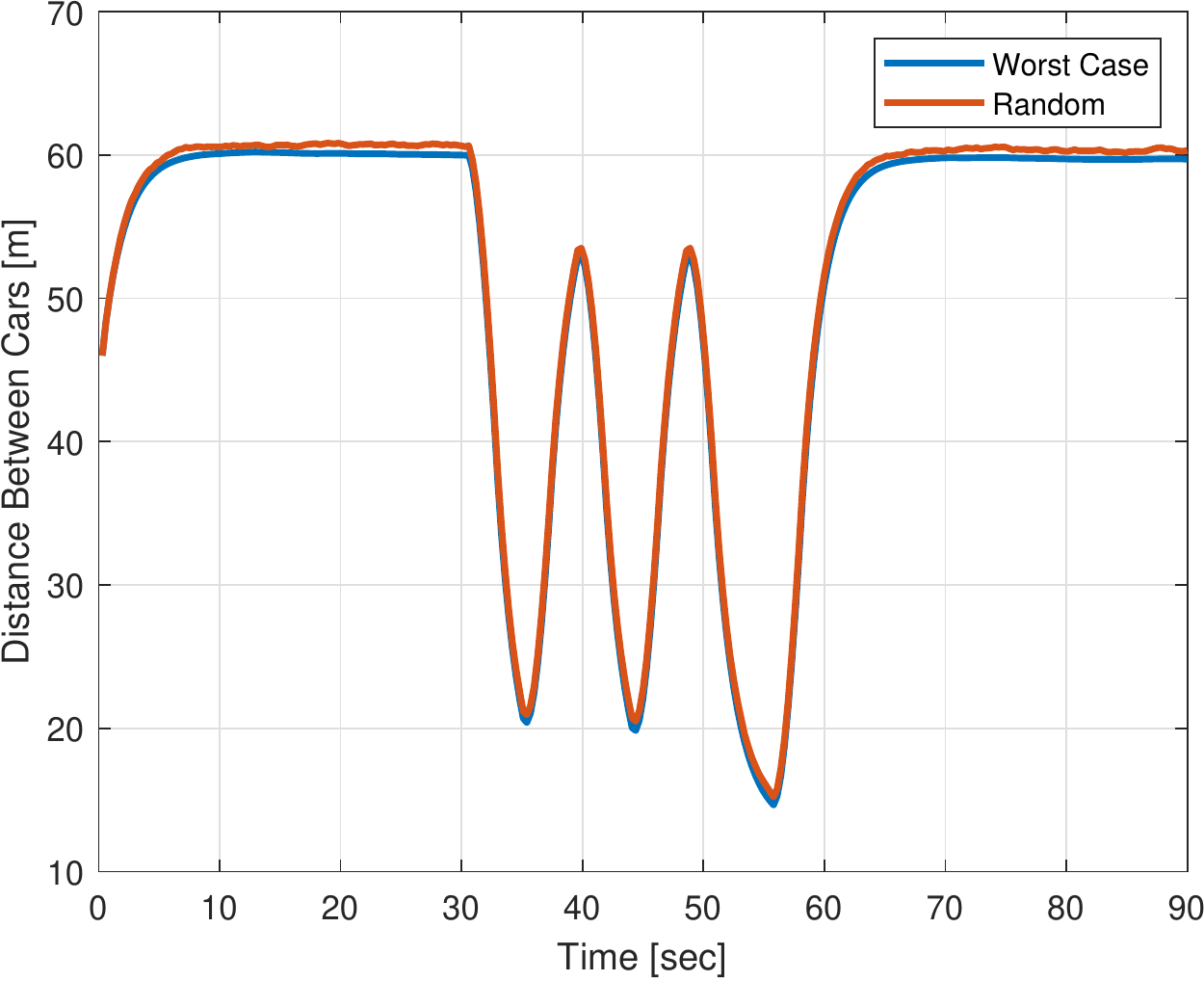}}}\hspace{.5cm}
    \subfigure[Velocity of follower] {\scalebox{.43}{\includegraphics{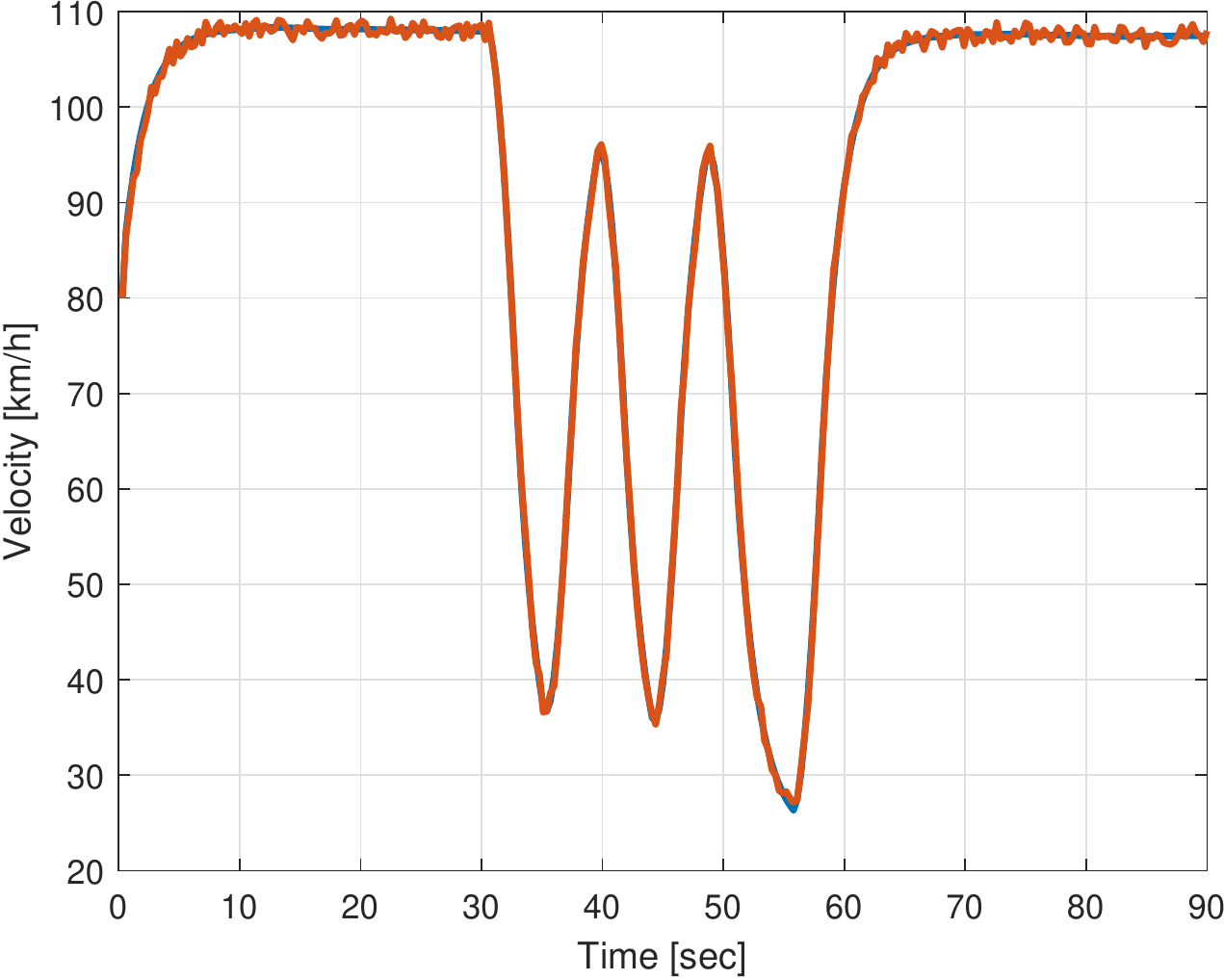}}}\hspace{.5cm}
    \subfigure[Acceleration $a_f(k)$ of follower, as dictated by the controller] {\scalebox{.43}{\includegraphics{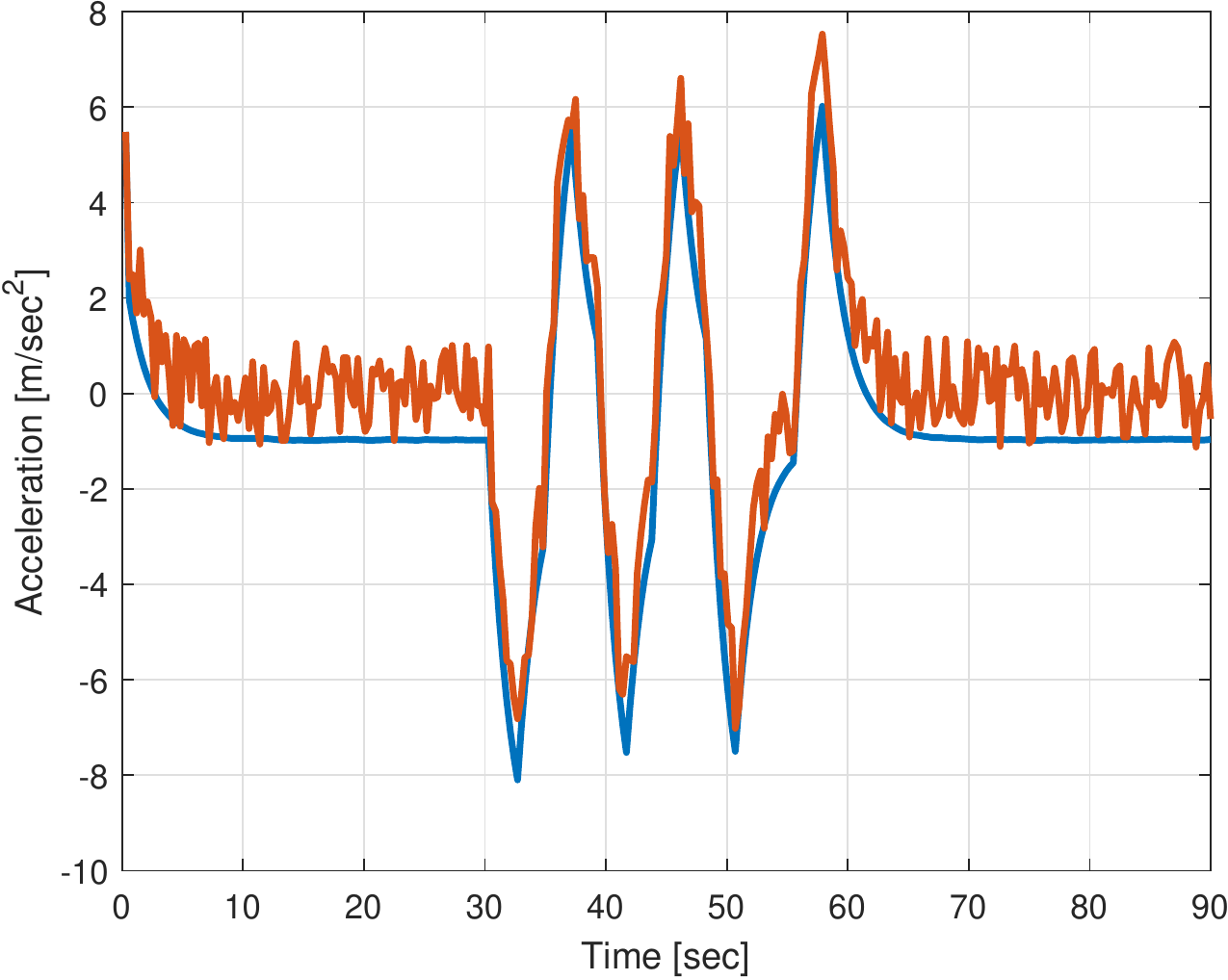}}}
    \caption{Simulation of the two-vehicle leader-follower system. The black plots correspond to the leader, the blue plots correspond to the follower with worst-case process noise, and the red plots correspond to the follower with random process noise.}
    \label{fig.LeaderSimulation}
\end{figure*}

\subsection{Formation Control for Double-Integrator Agents} \label{sec.DoubleInteg}
Formation control is a fundamental problem in the field of cooperative control, in which one tries to coordinate a collection of agents to achieve a certain spatial shape \cite{Oh2015}. This canonical problem has many versions depending on the sensing capabilities of the agents, as well as the desired degrees of freedom for the achieved shape. In all instances of the problem, the desired spatial formation is defined ``locally" by prescribing geometric constraints on each agent and agents adjacent to it, e.g., desired displacement \cite{Oh2015}, distance \cite{Oh2015}, or bearing \cite{Zhao2019}. The agents can then be maneuvered in space either by changing the geometric constraints, e.g., the desired displacement, or by assigning a few of the agents to be ``leaders", and having the other agents follow suit.

In this case study, we focus on displacement-based formation control for a directed network of double integrator agents. Our goal is to verify that a given multi-agent system satisfies a contract, in which the guarantees imply that it approximately reaches the correct spatial formation. Ideally, one would dissect this contract on the multi-agent system into smaller contracts on the individual agents. However, we run the verification process while treating the system as a monolithic entity, as our goal in this case study is to show that the methods we presented can work well even for high-dimensional systems.

We consider a network of $n_V$ $D$-dimensional agents. The system can be described using a directed graph $\mathcal{G} = (\mathcal{V},\mathcal{E})$, where the set of nodes $\mathcal{V}$ corresponds to the agents in the network, and the edges $\mathcal{E} \subset \mathcal{V} \times \mathcal{V}$ define the sensing relations between the agents. Specifically, for two nodes $i,j\in \mathcal{V}$, the edge $(i,j)$ belongs to $\mathcal{E}$ if and only if agent $i$ can measure the state of agent $j$. We let $n_E = |\mathcal{E}|$ be the number of edges in the graph. 

The state of the $i$-th agent is given by $[p_i,v_i]$, where $p_i \in \R^D$ is the position of the agent, and $v_i \in \R^D$ is its velocity. We choose one agent, denoted as $1 \in \mathcal{V}$, to be the leader node, so it will move independently from all other agents, which will follow it in space while trying to keep the desired spatial shape. The input to the system is then given by $d = [a_1,\delta]$ where $a_1 \in \R^D$ is the acceleration of the leader node, and $\delta \in \R^{n_ED}$ is a stacked vector consisting of the desired displacements. More precisely, for each edge $(i,j) \in \mathcal{E}$, the vector $\delta_{ji} \in \R^D$ is the desired relative displacement from the $j$-th agent to the $i$-th agent. The output from the system consists of the positions, relative to the leader, i.e., $y = (p_i - p_1)_{i\in \mathcal{V}, i\neq 1}$\footnote{We choose the output as the relative position to avoid strict stability issues later, as formation control protocols are invariant to translating all of the agents in the same direction and by the same amount.}. The guarantees we want to make are that the agents' displacements are close to the desired ones. Namely, we wish to guarantee that $-(\mu_{\rm err})_{ij} \le p_i(k) - p_j(k) - \delta_{ji}(k) \le (\mu_{\rm err})_{ij}$ holds at any time $k\in \N$, where $\mu_{\rm err} \in \R^{Dn_E}$ is a constant vector defining the allowable error for each pair $(i,j)\in \mathcal E$. The entries $(\mu_{\rm err})_{ij}$ of the vector $\mu_{\rm err}$ can be chosen arbitrarily. However, if the graph $\mathcal{G}$ is a directed acyclic graph with large diameter, it is advisable to take the entries of $\mu_{\rm err}$ as different from one another, due to string-stability-like phenomena \cite{Feng2019}. 

As for the assumptions, a reasonable assumption on $a_1 \in \R^D$ can bound the maximum acceleration and deceleration of the agent in each spatial direction, i.e., $a_i(k) \in [-a_{\rm min},a_{\rm max}]^D$. As for the desired displacements $(\delta_{ij})_{(i,j)\in\mathcal{E}}$, we make two assumptions. First, we assume that the desired displacements can only change by a bounded amount between time iterations. Namely, we assume that $\|\delta_{ij}(k+1) - \delta_{ij}(k)\|_\infty \le \mu_{\rm diff}$ for any $(i,j)\in \mathcal{E}$ and any time $k\in \N$, where $\|\cdot\|_\infty$ is the sup-norm. Moreover, we assume that the desired displacements $(\delta_{ij})_{(i,j)\in\mathcal{E}}$ are \emph{consistent} with one another, i.e., that there exists a configuration in space attaining these displacements. If we let $E \in \R^{n_V\times n_E}$ be the incidence matrix of the graph $\mathcal{G}$, this demand is equivalent to $\delta(k) \in {\rm Im}(E^\top \otimes {\rm I}_D)$, were ${\rm I}_D \in \R^{D\times D}$ is the identity matrix and $\otimes$ is the Kronecker product. By using the SVD decomposition\footnote{More precisely, if $E^\top = U\Sigma V^\top$ is the SVD decomposition, we define $P = \tilde{\Sigma}U^\top$, where $\tilde{\Sigma} \in \R^{n_E \times n_E}$ is a diagonal matrix satisfying $\tilde{\Sigma}_{ii} = 1$ if and only if $\Sigma_{ii} = 0$, and $\tilde{\Sigma}_{ii} = 0$ otherwise.} of $E^\top$, we build a matrix $P \in \R^{n_E\times n_E}$ such that $\ker(P) = {\rm Im}(E^\top)$, and we can restate the consistency assumption as $\left[\begin{smallmatrix} P \otimes {\rm I}_D \\ -P\otimes{\rm I}_D \end{smallmatrix}\right]^\top \delta(k) \le 0$ for any time $k\in \N$. In particular, the contract defined by the assumptions and guarantees is LTI of depth $1$.

As for the system, we assume that the agents are double integrators, where all non-leader agents follow a linear control law. Namely, we assume that the position and velocity of the $i$-th agent evolve according to the following equations
\begin{align*}
    p_i(k+1) &= p_i(k) + \Delta t v_i(k),~\\v_i(k+1) &= v_i(k) + \Delta t a_i(k) + \omega_i(k),
\end{align*}
where the noise $\omega_i(k) \in \R^D$ corresponds to unmodeled forces on the agent, and we assume that the Euclidean norm of $\omega_i(k)$ is bounded by a tunable parameter $\omega_{\rm max}$.
Moreover, the control input $a_i(k)$ for $i\neq 1$ is given by the following linear law
\begin{align*}
    a_i(k) = \frac{1}{d_{i}^{\rm out}} \sum_{j: (i,j)\in \mathcal{E}}\left(-\frac{p_i-p_j-\delta_{ji}}{\Delta t^2} - 2\frac{v_i-v_j}{\Delta t}\right),
\end{align*}
where $d_i^{\rm out}$ is the out-degree of the node $i$, i.e., the number of agents $j$ such that $(i,j)\in \mathcal{E}$. Unfortunately, the equations above define an LTI system which is not strictly stable, as the system matrix $A$ has $2D$ eigenvectors with eigenvalue $\lambda = 1$, namely $\left[\begin{smallmatrix}{\rm e}_i \otimes {\rm I}_D \\ 0\end{smallmatrix}\right]$ and $\left[\begin{smallmatrix} 0\\ {\rm e}_i \otimes {\rm I}_D \end{smallmatrix}\right]$. These correspond to moving all agents in the same direction and by the same amount, and to adding the same vector to all of the agents' velocities, correspondingly. To overcome this problem and make Algorithm \ref{alg.VerifyUncertain} applicable for this problem, we define $2(n_V-1)$ new coordinates as $q_i = p_i - p_1$ and $u_i = v_i - v_1$ for $1\neq i\in \mathcal{V}$. A simple calculation shows that $q,u$ evolve according to the following equations:
\begin{align*}
    q_i(k+1) &= q_i(k) + \Delta t u_i(k),\\
    u_i(k+1) &= u_i(k) + \Delta t a_i(k) + \omega_i(k) - \Delta t a_1(k),
\end{align*}
where the control input is given by
\begin{align*}
    a_i(k) &= \frac{1}{d_{i}^{\rm out}} \sum_{j: (i,j)\in \mathcal{E}}\left(-\frac{q_i-q_j-\delta_{ji}}{\Delta t^2} - 2\frac{u_i-u_j}{\Delta t}\right),
\end{align*}
where we define $q_1 = u_1 = 0 \in \R^D$, and the output of the system is, as before, given by $y = q$. Thus, this is a perturbed LTI system with observability index equal to $\nu = 2$.

\begin{table*}[!ht]
\begin{center}
\begin{tabular}{ |c|c|c|c|c|c|>{\centering\arraybackslash}m{1.5cm}|>{\centering\arraybackslash}m{1.5cm}|c|c|c| } 
\hline
$n_V$ & $n_E$ & Graph Type & System dim. & Input dim. & Output dim. & Number of Assumptions & Number of Guarantees & Alg. \ref{alg.ComputeNepsi} Time & LP Time & Total Time\\ \hline
5 & 10 & Complete & 16 & 22 & 8 & 84 & 40 & 0.03 & 0.48 & 0.51\\ \hline
10 & 45 & Complete & 36 & 92 & 18 & 364 & 180 & 0.41 & 2.88 & 3.29 \\ \hline
15 & 105 & Complete & 56 & 212 & 28 & 844 & 420 & 1.24 & 13.24 & 14.49 \\ \hline
20 & 190 & Complete & 76 & 382 & 38 & 1524 & 760 & 7.73 & 60.22 & 67.96 \\ \hline
30 & 435 & Complete & 116 & 872 & 58 & 3484 & 1740 & 76.48 & 527.61 & 604.09 \\ \hline
50 & 1225 & Complete & 196 & 2452 & 98 & 9804 & 2900 & 1532.81 & 9740.66 & 11273.47 \\\hline
30 & 30 & Cycle & 116 & 62 & 58 & 244 & 120 & 2.46 & 2.93 & 5.39 \\ \hline
50 & 50 & Cycle & 196 & 102 & 98 & 404 & 200 & 30.29 & 8.48 & 38.78 \\ \hline 
\end{tabular}
\end{center}
\caption{An analysis of the runtime (in seconds) of Algorithm \ref{alg.VerifyUncertain} with Algorithm \ref{alg.ComputeNepsi} for the formation control problem, for $D=2$. Here, LP Time refers to the time (in seconds) it took to compute all parameters $\theta_{n,\ell}$ needed by the Algorithm \ref{alg.VerifyUncertain}.}
\label{table.Runtime}
\end{table*}

In order to verify whether the system satisfies the given contract, we choose certain values for the tunable parameters $\Delta t, a_{\rm max}, a_{\rm min}, \mu_{\rm diff},\mu_{\rm err}$ and $\omega_{\rm max}$, and run Algorithm \ref{alg.VerifyUncertain} with Algorithm \ref{alg.ComputeNepsi} and $\epsilon = 10^{-12}$. The algorithms were executed on a Dell Latitude 7400 computer with an Intel Core i5-8365U processor for multiple values of $n_V$ and different graphs $\mathcal{G}$. The runtimes are reported in Table \ref{table.Runtime}. The table concerns two distinct cases. In the first, the graph $\mathcal{G}$ is chosen as a complete graph on $n_V$ nodes. In this case, the runtime of the algorithm is about 10 minutes even for systems of order exceeding to 100, with thousands of assumptions and guarantees. Moreover, we can check whether the system satisfies the contract in about three hours even for systems of order roughly equal to $200$, with almost $10000$ assumptions and a few thousand guarantees.

In the second case, we choose the graph $\mathcal{G}$ by taking agents $\mathcal V = \{1,2,\ldots,n_V\}$, and taking a total of $n_V$ edges defined as follows - we take $(i+1,i)$ for all $i=1,2,\ldots,\lfloor n_V/2 \rfloor$, we also take $(i,i+1)$ for $i=\lfloor n_V/2 \rfloor+1,\ldots,n_V-1$, and lastly, we also take $(n_V,1)$. One can see the graph $\mathcal{G}$ as a union of two paths, of lengths $\lfloor n_V/2 \rfloor$ and $\lceil n_V/2 \rceil$, which coincide only at the first and the last node. The graph $\mathcal{G}$ can also be seen as a cycle, where we change the orientation of some of the edges. In this case, the matrices defining the system are sparse. As expected, the algorithm runs significantly faster in this case, terminating in under a minute even for a system of order roughly equal to $200$.

\subsection{Discussion}
We considered two numerical examples. The first numerical example considered a low-dimensional LTI system with interval uncertainty, whereas the second considered a very high-dimensional system with non-polyhedral constraints on the perturbation. The runtimes reported in Table \ref{table.Runtime} demonstrate the applicability of our approach even for extremely large systems and for specifications with many assumptions and guarantees. 


We also compare our approach with other formal verification techniques.
Trying to apply classical model-checking tools would first require us to build an abstraction of the system, which is a finite transition system \cite{Belta2017}. This abstraction is almost always achieved either by discretizing the state space, by defining an equivalence relation using the signs of the values of the functions defining the guarantees, or by further refining either of the two. For the numerical example in Section \ref{sec.DoubleInteg} with $n=50$ vertices and a cycle graph, both approaches result in finite transition systems with roughly $10^{60}$ discrete states, rendering this approach as highly inapplicable.

Other approaches for verification rely on approximate simulation and bi-simulation, see \cite{Girard2007}. These methods first quantify the distance between the system-under-test and a lower-dimensional system, and then solve the verification problem for the latter using other methods, e.g., discretization-based model checking or reachability analysis. However, the standard definition of bi-simulation cannot incorporate assumptions on the input other than $u(k)\in \mathcal{U}, \forall k\in \N$, and thus cannot be used for verifying specifications defined by LTI contracts of depth $m\ge 1$. Once bi-simulation will be properly extended to incorporate non-static assumptions on the input, it could be coupled with the theory presented in this work.

\section{Conclusions}
In this paper, we presented a framework for verifying assume/guarantee contracts defined by time-invariant linear inequalities for perturbed LTI systems. First, we defined the notion of LTI contracts of an arbitrary depth $m$. Second, we generalized the results of \cite{SharfADHS2020} and provided an LP-based mechanism for verifying that a given unperturbed LTI system satisfies a general LTI contract of arbitrary depth $m$, namely Algorithm \ref{alg.VerifyCertainIota}. Third, we presented a comparison-based mechanism for verifying that a perturbed LTI system $\Sigma$ satisfies an LTI contract of arbitrary depth. Namely, we showed that a perturbed system satisfies a contract with linear-time invariant guarantees if and only if the nominal version of the system (with no perturbations) satisfies a robustified version of the contract. Unfortunately, this robustified contract is time-varying, so we refined it by a tractable LTI contract, and then applied the LP-based tools for unperturbed systems to check whether the nominal LTI system satisfies it. This discussion resulted in Algorithm \ref{alg.VerifyUncertain}, and the correctness, the assumptions, the computational complexity and the approximation properties of the algorithm were studied. We exhibited the tools developed in two case studies, one considering autonomous driving, and one considering multi-agent systems.
Future research can try and derive LP-based verification methods for a wider class of systems, including LTI hybrid systems, perturbed hybrid systems, and uncertain systems.
Another possible avenue for future research is building semi-definite programming-based tools for contracts defined using quadratic or LMI-based inequalities. Lastly, one could try to construct LP-based tools supporting the modular framework of contract theory, namely refinement and composition, extending the tools presented in \cite{SharfADHS2020}.

\bibliographystyle{ieeetr}
\bibliography{main}

\appendix
This appendix is dedicated to the proof of the second part of Theorem \ref{thm.AssumptionTau}. We start by stating and proving a few lemmas:
\begin{lem} \label{lem.2}
If $w\in \R^n$ is not contained in the stable subspace of $A$, then there exists a constant $c>0$ such that for any $\varsigma\in \N$, we have
$\max \left\{w A^\varsigma \kappa: ~\|\kappa\|\le 1\right\} \ge c.$
\end{lem}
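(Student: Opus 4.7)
The maximum in the claim equals the operator norm of the linear functional $\kappa \mapsto w^\top A^\varsigma \kappa$, which is $\|(A^\top)^\varsigma w\|$. Thus the lemma reduces to showing that $\|(A^\top)^\varsigma w\| \ge c > 0$ uniformly in $\varsigma \in \N$, provided $w$ lies outside the stable subspace (interpreted, in dual form, as the left-stable modes of $A$).

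My plan is to split $\R^n = U_s \oplus U_c$ along the $A^\top$-invariant real generalized eigenspaces, with $U_s$ collecting the modes $|\lambda|<1$ and $U_c$ collecting the modes $|\lambda|\ge 1$, and to let $P_c$ denote the bounded (possibly non-orthogonal) projection onto $U_c$ along $U_s$. The hypothesis then gives $P_c w \ne 0$. Since $P_c$ commutes with $A^\top$ by invariance, $\|(A^\top)^\varsigma w\| \ge \|(A^\top)^\varsigma P_c w\|/\|P_c\|$, so it suffices to produce a positive constant lower bound on $\|(A^\top)^\varsigma v\|$, uniform in $\varsigma$, for any fixed non-zero $v \in U_c$.

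I would then work block-by-block in a real Jordan basis of $A^\top|_{U_c}$. For a block with $|\lambda|>1$, its inverse has spectral radius $|\lambda|^{-1}<1$, and by Gelfand's formula the operator norms $\|(A^\top|_{\text{block}})^{-\varsigma}\|$ are uniformly bounded in $\varsigma$, whence $\|(A^\top)^\varsigma v\|\ge \|v\|/C$. For a block of the form $\lambda I + N$ with $|\lambda|=1$ and $N$ the nilpotent shift, I expand $(\lambda I + N)^\varsigma = \sum_{k\ge 0}\binom{\varsigma}{k}\lambda^{\varsigma-k}N^k$, let $r$ be the largest Jordan-basis index at which the coefficient $c_r$ of $v$ is non-zero, and observe that the $e_r$-component of $(A^\top)^\varsigma v$ equals exactly $c_r\lambda^\varsigma$, since basis vectors $e_j$ with $j<r$ contribute only to coordinates with strictly smaller index, and those with $j>r$ do not appear in $v$ at all. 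This component has constant magnitude $|c_r|>0$ independent of $\varsigma$. Aggregating the per-block bounds over the finitely many blocks (which occupy disjoint Jordan-basis coordinates) yields the required uniform lower bound.

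The principal obstacle I foresee is the unit-modulus case: the naive estimate $\|(A^\top)^\varsigma v\|\ge \|v\|/\|(A^\top)^{-\varsigma}\|$ fails there, because $\|(A^\top)^{-\varsigma}\|$ can grow polynomially and so gives no uniform bound. One genuinely needs the Jordan-basis coordinate tracking to single out an entry whose magnitude is preserved under the flow. Everything else follows from standard invariant-subspace decomposition.
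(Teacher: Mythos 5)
Your proposal is correct, but it takes a genuinely different route from the paper. The paper's proof is a two-line direct construction: it picks a unit eigenvector $v\in\CC^n$ of $A$ with eigenvalue $|\lambda|\ge 1$ and $w^\top v\neq 0$, tests $\kappa=\pm\mathrm{Re}\,v,\pm\mathrm{Im}\,v$, and observes that $w^\top A^\varsigma v=\lambda^\varsigma w^\top v$ has modulus at least $|w^\top v|$, yielding the explicit constant $c=|w^\top v|/\sqrt 2$. You instead identify the maximum with $\|(A^\top)^\varsigma w\|$, split $\R^n$ into the stable and anti-stable invariant subspaces of $A^\top$, and run a Jordan-block analysis, treating $|\lambda|>1$ via invertibility and Gelfand's formula and $|\lambda|=1$ by tracking the leading Jordan coordinate (for complex unit-modulus eigenvalues you should phrase this with the real Jordan block, whose diagonal $2\times 2$ rotation is norm-preserving, but the argument is the same). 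What each approach buys: the paper's argument is far shorter and produces explicit test directions $\kappa$ and an explicit constant, but it silently assumes that some \emph{genuine} eigenvector of $A$ with $|\lambda|\ge 1$ pairs nontrivially with $w$; this is automatic when the anti-stable part of $A$ is diagonalizable, yet it can fail for defective blocks (e.g.\ a single Jordan block at $\lambda=1$ with $w$ the last coordinate covector, where the only eigenvector is annihilated by $w$ even though the conclusion still holds). Your heavier machinery covers exactly that case, so it is the more robust proof. You are also right to flag that the hypothesis must be read in dual form --- what is actually needed is that $w$ lie outside the stable subspace of $A^\top$ (equivalently, that $w$, viewed as a row vector, has a nonzero anti-stable left-modal component), since the quantity being bounded below is $\|(A^\top)^\varsigma w\|$; the paper leaves this implicit. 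The only detail worth making explicit in your write-up is the fixed change-of-basis constant relating Jordan-basis coordinates to the Euclidean norm when you aggregate the per-block bounds.
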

\begin{proof}
By assumption, there exists an eigenvector $v \in \CC^n$ of $A$ with eigenvalue $\lambda \in \CC$ such that $v^\top w \neq 0$, $\|v\| = 1$ and $|\lambda|\ge 1$. As $A$ is a matrix with real entries, we have that $A\bar{v} = \bar{\lambda}\bar{v}$, where $\bar{\cdot}$ denotes the complex conjugate.
We let $v_R$ be the real part of $v$, and $v_I$ be the imaginary part of $v$. As $A^{\varsigma} v = \lambda^\varsigma v$ and $A^\varsigma\bar{v} = \bar{\lambda}^\varsigma\bar{v}$, we conclude that:
\begin{align*}
    w^\top A^\varsigma v_R = {\rm Re}(\lambda^\varsigma w^\top v),~~
    w^\top A^\varsigma v_I = {\rm Im}(\lambda^\varsigma w^\top v) 
\end{align*}
Moreover, it is clear that $\|v_R\|^2 + \|v_I\|^2 = \|v\|^2 = 1$, and in particular that $\|v_R\|,\|v_I\| \le 1$. Thus, by choosing $\kappa = \pm v_R, \pm v_I$ we conclude that:
\begin{align*}
    \max \left\{w A^\varsigma \kappa:\|\kappa\|\le 1\right\} \ge \max\{|{\rm Re}(\lambda^\varsigma w^\top v)|,|{\rm Im}(\lambda^\varsigma w^\top v)|\}
\end{align*}
Now, as $|{\rm Re}(\lambda^\varsigma w^\top v)|^2 + |{\rm Im}(\lambda^\varsigma w^\top v)|^2 = |\lambda^\varsigma w^\top v|^2$, we conclude that the right-hand side is at least as big as $\frac{|\lambda^\varsigma w^\top v|}{\sqrt{2}}$. We choose $c = \frac{|w^\top v|}{\sqrt{2}}$ and conclude the proof as $|\lambda| \ge 1$.
\end{proof}

\begin{lem} \label{lem.3}
Let $v_1,\ldots,v_N \in \R^n$ be vectors and $b_1,\ldots,b_N\in \R$ be scalars. Define the set $Q = \{x\in \R^n : v_i^\top x \le b_i, \forall i\}$, which is assumed to be non-empty. The set $Q$ is compact if and only if for any unit vector $\xi$ there exists some $i$ such that $v_i^\top \xi > 0$.
\end{lem}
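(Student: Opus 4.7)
The plan is to prove both directions separately, exploiting the fact that $Q$, being the intersection of finitely many closed half-spaces, is already closed and convex, so compactness reduces to boundedness. The statement is essentially about the recession cone of the polyhedron $Q$: unboundedness of $Q$ is equivalent to the recession cone containing a non-zero vector, and the condition ``for every unit vector $\xi$ there exists $i$ with $v_i^\top \xi > 0$'' precisely says that the recession cone contains no non-zero direction.

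For the easier implication (contrapositive of ``only if''), I would assume there exists a unit vector $\xi$ such that $v_i^\top \xi \le 0$ for all $i$. Picking any $x_0 \in Q$ (which exists by the non-emptiness assumption), the ray $\{x_0 + t\xi : t \ge 0\}$ satisfies $v_i^\top(x_0 + t\xi) = v_i^\top x_0 + t\, v_i^\top \xi \le b_i$ for all $t \ge 0$ and all $i$, so the entire ray lies in $Q$. Hence $Q$ is unbounded and cannot be compact.

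For the harder implication (contrapositive of ``if''), I would assume $Q$ is unbounded and construct a direction violating the condition. Pick a sequence $x_k \in Q$ with $\|x_k\| \to \infty$, and set $\xi_k = x_k / \|x_k\|$. By compactness of the unit sphere in $\R^n$, some subsequence $\xi_{k_j}$ converges to a unit vector $\xi$. For each $i$, the inequality $v_i^\top x_{k_j} \le b_i$ gives $v_i^\top \xi_{k_j} \le b_i / \|x_{k_j}\|$; passing to the limit yields $v_i^\top \xi \le 0$ for every $i$, so $\xi$ is a unit vector violating the hypothesis.

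I do not anticipate any real obstacle here: the only subtlety is remembering that $Q$ is non-empty (needed to pick $x_0$ in the first direction) and that $\N$ is finite so that the $\limsup$ argument in the second direction indeed produces $v_i^\top \xi \le 0$ simultaneously for all $i$ from the same subsequence. Both are given or immediate. The proof is essentially a standard recession-cone argument for polyhedra, packaged into a few lines.
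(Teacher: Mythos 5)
Your proposal is correct and follows essentially the same route as the paper: a ray argument for one direction and a normalized-sequence/compactness-of-the-unit-sphere argument for the other. The only (harmless, arguably cleaner) difference is in the unbounded case, where you pass to the limit directly in the inequalities $v_i^\top \xi_{k_j} \le b_i/\|x_{k_j}\|$ to get $v_i^\top \xi \le 0$, whereas the paper instead invokes closedness and convexity of $Q$ to conclude that $Q$ contains the limiting ray and then reuses the first direction.
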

\begin{proof}
Suppose first that $Q$ is compact, and fix some $x_0 \in Q$. Taking an arbitrary unit vector $\xi$, the set $Q$ cannot contain the ray $\{x_0 + t\xi\}_{t>0}$, as it is non-compact. Thus, for some $t>0$ and some $i$, we must have $v_i^\top(x_0 + t\xi) > b_i$. As $v_i^\top x_0 \le b_i$, we conclude that $v_i^\top \xi > 0$.

On the contrary, suppose now that for any unit vector $\xi$ there exists some $i$ such that $v_i^\top \xi > 0$. The set $Q$ is closed by definition, so it suffices to show that it is bounded. If this is not the case, then there exists a sequence $\{x_j\}_{j=1}^\infty \in Q$ with $\|x_j\| \to \infty$. Taking some $x_0\in Q$, we use the compactness of the unit ball in $\R^n$ find a subsequence $\{x_{n_k}\}_{k=1}^\infty$ such that the sequence of unit vectors $\{\frac{x_{n_k}-x_0}{\|x_{n_k}-x_0\|}\}_{k=1}^\infty$ converges to some unit vector $\xi$. It is easy to see that because $Q$ is convex and closed, it must contain the ray $\{x_0 + t\xi\}_{t>0}$, However, as we saw above, this is not possible as there exists some $i$ such that $\xi^\top v_i > 0$. We arrived at a contradiction, and therefore conclude that $Q$ must be compact.
\end{proof}

\begin{lem} \label{lem.4}
Let $v_1,\ldots,v_N \in \R^n$ be vectors and $b_1,\ldots,b_N\in \R$ be scalars. Define the set $Q = \{x\in \R^n : v_i^\top x \le b_i, \forall i\}$. If the set $Q$ is compact, then there exists some $M\ge 0$ such that for any $c_1,c_2,\ldots,c_N \ge 0$, if $\max_i c_i > M$ then
the set $Q^\prime = \{x\in \R^n : v_i^\top x \le b_i - c_i, \forall i\}$ is empty.
\end{lem}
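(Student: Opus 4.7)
The plan is to exploit the compactness of $Q$ to derive a uniform a priori bound on $\|x\|$ for any point in $Q$, and then observe that $Q^\prime \subseteq Q$ since each $c_i \ge 0$, so this same bound applies to any point in $Q^\prime$. Given such a bound, each individual constraint $v_i^\top x \le b_i - c_i$ forces $c_i$ to be controlled in terms of $\|v_i\|$, $b_i$, and the bound on $\|x\|$; by contraposition, if some $c_i$ is larger than this limit, $Q^\prime$ must be empty.

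More precisely, I would first dispense with the trivial case $Q = \emptyset$, in which $Q^\prime \subseteq Q$ is also empty for every choice of nonnegative $c_i$, so any $M \ge 0$ works. Otherwise, $Q$ is a nonempty compact set, and I set $R = \max_{x \in Q} \|x\| < \infty$. I then define
\begin{align*}
M = \max_{i=1,\ldots,N}\bigl(b_i + \|v_i\|R\bigr).
\end{align*}
Suppose for contradiction that $\max_i c_i > M$ and yet there exists $x \in Q^\prime$. Since every $c_i \ge 0$, the inequalities $v_i^\top x \le b_i - c_i \le b_i$ show $x \in Q$, so $\|x\| \le R$. Picking an index $i^\star$ with $c_{i^\star} = \max_i c_i > M \ge b_{i^\star} + \|v_{i^\star}\| R$, we get $v_{i^\star}^\top x \le b_{i^\star} - c_{i^\star} < -\|v_{i^\star}\|R$, contradicting the Cauchy--Schwarz bound $v_{i^\star}^\top x \ge -\|v_{i^\star}\|\|x\| \ge -\|v_{i^\star}\| R$. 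Hence $Q^\prime$ is empty, as required.

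No step looks genuinely hard here, since the heavy lifting (the characterization of compactness in terms of the $v_i$'s) is carried out in Lemma \ref{lem.3} and is not actually required for this lemma; the only thing I need from compactness is a uniform norm bound on $Q$, which is immediate. The one subtlety worth flagging is the reliance on $c_i \ge 0$ to ensure $Q^\prime \subseteq Q$: without this, a point in $Q^\prime$ need not lie in $Q$ and the norm bound would not transfer directly. Since the hypothesis explicitly provides $c_i \ge 0$, this is not a genuine obstacle.
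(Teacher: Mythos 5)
Your proof is correct and follows essentially the same route as the paper: both exploit compactness of $Q$ to bound each $v_i^\top x$ from below on $Q$, note that $c_i \ge 0$ gives $Q^\prime \subseteq Q$, and conclude that a sufficiently large $c_{i^\star}$ makes the $i^\star$-th constraint infeasible on $Q$. The only cosmetic difference is that the paper takes $M_i = b_i - \min_{x\in Q} v_i^\top x + 1$ directly, whereas you bound $\min_{x\in Q} v_i^\top x \ge -\|v_i\|R$ via Cauchy--Schwarz, which yields a slightly larger but equally valid threshold.
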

\begin{proof}
If $Q$ is empty, we take $M = 0$. Otherwise, for any $i$, we define $M_i = b_i - \min_{x\in Q}v_i^\top x + 1$. The minimum is finite as the set $Q$ is compact. Moreover, it is clear by definition that for any $x\in Q$ and for any $i$, we have $v_i^\top x > b_i - M_i$. 

Take $M = \max_i M_i$. If $c_1,c_2,\ldots,c_N \ge 0$ and $\max_i c_i > M$, then the set $\{x\in \R^n : v_i^\top x \le b_i - c_i, \forall i\}$ is a subset of $Q$. However, there exists some $i_0$ such that $c_{i_0} > M_{i_0}$, so for any $x\in Q$, we have $v_{i_0}^\top x > b_{i_0}- M_{i_0}$. Thus the set $\{x\in \R^n : v_i^\top x \le b_i - c_i, \forall i\}$ cannot contain any points from $Q$, hence it is empty.
\end{proof}

We now prove the second part of Theorem \ref{thm.AssumptionTau}
\begin{proof}
We assumed that $\PP$ contains a neighborhood of the origin. As the matrix $E$ has full row rank, we conclude that the image of $\PP$ under $E$ also contains a neighborhood of the origin, denoted as $\{x: \|x\| \le \delta\}$ for some $\delta > 0$. By assumption, the image of $T^\top$ is not contained within the stable subspace of $A$. Thus, there exists some $i$ such $T^\top {\rm e}_i$ is not inside the stable subspace of $A$. Thus, by Lemma \ref{lem.2}, we conclude that there exists some constant $c>0$ such that:
\begin{align*}
    \tau^{\mathcal P,{\rm m},\varsigma}_i &= \max \left\{{\rm e}_i^\top T A^\varsigma E \omega: ~\omega \in \mathcal{P}\right\} \\&\ge \max \left\{{\rm e}_i^\top T A^\varsigma \kappa: ~\|\kappa\|\le \delta\right\} \ge c\delta.
\end{align*}
In particular, the $i$-th entry of $\tau^k$ grows unbounded as $k\to \infty$. Now, consider the set $Q$ defined as:
\begin{align*}
    Q = \left\{(y_0,\ldots,y_m) : \sum_{r=0}^m \mathfrak G^r \begin{bmatrix} d_r \\ y_r \end{bmatrix} \le \mathfrak g^0\right\}.
\end{align*}
We assumed that the set is bounded and non-empty for some fixed $d_0 = \check d_0,\ldots,d_m = \check d_m$. The set $Q$ can be equivalently written as:
\begin{align*}
    Q = \left\{(y_0,\ldots,y_m) : \sum_{r=0}^m \mathfrak G^r_y y_r \le \mathfrak g^0 - \sum_{r=0}^m \mathfrak G^r_d d_r\right\}.
\end{align*}
where $\mathfrak G^r = [\mathfrak G^r_d,\mathfrak G^r_y]$. Thus, by Lemma \ref{lem.3}, this set is compact for \emph{any} choice of $d_0,\ldots,d_r$, as the condition for compactness depends only on the left-hand side of the linear inequality defining $Q$.

Now, fix some $d_0,d_1,\ldots,d_m$ which are compatible with the inequality defining the set of assumptions $\D$. By Lemma \ref{lem.4}, we conclude that there exists some $M>0$ such that if $c_i$ satisfy $\max_i{c_i} > M$, then the following set is empty, where $c = (c_i)$:
\begin{align*}
    Q^\prime = \left\{(y_0,\ldots,y_m) : \sum_{r=0}^m \mathfrak G^r_y y_r \le \mathfrak g^0 - \sum_{r=0}^m \mathfrak G^r_d d_r - c\right\},
\end{align*}
Taking $c = \tau^k$ for a large enough $k$, we know that $\max_i{c_i} > M$, meaning that the set $Q^{\prime}$ is empty. However, $Q^{\prime}$ has an equivalent formulation:
\begin{align*}
    Q^\prime = \left\{(y_0,\ldots,y_m) : \sum_{r=0}^m \mathfrak G^r \begin{bmatrix} d_r \\ y_r \end{bmatrix} \le \mathfrak g^0 - \tau^k\right\}.
\end{align*}
Therefore, we conclude that no choice of $y_0,\ldots,y_r$ can satisfy the guarantees of $\C^\prime$ for the input $d_0,\ldots,d_r$. In particular, the system $\hat{\Sigma}$ cannot satisfy $\C^\prime$, and thus $\Sigma \not\sat \C$.
\end{proof}

\end{document}